\newtheorem{theorem}{Theorem}[section]
\newtheorem{proposition}[theorem]{Proposition}
\newtheorem{definition}[theorem]{Definition}
\newtheorem{lemma}[theorem]{Lemma} 
\newtheorem{assumption}{Assumption}[section]
{\theorembodyfont{\rmfamily} 
\newtheorem{remark}[theorem]{Remark}

\newtheorem{example}[theorem]{Example}}
\newtheorem{problem}[theorem]{Problem}
\newcommand{\real}{\mathbb{R}}
\renewcommand{\natural}{\mathbb{N}}
\newcommand{\until}[1]{\{1,\dots, #1\}}
\newcommand{\subscr}[2]{#1_{\textup{#2}}}
\newcommand{\setdef}[2]{\{#1 \; | \; #2\}}
\newcommand{\map}[3]{#1: #2 \rightarrow #3}
\newcommand\oprocendsymbol{\hbox{$\square$}}
\newcommand\oprocend{\relax\ifmmode\else\unskip\hfill\fi\oprocendsymbol}
\newcommand{\realpositive}{\ensuremath{\mathbb{R}}_{>0}}
\def\beq{\begin{equation}}
\def\eeq{\end{equation}}
\def\Z{\mathbb{Z}}
\def\N{\mathbb{N}}
\def\E{\mathbb{E}}
\def\R{\mathbb{R}}
\def\P{\mathbb{P}}
\def\0{\mathbf{0}}
\def\1{\mathbf{1}}
\newcommand{\pij}{\subscr{p}{$i|j$}}
\newcommand{\pjj}{\subscr{p}{$j|j$}}
\newcommand{\poz}{\subscr{p}{$1|0$}}
\newcommand{\poo}{\subscr{p}{$1|1$}}
\newcommand{\pzz}{\subscr{p}{$0|0$}}
\newcommand{\pzo}{\subscr{p}{$0|1$}}
\newcommand{\poj}{\subscr{p}{$1|j$}}
\newcommand{\pzj}{\subscr{p}{$0|j$}}
\newcommand{\pio}{\subscr{p}{$i|1$}}
\newcommand{\poi}{\subscr{p}{$i|1$}}
\newcommand{\piz}{\subscr{p}{$i|0$}}
\newcommand{\pco}{\subscr{p}{c$|1$}}
\newcommand{\pwo}{\subscr{p}{w$|1$}}
\newcommand{\pndj}{\subscr{p}{nd$|j$}}
\newcommand{\pcj}{\subscr{p}{c$|j$}}
\newcommand{\pwj}{\subscr{p}{w$|j$}}
\newcommand{\pwz}{\subscr{p}{w$|0$}}
\newcommand{\pwjf}{\subscr{p}{w$|j$}^{(\textup{f})}}
\newcommand{\pwof}{\subscr{p}{w$|1$}^{(\textup{f})}}
\newcommand{\pcjf}{\subscr{p}{c$|j$}^{(\textup{f})}}
\newcommand{\pwjm}{\subscr{p}{w$|j$}^{(\textup{m})}}
\newcommand{\pwom}{\subscr{p}{w$|1$}^{(\textup{m})}}
\newcommand{\pcjm}{\subscr{p}{c$|$j}^{(\textup{m})}}
\def\pzo{\subscr{p}{$0|1$}}
\def\pndo{\subscr{p}{nd$|1$}}
\def\pndz{\subscr{p}{nd$|0$}}
\newcommand{\ceil}[1]{\left\lceil #1\right\rceil}
\newcommand{\floor}[1]{\left\lfloor #1\right\rfloor}
\newcommand{\Bigceil}[1]{\Big\lceil#1\Big\rceil}
\title{Accuracy and Decision Time \\ for Sequential Decision
  Aggregation\thanks{This work has been supported in part by AFOSR MURI
    FA9550-07-1-0528.}}
\author{Sandra H. Dandach \qquad\and\quad Ruggero Carli \qquad\and\quad Francesco Bullo
  \thanks{S. H. Dandach and R. Carli and F. Bullo are with the Center for Control, Dynamical
    Systems and Computation, University of California at Santa Barbara,
    Santa Barbara, CA 93106, USA, {\tt\small
      \{sandra|carlirug|bullo\}@engr.ucsb.edu}.  }}
\begin{document}
\maketitle  

\begin{abstract}
  This paper studies prototypical strategies to sequentially aggregate
  independent decisions.  We consider a collection of agents, each
  performing binary hypothesis testing and each obtaining a decision over
  time.  We assume the agents are identical and receive independent
  information.  Individual decisions are sequentially aggregated via a
  threshold-based rule. In other words, a collective decision is taken as
  soon as a specified number of agents report a concordant decision
  (simultaneous discordant decisions and no-decision outcomes are also
  handled).

  We obtain the following results.  First, we characterize the
  probabilities of correct and wrong decisions as a function of time, group
  size and decision threshold. The computational requirements of our
  approach are linear in the group size.  Second, we consider the so-called
  fastest and majority rules, corresponding to specific decision
  thresholds. For these rules, we provide a comprehensive scalability
  analysis of both accuracy and decision time.  In the limit of large group
  sizes, we show that the decision time for the fastest rule converges to
  the earliest possible individual time, and that the decision accuracy for
  the majority rule shows an exponential improvement over the individual
  accuracy.  Additionally, via a theoretical and numerical analysis, we
  characterize various speed/accuracy tradeoffs. Finally, we relate our
  results to some recent observations reported in the cognitive information
  processing literature.
\end{abstract}

\section{Introduction}

\subsection{Problem setup}
Interest in group decision making spans a wide variety of domains. Be it in
electoral votes in politics, detection in robotic and sensor networks, or
cognitive data processing in the human brain, establishing the best
strategy or understanding the motivation behind an observed strategy, has
been of interest for many researchers.  This work aims to understand how
grouping individual sequential decision makers affects the speed and
accuracy with which these individuals reach a collective decision.  This
class of problems has a rich history and some of its variations are studied
in the context of distributed detection in sensor networks and Bayesian
learning in social networks.

In our problem, a group of individuals independently decide between two
alternative hypothesis, and each individual sends its local decision to a
fusion center. The fusion center decides for the whole group as soon as one
hypothesis gets a number of votes that crosses a pre-determined
threshold. We are interested in relating the accuracy and decision time of
the whole population, to the accuracy and decision time of a single
individual. We assume that all individuals are independent and
identical. That is, we assume that they gather information corrupted by
i.i.d.\ noise and that the same statistical test is used by each individual
in the population.  The setup of similar problems studied in the literature
usually assumes that all individual decisions need to be available to the
fusion center, before the latter can reach a final decision. The work
presented here relaxes this assumption and the fusion center might provide
the global decision much earlier than the all individuals in the group.
Researchers in behavioral studies refer to decision making schemes where
everyone is given an equal amount of time to respond as the ``free response
paradigm.''  Since the speed of the group's decision is one of our main
concerns, we adjust the analysis in a way that makes it possible to compute
the joint probabilities of each decision at each time instant. Such a
paradigm is referred to as the ``interrogation paradigm.''

\subsection{Literature review}

The framework we analyze in this paper is related to the one considered in
many papers in the literature, see for instance \cite{JNT:93, PkV:96,
  WWI-JNT:94, JNT:84, LY:94, VVV-TB-HVP:94, VVV:01, DA-MAD-IL-AO:08-report,
  ATS-AS-AJ:09} and references therein. The focus of these works is mainly
two-fold. First, researchers in the fields aim to determine which type of
information the decision makers should send to the fusion center.  Second,
many of the studies concentrate on computing optimal decision rules both
for the individual decision makers and the fusion center where optimality
refers to maximizing accuracy. One key implicit assumption made in numerous
works, is that the aggregation rule is applied by the fusion center only
after all the decision makers have provided their local decisions.

Tsitsiklis in~\cite{JNT:93} studied the Bayesian decision problem with a
fusion center and showed that for large groups identical local decision
rules are asymptotically optimal. Varshney in~\cite{PkV:96} proved that
when the fusion rules at the individuals level are non-identical, threshold
rules are the optimal rules at the individual level.  Additionally,
Varshney proved that setting optimal thresholds for a class of fusion
rules, where a decision is made as soon as a certain number $q$ out of the
$N$ group members decide, requires solving a number of equations that grows
exponentially with the group size. The fusion rules that we study in this
work fall under the $q$ out of $N$ class of decision rules. Finally,
Varshney proved that this class of decision rules is optimal for identical
local decisions.

\subsection{Contributions}
The contributions of this paper are three-folds.  First, we introduce a
recursive approach to characterize the probabilities of correct and wrong
decisions for a group of sequential decision makers (SDMs). These
probabilities are computed as a function of time, group size and decision
threshold.  The key idea is to relate the decision probability for a group
of size $N$ at each time $t$, to the decision probability of an individual
SDM up to that time $t$, in a recursive manner. Our proposed method has
many advantages. First, our method has a numerical complexity that grows
only linearly with the number of decision makers. Second, our method is
independent of the specific decision making test adopted by the SDMs and
requires knowledge of only the decision probabilities of the SDMs as a
function of time. Third, our method allows for asynchronous decision times
among SDMs.  To the best of our knowledge, the performance of sequential
aggregation schemes for asynchronous decisions has not been previously
studied.

Second, we consider the so-called \emph{fastest} and \emph{majority} rules
corresponding, respectively, to the decision thresholds $q=1$ and
$q=\ceil{N/2}$. For these rules we provide a comprehensive scalability
analysis of both accuracy and decision time.  Specifically, in the limit of
large group sizes, we provide exact expressions for the expected decision
time and the probability of wrong decision for both rules, as a function of
the decision probabilities of each SDM.  For the \emph{fastest} rule we
show that the group decision time converges to the earliest possible
decision time of an individual SDM, i.e., the earliest time for which the
individual SDM has a non-zero decision probability.  Additionally, the
\emph{fastest} rule asymptotically obtains the correct answer almost
surely, provided the individual SDM is more likely to make the correct
decision, rather than the wrong decision, at the earliest possible decision
time.  For the \emph{majority} rule we show that the probability of wrong
decision converges exponentially to zero if the individual SDM has a
sufficiently small probability of wrong decision.  Additionally, the
decision time for the \emph{majority} rule is related to the earliest time
at which the individual SDM is more likely to give a decision than to not
give a decision.  This scalability analysis relies upon novel asymptotic
and monotonicity results of certain binomial expansions.

As third main contribution, using our recursive method, we present a
comprehensive numerical analysis of sequential decision aggregation based
on the \emph{$q$ out of $N$} rules. As model for the individual SDMs, we
adopt the sequential probability ratio test (SPRT), which we characterize
as an absorbing Markov chain.  First, for the \emph{fastest} and
\emph{majority} rules, we report how accuracy and decision time vary as a
function of the group size and of the SPRT decision probabilities.  Second,
in the most general setup, we report how accuracy and decision time vary
monotonically as a function of group size and decision threshold.
Additionally, we compare the performance of fastest versus majority rules,
at fixed group accuracy.  We show that the best choice between the fastest
rule and the majority rule is a function of group size and group accuracy.
Our numerical results illustrate why the design of optimal aggregation
rules is a complex task~\cite{JNT-MA:85}.  Finally, we discuss
relationships between our analysis of sequential decision aggregation and
mental behavior documented in the cognitive psychology and neuroscience
literature~\cite{FC-JC-RV-GD:09, SW-UN:09, SW-UN:10,PL-TP-TS-MW-BS:05}.

Finally, we draw some qualitative lessons about sequential decision
aggregation from our mathematical analysis.  Surprisingly, our results show
that the accuracy of a group is not necessarily improved over the accuracy
of an individual.  In aggregation based on the \emph{majority} rule, it is
true that group accuracy is (exponentially) better than individual
accuracy; decision time, however, converges to a constant value for large
group sizes.  Instead, if a quick decision time is desired, then the
\emph{fastest} rule leads, for large group sizes, to decisions being made
at the earliest possible time.  However, the accuracy of fastest
aggregation is not determined by the individual accuracy (i.e., the time
integral of the probability of correct decision over time), but is rather
determined by the individual accuracy at a specific time instant, i.e., the
probability of correct decision at the earliest decision time.  Accuracy at
this special time might be arbitrarily bad especially for "asymmetric"
decision makers (e.g., SPRT with asymmetric thresholds).  Arguably, these
detailed results for \emph{fastest} and \emph{majority} rules, $q=1$ and
$q=\floor{N/2}$ respectively, are indicative of the accuracy and decision
time performance of aggregation rules for small and large thresholds,
respectively.

\subsection{Decision making in cognitive psychology}\label{ssection-cog}

An additional motivation to study sequential decision aggregation is our
interest in sensory information processing systems in the brain.
There is a growing belief among
neuroscientists~\cite{SW-UN:09,SW-UN:10,PL-TP-TS-MW-BS:05} that the brain
normally engages in an ongoing synthesis of streams of information
(stimuli) from multiple sensory modalities. Example modalities include
vision, auditory, gustatory, olfactory and somatosensory.
While many areas of the brain (e.g., the primary projection pathways)
process information from a single sensory modality, many nuclei (e.g., in
the Superior Colliculus) are known to receive and integrate stimuli from
multiple sensory modalities.  Even in these multi-modal sites, a specific
stimulus might be dominant.
Multi-modal integration is indeed relevant when the response elicited by
stimuli from different sensory modalities is statistically different from
the response elicited by the most effective of those stimuli presented
individually.  (Here, the response is quantified in the number of impulses from
neurons.)
Moreover, regarding data processing in these multi-modal sites, the
procedure with which stimuli are processed changes depending upon the
intensity of each modality-specific stimulus.
 
In~\cite{SW-UN:09}, Werner et al.\ study a human decision making problem
with multiple sensory modalities. They present examples where accuracy and
decision time depend upon the strength of the audio and visual components
in audio-visual stimuli.
They find that, for intact stimuli (i.e., noiseless signals), the decision
time improves in multi-modal integration (that is, when both stimuli are
simultaneously presented) as compared with uni-sensory integration.
Instead, when both stimuli are degraded with noise, multi-modal
integration leads to an improvement in both accuracy and decision time.
Interestingly, they also identify circumstances for which multi-modal
integration leads to performance degradation: performance with an intact
stimulus together with a degraded stimulus is sometimes worse than
performance with only the intact stimulus.

Another point of debate among cognitive neuroscientists is how to
characterize uni-sensory versus multi-modal integration sites.
Neuro-physiological studies have traditionally classified as multi-modal
sites where stimuli are enhanced, that is, the response to combined stimuli
is larger than the sum of the responses to individual stimuli.  Recent
observations of suppressive responses in multi-modal sites has put this
theory to doubt; see~\cite{SW-UN:10,PL-TP-TS-MW-BS:05} and references
therein. More specifically, studies have shown that by manipulating the
presence and informativeness of stimuli, one can affect the performance
(accuracy and decision time) of the subjects in interesting, yet not well
understood ways.  We envision that a more thorough theoretical
understanding of sequential decision aggregation will help bridge the gap
between these seemingly contradicting characterization of multi-modal
integration sites.

As a final remark about uni-sensory integration sites, it is well
known~\cite{RB-EB-etal:06} that the cortex in the brain integrates
information in \emph{neural groups} by implementing a \emph{drift-diffusion
  model}. This model is the continuous-time version of the so-called
sequential probability ratio test (SPRT) for binary hypothesis testing.
We will adopt the SPRT model for our numerical results.

\subsection{Organization}
We start in Section~\ref{Sec-Prob-setup} by introducing the problem
setup. In Section~\ref{Sec-analysis-nocomm} we present the numerical method
that allows us to analyze the decentralized Sequential Decision Aggregation
(SDA) problem; We analyze the two proposed rules in
Section~\ref{sec-asymp-analysis}. We also present the numerical results in
Section~\ref{SSec-sim-nocomm}. Our conclusions are stated in
Section~\ref{Sec-conc}. The appendices contain some results on binomial
expansions and on the SPRT.

\section{ Models of sequential aggregation and problem statement}\label{Sec-Prob-setup}

In this section we introduce the model of sequential aggregation and the
analysis problem we want to address. Specifically in
Subsection~\ref{subsec:DM} we review the classical sequential binary
hypothesis testing problem and the notion of \emph{sequential decision
  maker}, in Subsection~\ref{subsec:DHT} we define the \emph{$q$ out of $N$
  sequential decisions aggregation} setting and, finally, in
Subsection~\ref{subsec:PF}, we state the problem we aim to solve.

\subsection{Sequential decision maker}\label{subsec:DM}

The classical binary sequential decision problem is posed as follows.

Let $H$ denote a hypothesis which takes on values $H_0$ and $H_1$. Assume
we are given an individual (called \emph{sequential decision maker (SDM)}
hereafter) who repeatedly observes at time $t=1,2,\ldots,$ a random
variable $X$ taking values in some set $\mathcal{X}$ with the purpose of
deciding between $H_0$ and $H_1$.  Specifically the SDM takes the
observations $x(1), x(2), x(3), \ldots$, until it provides its final
decision at time $\tau$, which is assumed to be a stopping time for the
sigma field sequence generated by the observations, and makes a final
decision $\delta$ based on the observations up to time $\tau$. The stopping
rule together with the final decision rule represent the decision policy of
the SDM.  The standing assumption is that the conditional joint
distributions of the individual observations under each hypothesis are
known to the SDM.

In our treatment, we do not specify the type of decision policy adopted by
the SDM. A natural way to keep our presentation as general as possible, is
to refer to a probabilistic framework that conveniently describes the
sequential decision process generated by any decision policy. Specifically,
given the decision policy $\gamma$, let $\chi_0^{(\gamma)}$ and
$\chi_1^{(\gamma)}$ be two random variables defined on the sample space $\N
\times \{0,1\} \cup \{ ? \}$ such that, for $i,j \in \{0,1\}$,
\begin{itemize}
\item $\{\chi_j^{(\gamma)}=(t, i)\}$ represents the event that the individual decides in favor of $H_i$ at time $t$ given that the true hypothesis is $H_j$; and
\item $\{\chi_j^{(\gamma)}=?\}$ represents the event that the individual never reaches a decision given that $H_j$ is the correct hypothesis.
\end{itemize}

Accordingly, define $\pij^{(\gamma)}(t)$ and $\pndj^{(\gamma)}$ to be the probabilities that, respectively, the events $\{\chi^{(\gamma)}_j=(t, i)\}$ and $\{\chi^{(\gamma)}_0=?\}$ occur, i.e,
$$\pij^{(\gamma)}(t) = \P[\chi_j^{(\gamma)} = (t,i)]
\qquad \text{and}  \qquad \pndj^{(\gamma)} = \P[\chi_j^{(\gamma)} = ?].$$ 

Then the sequential decision process induced by the decision policy $\gamma$ is completely characterized by the following two sets of probabilities
\begin{equation}\label{eq:SDMpro}
\left\{\pndz^{(\gamma)}\right\} \cup \left\{ \pzz^{(\gamma)} (t), \poz^{(\gamma)}(t)\right\}_{t\in \N} \qquad \text{and} \qquad  \left\{\pndo^{(\gamma)}\right\} \cup \left\{\pzo^{(\gamma)}(t), \poo^{(\gamma)}(t)\right\}_{t\in \N},
\end{equation}
where, clearly $\pndz^{(\gamma)} + \sum_{t=1}^{\infty} \left(\pzz^{(\gamma)}(t)+\poz^{(\gamma)}(t)\right)=1$ and $\pndo^{(\gamma)} + \sum_{t=1}^{\infty} \left(\pzo^{(\gamma)}(t)+\poo^{(\gamma)}(t)\right)=1$. In what follows, while referring to a SDM running a sequential distributed hypothesis test with a pre-assigned decision policy, we will assume that the above two probabilities sets are known. From now on, for simplicity, we will drop the superscript $(\gamma)$.

Together with the probability of no-decision, for $j\in \{0,1\}$ we
introduce also the probability of correct decision
$\pcj:=\P[\text{say }H_j\,|\, H_j]$ and the probability of
wrong decision $\pwj:=\P[\text{say }H_i, \,i\neq j\,|\,H_j]$,
that is,
$$
\pcj=\sum_{t=1}^{\infty} \pjj(t) \qquad \text{and} \qquad \pwj=\sum_{t=1}^{\infty} \pij(t), \,\, i\neq j.
$$ 
It is worth remarking that in most of the binary sequential decision
making literature, $\pwo$ and $\pwz$ are referred as, respectively, the \emph{mis-detection} and \emph{false-alarm} probabilities of error.

Below, we provide a formal definition of two properties that the SDM might or might not satisfy.
\begin{definition}
  For a SDM with decision probabilities as in~\eqref{eq:SDMpro}, the
  following properties may be defined:
  \begin{enumerate}
  \item the SDM has \emph{almost-sure decisions} if, for $j\in\{0,1\}$,
    $$
    \sum_{t=1}^\infty \left(\pzj(t)+ \poj(t)\right)=1, \quad
    \text{and}
    $$
  \item the SDM has \emph{finite expected decision time} if, for $j\in\{0,1\}$,
    \begin{equation*}
      \sum_{t=1}^{\infty} t \left(\pzj(t)+ \poj(t)\right) < \infty.
    \end{equation*}
    \end{enumerate}
\end{definition}
One can show that the finite expected decision time implies almost-sure
decisions.

We conclude this section by briefly discussing examples of sequential
decision makers. The classic model is the SPRT model, which we discuss in
some detail in the example below and in Section~\ref{SSec-sim-nocomm}.  Our
analysis, however, allows for arbitrary sequential binary hypothesis tests,
such as the SPRT with time-varying thresholds~\cite{YL-SB:92}, constant
false alarm rate tests~\cite{SK-LLS:99}, and generalized likelihood ratio
tests.  Response profiles arise also in neurophysiology, e.g.,
\cite{TO-YI-MT-TF:07} presents neuron models with a response that varies
from unimodal to bimodal depending on the strength of the received
stimulus.

\begin{example}[Sequential probability ratio test (SPRT)] 
  In the case the observations taken are independent, conditioned on each
  hypothesis, a well-known solution to the above binary decision problem is
  the so-called \emph{sequential probability ratio test (SPRT)} that we
  review in Section~\ref{SSec-sim-nocomm}. A SDM implementing the SPRT test
  has both the \emph{almost-sure decisions} and \emph{finite expected
    decision time} properties.  Moreover the SPRT test satisfies the
  following optimality property: among all the sequential tests having
  pre-assigned values of \emph{mis-detection} and \emph{false-alarm}
  probabilities of error, the SPRT is the test that requires the smallest
  expected number of iterations for providing a solution.

  In Appendices~\ref{subsec:AccTimeDiscrete} and~\ref{subsec:AccTimeCont}
  we review the methods proposed for computing the probabilities
  $\left\{\pij(t)\right\}_{t\in \N}$ when the SPRT test is applied, both in
  the case $X$ is a discrete random variable and in the case $X$ is a
  continuous random variable. For illustration purposes, we provide in
  Figure~\ref{fig-pij} the probabilities $\pij(t)$ when $j=1$ for the case
  when $X$ is a continuous random variable with a continuous
  distribution (Gaussian). We also note that $p_{i|j}(t)$ might have
  various interesting distributions.
\end{example}

\begin{figure}[h!]
  \begin{center}
    \includegraphics[width=.9\linewidth]{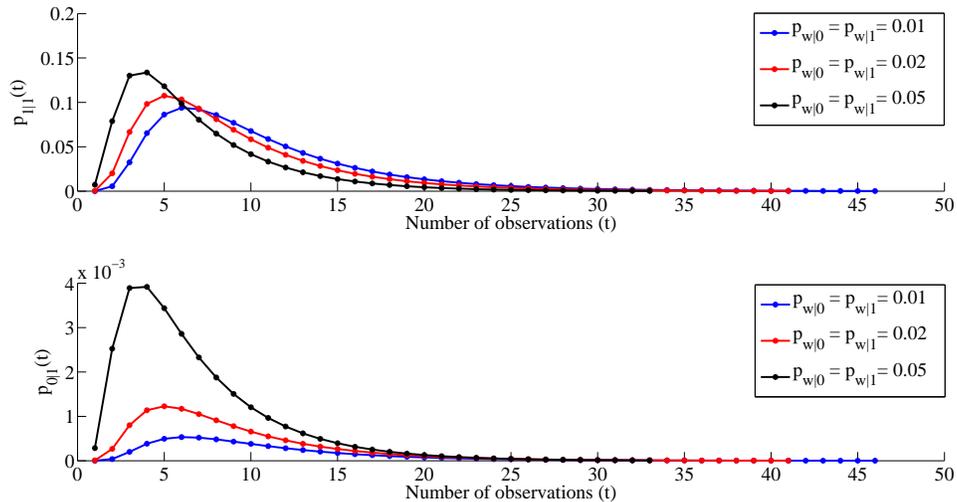}
    \caption{This figure illustrates a typical unimodal set of decision
      probabilities $\{\poo(t)\}_{t\in \N}$ and $\{\pzo(t)\}_{t\in
        \N}$. Here the SDM is implementing the sequential probability ratio
      test with three different accuracy levels (see
      Section~\ref{SSec-sim-nocomm} for more details).}
    \label{fig-pij}
  \end{center}
\end{figure}

\subsection{The $q$ out of $N$ decentralized hypothesis testing}\label{subsec:DHT}

The basic framework for the binary hypothesis testing problem we analyze in
this paper is the one in which there are $N$ SDMs and one fusion center.
The binary hypothesis is denoted by $H$ and it is assumed to take on values
$H_0$ and $H_1$.
Each SDM is assumed to perform individually a binary sequential test;
specifically, for $i\in \until{N}$, at time $t\in \N$, SDM $i$ takes the
observation $x_i(t)$ on a random variable $X_i$, defined on some set
$\mathcal{X}_i$, and it keeps observing $X_i$ until it provides its
decision according to some decision policy $\gamma_i$.  We assume that
\begin{enumerate}
\item the random variables $\{X_i\}_{i=1}^N$ are identical and independent;
\item the SDMs adopt the same decision policy $\gamma$, that is, $\gamma_i\cong\gamma$ for all $i\in \until{N}$;
\item the observations taken, conditioned on either hypothesis, are independent from one SDM to another; 
\item the conditional joint distributions of the individual observations under
each hypothesis are known to the SDMS.
\end{enumerate}
In particular assumptions (i) and (ii) imply that the $N$ decision
processes induced by the $N$ SDMs are all described by the same two sets of
probabilities
\begin{equation}\label{eq:2sets}
  \left\{\pndz\right\} \cup \left\{ \pzz (t), \poz(t)\right\}_{t\in \N}
  \qquad \text{and} \qquad \left\{\pndo\right\}\cup \left\{\pzo(t),
    \poo(t)\right\}_{t\in \N}. 
\end{equation}
We refer to the above property as \emph{homogeneity} among the SDMs. 

Once a SDM arrives to a final local decision, it communicates it to the fusion center. The fusion center collects the messages it receives keeping track of the number of decisions in favor of $H_0$ and in favor of $H_1$. 
A global decision is provided according to a \emph{$q$ out of $N$} counting rule: roughly speaking, as soon as the hypothesis $H_i$ receives $q$ local decisions in its favor, the fusion center globally decides in favor of $H_i$.
In what follows we refer to the above framework as \emph{$q$ out of $N$ sequential decision aggregation} with homogeneous SDMs (denoted as \emph{$q$ out of $N$ SDA}, for simplicity).  

We describe our setup in more formal terms. Let $N$ denote the size of the group of SDMs and let $q$ be a positive integer such that $1\leq q \leq N$, then the \emph{$q$ out of $N$ SDA} with homogeneous SDMs is defined as follows:  

\vspace{0.2cm}
\begin{LaTeXdescription}
\item[SDMs iteration]: 
For each $i \in \until{N}$, the $i$-th SDM keeps observing $X_i$, taking the observations $x_i(1), x_i(2),\dots,$  until time $\tau_i$ where it provides its local decision $d_i \in \{0,1\}$; specifically $d_i=0$ if it decides in favor of $H_0$ and $d_i=1$ if it decides in favor of $H_1$. The decision $d_i$ is instantaneously communicated (i.e., at time $\tau_i$) to the fusion center.
\item[Fusion center state]: The fusion center stores in memory the variables $Count_{0}$ and $Count_{1}$, which are initialized to $0$,
 i.e., $Count_{0}(0)=Count_{1}(0)=0$. If at time $t\in \N$ the fusion center has not yet provided a global decision, then it performs two actions in the following order:\\
 (1) it updates the variables $Count_{0}$ and $Count_{1}$, according to $Count_{0}(t)=Count_{0}(t-1)+n_{0}(t)$ and $Count_{1}(t)=Count_{1}(t-1)+n_{1}(t)$ where $n_{0}(t)$ and $n_{1}(t)$ denote, respectively, the number of decisions equal to $0$ and $1$ received by the fusion center at time $t$.\\
 (2) it checks if one of the following two situations is verified
 \begin{equation}\label{eq:FinDec}
 (i) \left\{
 \begin{array}{l}
 Count_{1}(t)>Count_{0}(t),\\
 Count_{1}(t)\geq q,
 \end{array}
 \right. \qquad
  (ii) \left\{
 \begin{array}{l}
 Count_{1}(t)<Count_{0}(t).\\
 Count_{0}(t)\geq q.
 \end{array}
 \right.
 \end{equation}
 If $(i)$ is verified the fusion center globally decides in favor $H_1$, while if  $(ii)$ is verified the fusion center globally decides in favor of $H_0$.
 Once the fusion center has provided a global decision the \emph{$q$ out of $N$ SDA} algorithm stops.
 \end{LaTeXdescription}   

\begin{remark}[Notes about SDA]
\begin{enumerate}
\item Each SDM has in general a non-zero probability of not giving a
  decision. In this case, the SDM might keep sampling infinitely without
  providing any decision to the fusion center.
\item The fusion center does not need to wait until all the SDM have
  provided a decision before a decision is reach on the group level, as one
  of the two conditions $(i)$ or $(ii)$ in equation~\ref{eq:FinDec} might
  be satisfied much before the $N$ SDM provide their decisions.
\item While we study in this manuscript the case when a fusion center
  receives the information from all SDM, we note that a distributed
  implementation of the SDA algorithm is possible. Analysis similar to the
  one presented here is possible in that case.\oprocend
\end{enumerate}
\end{remark}

\subsection{Problem formulation}\label{subsec:PF}

We introduce now some definitions that will be useful throughout this 
paper. Given a group of $N$ SDMs running the \emph{$q$ out of $N$ SDA} algorithm, $1\leq q \leq N$, we denote
\begin{enumerate}
\item by $T$ the random variable accounting for the number of iterations
  required to provide a decision
  \begin{align*}
    T=&\min \setdef{t}{\text{either \emph{case} (i) or \emph{case} (ii) in
        equation~\eqref{eq:FinDec} is satisfied}};
  \end{align*}
\item  by $\pij(t;N,q)$ the probability of deciding, at time $t$, in favor of $H_i$ given that $H_j$ is correct, i.e.,
\begin{equation}\label{eq:pij}
  \pij(t;N,q):=\P\left[\,\text{Group of $N$ SDMs says } H_i \, | \,H_j, q, T=t\right];
\end{equation}

\item by $\pcj(N,q)$ and $\pwj(N,q)$
the probability of
  correct decision and of wrong decision, respectively, given that $H_j$ is
  the correct hypothesis, i.e.,
\begin{equation}\label{eq:pc/pwN}
\pcj(N,q)=\sum_{t=1}^{\infty}\pjj(t;N,q) \qquad \text{and} 
\qquad \pwj(N,q)=\sum_{t=1}^{\infty}\pij(t;N,q), \,\,i\neq j;
\end{equation}
\item by $\pndj(N,q)$, $j\in \{0,1\}$, the probability of no-decision given that $H_j$ is the correct hypothesis, i.e.,
\begin{equation}\label{eq:noDecN}
  \pndj(N,q):=1-\sum_{t=1}^{\infty}\left(\pzj(t;N,q)+\poj(t;N,q)\right)=1-\pwj(N,q)-\pcj(N,q);
\end{equation}
\item by $\E\left[T|H_j,N,q\right]$ the average number of iterations required by the algorithm to provide a decision, given that $H_j$ is the correct hypothesis, i.e.,
\begin{equation}\label{eq:ETN}
\E\left[T|H_j,N,q\right]:= 
\begin{cases}
  \sum_{t=1}^{\infty}\, t(\pzj(t;N,q)+\poj(t;N,q)), & \text{if } \pndj(N,q)=0,\\
  +\infty, & \text{if }   \pndj(N,q)>0. 
\end{cases}
\end{equation}
\end{enumerate}
Observe that $\pij(t;1,1)$ coincides with the probability $\pij(t)$ introduced in~\eqref{eq:SDMpro}. For ease of notation we will continue using $\pij(t)$ instead of $\pij(t;1,1)$.  

We are now ready to formulate the problem we aim to solve in this paper. 

\begin{problem}[Sequential decision aggregation]
  Consider a group of $N$ homogeneous SDMs with decision probabilities
  $\left\{\pndz\right\} \cup \left\{ \pzz (t), \poz(t)\right\}_{t\in \N}$
  and $\left\{\pndo\right\}\cup \left\{\pzo(t), \poo(t)\right\}_{t\in
    \N}$. Assume the $N$ SDMs run the \emph{$q$ out of $N$ SDA} algorithm
  with the purpose of deciding between the hypothesis $H_0$ and $H_1$. For
  $j\in \{0,1\}$, compute the distributions
  $\left\{\pij(t;N,q)\right\}_{t\in \N}$ as well as the probabilities of
  correct and wrong decision, i.e., $\pcj(N,q)$ and $\pwj(N,q)$, the
  probability of no-decision $\pndj(N,q)$ and the average number of
  iterations required to provide a decision, i.e.,
  $\E\left[T|H_j,N,q\right]$.
\end{problem}

We will focus on the above problem in the next two Sections, both through
theoretical and numerical results. Moreover, in
Section~\ref{sec-asymp-analysis}, we will concentrate on two particular
values of $q$, specifically for $q=1$ and $q=\floor{N/2} +1$,
characterizing the tradeoff between the expected decision time, the
probabilities of correct and wrong decision and the size of the group of
SDMs. When $q=1$ and $q=\lceil N/2 \rceil$, we will refer to the \emph{$q$
  out of $N$ rule} as the \emph{fastest rule} and the \emph{majority rule},
respectively. In this case we will use the following notations
 $$
 \pcjf(N):=\pcj(N;q=1), \qquad  \pwjf(N):=\pwj(N;q=1)
 $$
 and
  $$
 \pcjm(N):=\pcj(N;q=\floor{N/2}+1), \qquad  \pwjm(N):=\pwj(N;q=\floor{N/2}+1).
 $$
 
We end this Section by stating two propositions characterizing the \emph{almost-surely decisions} and \emph{finite expected decision time} properties for the group of SDMs. 

\begin{proposition}\label{pro:Gasd}
  Consider a group of $N$ SDMs running the \emph{$q$ out of $N$ SDA}
  algorithm.  Let the decision-probabilities of each SDM be as
  in~\eqref{eq:2sets}. For $j\in \{0,1\}$, assume there exists at least one
  time instant $t_j\in \N$ such that both probabilities $\pzj(t_j)$ and
  $\poj(t_j)$ are different from zero.  Then the group of SDMs has the
  \emph{almost-sure decision} property if and only if
  \begin{enumerate}
  \item the single SDM has the  \emph{almost-sure decision} property;
  \item $N$ is odd; and
  \item $q$ is such that $1\leq q \leq \lceil N/2 \rceil$.
  \end{enumerate} 
\end{proposition}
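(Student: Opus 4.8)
The plan is to argue directly about the cumulative counts $Count_0(t)$ and $Count_1(t)$, which are non-decreasing in $t$, together with their eventual values. Write $C_0$ and $C_1$ for the total numbers of SDMs that ever report in favor of $H_0$ and $H_1$; since the counts never decrease, a global decision is reached precisely when condition (i) or (ii) in~\eqref{eq:FinDec} holds at some finite time, and no decision is ever reached exactly when neither holds at any time. I will establish the two implications separately, and the standing hypothesis that some instant $t_j$ carries positive probability for \emph{both} local outcomes (so that $\pzj(t_j)$ and $\poj(t_j)$ are positive) will be the tool that lets me build explicit no-decision events.

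For sufficiency I would reason as follows. If the single SDM has almost-sure decisions, then by independence all $N$ SDMs report in finite time almost surely, so almost surely $C_0+C_1=N$. When $N$ is odd the two totals cannot coincide, so the larger one, say $C_1$, satisfies $C_1\geq\ceil{N/2}\geq q$, where the last inequality uses condition~(iii). At the finite instant $T^*$ when the last SDM reports we then have $Count_1(T^*)=C_1>C_0=Count_0(T^*)$ with $C_1\geq q$, so condition~(i) holds at $T^*$ at the latest; hence the group decides at a finite time almost surely and $\pndj(N,q)=0$ for $j\in\{0,1\}$.

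For necessity I would prove the contrapositive, producing in each of the three failure modes an event of strictly positive probability on which no decision is ever reached. If the single SDM lacks almost-sure decisions, then $\pndj>0$ for some $j$, and by independence all $N$ SDMs fail to decide with probability $(\pndj)^N>0$, leaving the counts identically zero. The other two cases use $t_j$: since $\pzj(t_j)$ and $\poj(t_j)$ are both positive, I can force all $N$ SDMs to report simultaneously at $t_j$ with any prescribed split, an event of positive binomial probability; because every vote arrives at the same instant, nothing can be triggered earlier and the counts are frozen afterwards. If $N$ is even I take the balanced split $C_0=C_1=N/2$, a permanent tie at which neither strict inequality in~\eqref{eq:FinDec} ever holds; if $q>\ceil{N/2}$ I take the most balanced split, whose larger part equals $\ceil{N/2}<q$, so neither count ever reaches $q$. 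In every case $\pndj(N,q)>0$.

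The main obstacle, and the point requiring genuine care, is the interplay between the \emph{order} of the individual reporting times and the triggering of~\eqref{eq:FinDec}: a final tie or a sub-threshold final tally does not by itself rule out an earlier decision, since a transient lead might already have crossed $q$. The device that removes this difficulty is exactly the hypothesis on $t_j$, which lets me collapse all reports onto a single instant and thereby reduce the question to the final tally alone. Beyond this, I would only need to check the elementary counting facts used above, namely that $C_1>C_0$ with $C_0+C_1=N$ odd forces $C_1\geq\ceil{N/2}$, and that the most balanced split of $N$ votes has larger part exactly $\ceil{N/2}$.
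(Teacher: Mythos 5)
Your proof is correct and follows essentially the same route as the paper's: sufficiency via the last reporting time together with the parity argument forcing a strict majority of at least $\lceil N/2\rceil\geq q$, and necessity by exhibiting positive-probability no-decision events (all SDMs undecided, or all reporting simultaneously at $t_j$ with a balanced or sub-threshold split). Your explicit remark that collapsing all reports onto the single instant $t_j$ rules out any earlier transient crossing is a point the paper leaves implicit, but the argument is the same.
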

\begin{proof}
First we prove that if the group of SDMs has the \emph{almost-sure decision} property, then properties (i), (ii) and (iii) are satisfied. To do so, we show that if one between the properties (i), (ii) and (iii) fails then there exists an event of probability non-zero that leads the group to not provide a decision. First assume that the single SDM does not have the \emph{almost-sure decision} property, i.e., $\pndj>0$, $j\in \{0,1\}$. Clearly this implies that the event \emph{"all the SDMs of the group do not provide a decision"} has probability of occurring equal to $\pndj^N$ which is strictly greater than zero. 
Second assume that $N$ is even and consider the event \emph{"at time $t_j$, $N/2$ SDMs decide in favor of $H_0$ and $N/2$ SDMs decide in favor of $H_1$"}. Simple combinatoric and probabilistic arguments show that the probability of this event is ${N \choose N/2}\,\pzj^{N/2}\,\poj^{N/2}$, which is strictly greater than zero because of the assumption $\pzj(t_j)\neq 0$ and  $\poj(t_j)\neq 0$. Third assume that $q> \lfloor N/2 \rfloor +1$. In this case we consider the event \emph{"at time $t_j$, $\lceil N/2 \rceil$ SDMs decide in favor of $H_0$ and $\lfloor N/2 \rfloor$ SDMs decide in favor of $H_1$"} that, clearly, leads the group of SDMs to not provide a global decision for any $q >\lfloor N/2 \rfloor +1$. Similarly to the previous case, we have that the probability of this event is ${N \choose \lceil N/2 \rceil}\,\pzj^{\lceil N/2 \rceil}\,\poj^{\lfloor N/2 \rfloor}>0$.

We prove now that if properties (i), (ii) and (iii) are satisfied then the
group of SDMs has the \emph{almost-sure decision} property. Observe that,
since each SDM has the \emph{almost-sure decision} property, there exists
almost surely a $N$-tuple $\left(t_1, \ldots, t_N\right) \in \N^N$ such
that the $i$-th SDM provides its decision at time $t_i$. Let
$\bar{t}:=\max\setdef{t_i}{i\in \until{N}}$. Since $N$ is odd, then
$Count_{1}(\bar{t})\neq Count_{0}(\bar{t})$. Moreover since $q\leq
\floor{N/2}+1$ and $Count_{1}(\bar{t})+Count_{0}(\bar{t})=N$, either
$Count_{1}(\bar{t})\geq q$ or $Count_{0}(\bar{t})\geq q$ holds true. Hence
the fusion center will provide a global decision not later than time
$\bar{t}$.
\end{proof}

\begin{proposition}\label{pro:Gfedt}
  Consider a group of $N$ SDMs running the \emph{$q$ out of $N$ SDA}
  algorithm.  Let the decision-probabilities of each SDM be as
  in~\eqref{eq:2sets}. For $j\in \{0,1\}$, assume there exists at least one
  time instant $t_j\in \N$ such that both probabilities $\poj(t_j)$ and
  $\poj(t_j)$ are different from zero. Then the group of SDMs has the
  \emph{finite expected decision time} property if and only if
  \begin{enumerate}
  \item the single SDM has the  \emph{finite expected decision time} property;
  \item $N$ is odd; and
  \item $q$ is such that $1\leq{q}\leq\ceil{N/2}$.
  \end{enumerate} 
\end{proposition}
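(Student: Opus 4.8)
The plan is to mirror the proof of Proposition~\ref{pro:Gasd}, exploiting the stated implication that the \emph{finite expected decision time} property forces the \emph{almost-sure decision} property. This lets the group-level hypotheses (ii) and (iii), as well as the single-SDM almost-sure property, be imported directly from Proposition~\ref{pro:Gasd}, so the only genuinely new work lies in relating the two \emph{finite-time} properties. Throughout, let $\tau_1,\dots,\tau_N$ denote the individual stopping times and let $\E[\tau]$ denote the common individual expected decision time.

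For the sufficiency direction (conditions (i)--(iii) imply the group property), I would reuse the key observation from the proof of Proposition~\ref{pro:Gasd}: if $\bar t=\max_i\tau_i$, then $N$ odd together with $q\le\ceil{N/2}$ guarantees that the larger of the two counts at time $\bar t$ is at least $\ceil{N/2}\ge q$ and strictly exceeds the other, so a global decision is declared no later than $\bar t$. Hence $T\le\bar t=\max_i\tau_i\le\sum_{i=1}^N\tau_i$ pointwise. Since hypothesis (i) gives $\E[\tau]<\infty$ (and $\pndj(N,q)=0$ already follows from Proposition~\ref{pro:Gasd}, so the first branch of the definition in~\eqref{eq:ETN} applies), taking expectations and using homogeneity yields $\E[T]\le N\,\E[\tau]<\infty$.

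For the necessity direction, assume the group has finite expected decision time. This implies the group almost-sure property, so Proposition~\ref{pro:Gasd} immediately yields (ii), (iii), and the single-SDM almost-sure property $\pndj=0$. It remains only to upgrade the latter to single-SDM finite expected time. I would argue by contraposition: assuming $\E[\tau]=\infty$ with $\pndj=0$, the goal is to exhibit a decision configuration of strictly positive probability on which $T$ dominates the maximum of at least two of the independent individual times, say $T\ge\max(\tau_1,\tau_2)$ on an event $A$ chosen so that $\E[\max(\tau_1,\tau_2)\mathbf{1}_A]=\infty$; since the maximum of two independent copies of an infinite-mean nonnegative variable again has infinite mean, this forces $\E[T]=\infty$, the desired contradiction.

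I expect this last step to be the main obstacle. The naive lower bound $T\ge\tau_{(q)}$ (the $q$-th order statistic, since at least $q$ local decisions must arrive) is too weak: for small thresholds the counting rule can terminate as soon as a few fast SDMs concur, and a low-rank order statistic may have finite mean even when $\E[\tau]=\infty$. The crux is therefore to engineer a \emph{near-tie} configuration --- with $\ceil{N/2}$ SDMs eventually concurring on the winning hypothesis while the remaining ones split --- in which no global decision can be declared until the slowest of the $\ceil{N/2}$ concurring SDMs has reported, so that $T$ is pinned to a maximum of several heavy-tailed times. Verifying that such a configuration has positive probability and genuinely blocks early termination is the delicate combinatorial point around which the necessity argument turns; it is most transparent precisely at the majority boundary $q=\ceil{N/2}$ with $N$ odd, and one must check that the construction still goes through for the full range $1\le q\le\ceil{N/2}$.
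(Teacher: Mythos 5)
Your overall architecture is sound and matches what the paper intends (its own proof is the single line ``follows the lines of the previous proposition''): sufficiency via $T\le\max_i\tau_i\le\sum_i\tau_i$ combined with the group almost-sure property already delivered by Proposition~\ref{pro:Gasd}, and necessity by first importing (ii), (iii) and the single-SDM almost-sure property from Proposition~\ref{pro:Gasd} and then treating the infinite-expectation case separately. The sufficiency half is complete as written.

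The gap is exactly where you flag it: the blocking construction in the necessity direction is left unverified, and the configuration you sketch (``$\ceil{N/2}$ SDMs eventually concurring on the winning hypothesis while the remaining ones split'') does not do the job, since for small $q$ the fusion center can stop as soon as the first few of those concurring SDMs report, so $T$ is not pinned to any slow time. The construction that works, and that directly mirrors the tie events already used in the necessity half of Proposition~\ref{pro:Gasd}, is an even split of $N-1$ SDMs at the common time $t_j$. Let $A$ be the event that $(N-1)/2$ of SDMs $1,\dots,N-1$ decide for $H_0$ at time $t_j$ and the other $(N-1)/2$ decide for $H_1$ at time $t_j$; it has probability ${N-1 \choose (N-1)/2}\,\pzj^{(N-1)/2}(t_j)\,\poj^{(N-1)/2}(t_j)>0$ and is independent of $\tau_N$. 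On $A\cap\{\tau_N>t_j\}$ no SDM reports before $t_j$, the counters are tied at $(N-1)/2$ each from time $t_j$ onward, and a tie blocks both stopping conditions in~\eqref{eq:FinDec} for \emph{every} $q$ because each requires a strict inequality between the counters; the tie is broken only when SDM $N$ reports, at which point the winning count equals $(N-1)/2+1=\ceil{N/2}\ge q$, so $T=\tau_N$ there. Hence
\begin{equation*}
\E[T]\;\ge\;\P[A]\,\E\bigl[\tau_N\1_{\{\tau_N>t_j\}}\bigr]\;=\;\P[A]\Bigl(\E[\tau_N]-\E\bigl[\tau_N\1_{\{\tau_N\le t_j\}}\bigr]\Bigr)\;=\;+\infty,
\end{equation*}
since $\E[\tau_N\1_{\{\tau_N\le t_j\}}]\le t_j<\infty$ while $\E[\tau_N]=\infty$ by assumption. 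Note that a single heavy-tailed time suffices here; you do not need the maximum of two independent copies.
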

\begin{proof}
  The proof follows the lines of the proof of the previous proposition.
\end{proof}

\begin{remark}\label{rem:ASD}
  The existence, for $j\in \{0,1\}$, of a time $t_j$ such that
  $\pzj(t_j)\neq 0$ and $\poj(t_j)\neq 0$, is necessary only for proving
  the ''if'' side of the previous propositions. In other words the validity
  of properties (i), (ii) and (iii) in Proposition~\ref{pro:Gasd} (resp. in
  Prop.~\ref{pro:Gfedt}) guarantees that the group of SDMs possesses the
  \emph{almost-sure decision} property (resp. the \emph{finite expected
    decision time} property.)  \oprocend
\end{remark}

\section{Recursive analysis of the $q$-out-of-$N$ sequential aggregation rule} \label{Sec-analysis-nocomm}

The goal of this section is to provide an efficient method to compute the
probabilities $\pij(t;N,q)$, $i,j\in\{0,1\}$. These probabilities, using
equations~\eqref{eq:pc/pwN},~\eqref{eq:noDecN} and~\eqref{eq:ETN} will
allow us to estimate the probabilities of correct decision, wrong decision
and no-decision, as well as the expected number of iterations required to
provide the final decision.

We first consider in subsection~\ref{subsec:1-q-N/2} the case where $1\leq
q \leq \lfloor N/2 \rfloor$; in subsection~\ref{subsec:N/2-q-N} we consider
the case where $\lfloor N/2 \rfloor+1 \leq q \leq N$.

\subsection{Case $1\leq q \leq \lfloor N/2 \rfloor$}\label{subsec:1-q-N/2}

To present our analysis method, we begin with an informal description of
the decision events characterizing the \emph{$q$ out of $N$} SDA algorithm.
Assume that the fusion center provides its decision at time~$t$. This fact
implies that neither case (i) nor case (ii) in equation~\eqref{eq:FinDec}
has happened at any time before $t$. Moreover, two distinct set of events
may precede time $t$, depending upon whether the values of the counters
$Count_0$ and $Count_1$ at time $t-1$ are smaller than $q$ or not.
In a first possible set of events, say the ``simple situation,'' the
counters satisfy $0\leq Count_0(t-1), Count_1(t-1)\leq q-1$ and, hence, the
time $t$ is the first time that at least one of the two counters crosses
the threshold $q$.
In a second possible set of events, say the ``canceling situation,'' the
counters $Count_0(t-1)$ and $Count_1(t-1)$ are greater than $q$ and,
therefore, equal. In the canceling situation, there must exist a time
instant $\bar{\tau}\leq t-1$ such that $Count_{0}(\bar{\tau}-1)<q$,
$Count_{1}(\bar{\tau}-1)<q$ and $Count_{0}(\tau)=Count_{1}(\tau)\geq q$ for
all $\tau \in \{\bar{\tau}+1,\dots,t-1\}$. In other words, both counters
cross the threshold $q$ at the same time instant $\bar{\tau}$ reaching the
same value, that is, $Count_0(\bar{\tau})=Count_1(\bar{\tau})$, and, for
time $\tau \in \{\bar{\tau}+1,\dots,t-1\}$, the number $n_0(\tau)$ of SDMs
deciding in favor of $H_0$ at time $\tau$ and the number $n_1(\tau)$ of
SDMs deciding in favor of $H_1$ at time $\tau$ cancel each other out, that
is, $n_0(\tau)=n_1(\tau)$.  

In what follows we study the probability of the simple and canceling
situations.  To keep track of both possible set of events, we introduce
four probability functions, $\alpha$, $\beta$, $\bar{\alpha}$,
$\bar{\beta}$. The functions $\alpha$ and $\beta$ characterize the simple
situation, while $\bar{\alpha}$ and $\bar{\beta}$ characterize the
canceling situation.  First, for the simple situation, define the
probability function $\alpha:\N \times \left\{0,\ldots,q-1 \right\}\times
\left\{0,\ldots,q-1\right\}\to [0,1]$ as follows: given a group of
$s_0+s_1$ SDMs, $\alpha(t, s_0, s_1)$ is the probability that
\begin{enumerate}
\item all the $s_0+s_1$ SDMs have provided a decision up to time $t$; and
\item considering the variables $Count_{0}$ and $Count_{1}$ restricted to
  this group of $s_0+s_1$ SDMs , $Count_{0}(t)=s_0$ and $Count_{1}(t)=s_1$.
\end{enumerate}
Also, define the probability function $\map{{\beta}_{1|j}}{\natural\times
  \left\{0,\ldots, q-1 \right\}\times \left\{0,\ldots,
    q-1\right\}}{[0,1]}$, $j\in\{0,1\}$ as follows: given a group of
$N-(s_0+s_1)$ SDMs, $\beta_{1|j}(t,s_0,s_1)$ is the probability that
\begin{enumerate}
\item no SDMs have provided a decision up to time $t-1$; and
\item considering the variables $Count_{0}$ and $Count_{1}$ restricted to
  this group of $N-(s_0+s_1)$ SDMs, $Count_{0}(t)+s_0< Count_{1}(t)+s_1$,
  and $Count_{1}(t)+s_1 \geq q$.
\end{enumerate}
Similarly, it is straightforward to define the probabilities
${\beta}_{0|j}$, $j\in\{0,1\}$.

Second, for the canceling situation, define the probability function
$\bar{\alpha} : \N \times \left\{q, \ldots, \lfloor N/2 \rfloor\right\}\to
[0,1]$ as follows: given a group of $2s$ SDMs, $\bar{\alpha}(t,s)$ is the
probability that
\begin{enumerate}
\item all the $2s$ SDMs have provided a decision up to time $t$; and
\item there exists $\bar{\tau}\leq t$ such that, considering the variables
  $Count_{0}$ and $Count_{1}$ restricted to this group of $2s$ SDMs
  \begin{itemize}
  \item $Count_{0}(\bar{\tau}-1)<q$ and $Count_{1}(\bar{\tau}-1)<q$;
  \item $Count_{0}(\tau)=Count_{1}(\tau)\geq q$ for all $\tau \geq
    \bar{\tau}$.
  \end{itemize}
\end{enumerate}
Also, define the probability function
$\map{{\bar{\beta}}_{1|j}}{\natural\times \left\{q,\ldots \lfloor N/2
    \rfloor \right\}}{[0,1]}$, $j\in\{0,1\}$ as follows: given a group of
$N-2s$ SDMs, $\bar{\beta}_{1|j}(t,s)$ is the probability that
\begin{enumerate}
\item no SDMs have provided a decision up to time $t-1$; and
\item at time $t$ the number of SDMs providing a decision in favor of $H_1$
  is strictly greater of the number of SDMs providing a decision in favor
  of $H_0$.
\end{enumerate}
Similarly, it is straightforward to define the probabilities
${\bar{\beta}}_{0|j}$, $j\in\{0,1\}$.

Note that, for simplicity, we do not explicitly keep track of the
dependence of the probabilities $\beta$ and $\bar\beta$ upon the numbers
$N$ and $q$.  The following proposition shows how to compute the
probabilities $\left\{\pij(t;N,q)\right\}_{t=1}^{\infty}$,
$i,j\in\{0,1\}$, starting from the above definitions.
 
\begin{proposition}\mbox{} \textbf{(\emph{$q$ out of $N$}: a recursive
    formula)}
  \label{Prop-Numq}
  Consider a group of $N$ SDMs, running the \emph{$q$ out of $N$ SDA}
  algorithm. Without loss of generality, assume $H_1$ is the correct
  hypothesis. Then, for $i\in \{0,1\}$, we have, for $t=1$,
  \begin{align}\label{eq:t11}
    \pio(1;N,q)=\beta_{i|1}(1,0,0),
  \end{align}   
  and, for $t\geq 2$, 
  \begin{align} \label{eq-Main} & \pio(t;N,q) = \sum_{s_0=0}^{q-1}
    \sum_{s_1=0}^{q-1} {N \choose s_1+s_0} \alpha(t-1, s_0, s_1)
    \beta_{i|1}(t, s_0, s_1)+ \sum_{s=q}^{\lfloor N/2\rfloor} {N \choose
      2s} \bar{\alpha}(t-1, s) \bar{\beta}_{i|1}(t, s).
  \end{align}
 \end{proposition}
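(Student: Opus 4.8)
The plan is to prove the recursion by partitioning the event ``the group decides in favor of $H_i$ exactly at time $t$'' according to the configuration of the two counters $Count_0(t-1)$ and $Count_1(t-1)$, and then to exploit the independence and homogeneity of the $N$ SDMs to factor each piece of the partition into a product of an ``already-decided'' probability and a ``fresh-decision'' probability.

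First I would record the elementary fact that both counters are non-decreasing in time, and combine it with a case analysis of the stopping conditions (i)--(ii) in~\eqref{eq:FinDec}. Writing $c_0 = Count_0(t-1)$ and $c_1 = Count_1(t-1)$, the hypothesis that no global decision has been reached strictly before time $t$ forces exactly one of two mutually exclusive configurations: either $0 \le c_0, c_1 \le q-1$ (the \emph{simple situation}), or $c_0 = c_1 \ge q$ (the \emph{canceling situation}). Indeed, if $c_0 \ne c_1$ then the larger counter must be below $q$, otherwise the corresponding stopping condition would already have fired at time $t-1$, so both counters are below $q$; and if some counter is at least $q$, then, to avoid a prior decision, the two must be equal. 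Monotonicity guarantees that ``no decision before $t$'' is equivalent to the counters remaining in the admissible region throughout $\{1, \ldots, t-1\}$, so the classification at time $t-1$ alone suffices to describe the entire prior history.

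Next I would treat the two situations separately. For the simple situation I further partition on the exact pair $(s_0, s_1) = (c_0, c_1)$ with $s_0, s_1 \in \{0, \ldots, q-1\}$. Since each decided SDM increments exactly one counter, precisely $s_0 + s_1$ of the $N$ SDMs have decided by time $t-1$; I select this subset in $\binom{N}{s_0 + s_1}$ ways, and by independence the event factors. The chosen subset of size $s_0+s_1$ contributes $\alpha(t-1, s_0, s_1)$ (all decided by $t-1$ with the prescribed counts, the internal $H_0/H_1$ split being already summed inside $\alpha$), while the complementary $N - (s_0+s_1)$ SDMs contribute $\beta_{i|1}(t, s_0, s_1)$ (none decided before $t$, and at time $t$ their fresh votes, added to the offsets $s_0, s_1$, push $H_i$ strictly ahead and across $q$). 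Summing over $(s_0, s_1)$ gives the first double sum. The canceling situation is handled identically, now partitioning on the common value $s = c_0 = c_1 \in \{q, \ldots, \lfloor N/2 \rfloor\}$: exactly $2s$ SDMs have decided, chosen in $\binom{N}{2s}$ ways, contributing $\bar{\alpha}(t-1, s)$, while the remaining $N - 2s$ contribute $\bar{\beta}_{i|1}(t, s)$. Here the key simplification I would flag is that once the counters are tied at a level $\ge q$, the instant the tie is broken the leading counter is automatically $\ge q$, so $\bar{\beta}$ need only record which hypothesis receives strictly more fresh votes. Finally, the base case $t = 1$ is immediate, since no SDM can have decided by time $0$: only the term $(s_0, s_1) = (0,0)$ survives (formally $\alpha(0,0,0) = 1$ and $\bar{\alpha}(0, s) = 0$), yielding $\pio(1;N,q) = \beta_{i|1}(1, 0, 0)$.

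The main obstacle is not the algebra but the careful bookkeeping of the canceling situation: I must verify that the global ``no-decision-before-$t$'' constraint, which couples all $N$ SDMs, decouples correctly across the decided and undecided subgroups. This hinges on two facts, namely that the undecided subgroup contributes nothing to the counters before $t$ (so the global counters coincide with those of the decided subgroup up to time $t-1$), and that the simultaneous-crossing-and-tie condition built into the definition of $\bar{\alpha}$ is exactly what prevents the decided subgroup from triggering a premature decision on its own. Establishing this equivalence, and confirming that the simple and canceling cases are genuinely disjoint and jointly exhaustive so that no decision event is double-counted or omitted, is the crux of the argument; the remaining steps are routine applications of independence and the binomial count of subgroup choices.
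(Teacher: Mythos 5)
Your proposal is correct and follows essentially the same route as the paper: partition the decision event at time $t$ by the counter configuration at time $t-1$ into the simple situation ($0\le s_0,s_1\le q-1$) and the canceling situation ($Count_0=Count_1=s\ge q$), then use independence and homogeneity to factor each disjoint piece as $\binom{N}{s_0+s_1}\alpha\beta_{i|1}$ or $\binom{N}{2s}\bar\alpha\bar\beta_{i|1}$. Your discussion of why these two configurations are exhaustive and why the no-prior-decision constraint decouples across the decided and undecided subgroups is in fact more explicit than the paper's, which simply asserts the disjoint-union decomposition.
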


 \begin{proof}
   The proof that formulas in~\eqref{eq:t11} hold true follows trivially
   form the definition of the quantities $\beta_{1|1}(1,0,0)$ and
   $\beta_{0|1}(1,0,0)$. We start by providing three useful definitions.
 
   First, let $E_t$ denote the event that the SDA with the \emph{$q$ out of
     $N$} rule provides its decision at time $t$ in favor of $H_1$.
 
 Second, for $s_0$ and $s_1$ such that $0\leq s_0,s_1 \leq q-1$, let $E_{s_0,s_1,t}$ denote the event such that 
 \begin{enumerate}
 \item there are $s_0$ SDMs that have decided in favor of $H_0$ up to time $t-1$;
 \item  there are $s_1$ SDMs that have decided in favor of $H_1$ up to time $t-1$;
 \item there exist two positive integer number $r_0$ and $r_1$ such that  
 \begin{itemize}
 \item $s_0+r_0< s_1+r_1$ and $s_1+r_1\geq q$.
 \item at time $t$, $r_0$ SDMs decides in favor of $H_0$ while $r_1$ SDMs decides in favor of $H_1$ 
 \end{itemize}
 \end{enumerate}
 Third, for $q\leq s \leq \lfloor N/2 \rfloor$, let $E_{s,t}$ denote the event such that
 \begin{enumerate}
 \item $2s$ SDMs have provided their decision up to time $t-1$ balancing their decision, i.e., there exists $\bar{\tau}\leq t-1$ with the properties that, considering the variables $Count_{-}$ and $Count_{+}$ restricted to these $2s$ SDMs 
 \begin{itemize}
 \item $Count_{0}(\tau)< q$, $Count_{1}(\tau)<q$, for $1\leq \tau \leq \bar{\tau}-1$;
 \item $Count_{0}(\tau)=Count_{1}(\tau)$ for $\bar{\tau} \leq \tau \leq t-1$;
 \item $Count_{0}(t-1)=Count_{1}(t-1)=s$.
 \end{itemize}
 \item at time $t$ the number of SDMs providing their decision in favor of $H_1$ is strictly greater than the number of SDMs deciding in favor of $H_0$.
 \end{enumerate}
 Observe that 
 $$
 E_t=\left(\mathop{\cup}_{0\leq s_0, s_1\leq q-1} E_{s_0,s_1,t}\right) \bigcup \left(\mathop{\cup}_{q\leq s \leq \lfloor N/2 \rfloor} E_{s,t}\right).
 $$
 Since $E_{s_0,s_1,t}$, $0\leq s_0, s_1\leq q-1$, and $E_{s,t}$, $q \leq
 s\leq \lfloor N/2 \rfloor$ are disjoint sets, we can write
 \begin{equation}\label{eq:Tot}
 \P\left[E_t\right]= \mathop{\sum}_{0\leq s_0,s_1\leq q-1} \P\left[E_{s_0,s_1,t}\right]+ \mathop{\sum}_{q\leq s \leq \lfloor N/2 \rfloor}\P\left[E_{s,t}\right]. 
 \end{equation}
 Observe that, according to the definitions of $\alpha(t-1, s_0, s_1)$, $\bar{\alpha}(t-1, s)$, $ \beta_{1|1}(t, s_0, s_1)$ and $\bar{\beta}_{1|1}(t, s)$, provided above,
 \begin{equation}\label{eq:Par1}
  \P\left[E_{s_0,s_1,t}\right]={N \choose s_1+s_0} 
    \alpha(t-1, s_0, s_1) \beta_{1|1}(t, s_0, s_1)
 \end{equation}
 and that
  \begin{equation}\label{eq:Par2}
 \P\left[E_{s,t}\right]={N \choose 2s} \bar{\alpha}(t-1, s) \bar{\beta}_{1|1}(t, s).
 \end{equation}
 Plugging equations~\eqref{eq:Par1} and~\eqref{eq:Par2} into equation~\eqref{eq:Tot} concludes the proof of the Theorem. 
 \end{proof}
 Formulas, similar to the ones in~\eqref{eq:t11} and~\eqref{eq-Main} can be
 provided for computing also the probabilities
 $\left\{\piz(t;N,q)\right\}_{t=1}^{\infty}$, $i\in\{0,1\}$.
 
As far as the probabilities $\alpha(t,s_0,s_1)$, $\bar{\alpha}(t,s)$,
$\beta_{i|j}(t,s_0,s_1)$, $\bar{\beta}_{i|j}(t,s)$, $i,j\in\{0,1\}$, are
concerned, we now provide expressions to calculate them.

\begin{proposition}\label{prop:AlphaBeta}
  Consider a group of $N$ SDMs, running the \emph{$q$ out of $N$ SDA}
  algorithm for $1\leq q \leq \lfloor N/2 \rfloor$. Without loss of
  generality, assume $H_1$ is the correct hypothesis. For $i\in \{0,1\}$,
  let $\map{\pi_{i|1}}{\natural}{[0,1]}$ denote the cumulative probability
  up to time $t$ that a single SDM provides the decision $H_i$, given that
  $H_1$ is the correct hypothesis, i.e.,
  \begin{equation}\label{eq:gamma}
    \pi_{i|1}(t)=\sum_{s=1}^{t}\pio(t).
      \end{equation} 
  For $t\in\natural$, $s_0,s_1\in\until{q-1}$, $s\in\left\{q, \ldots,
    \floor{N/2}\right\}$, the probabilities $\alpha(t,s_0,s_1)$,
  $\bar{\alpha}(t,s)$, $\beta_{1|1}(t,s_0,s_1)$, and
  $\bar{\beta}_{1|1}(t,s)$ satisfy the following relationships (explicit
  for $\alpha$, $\beta$, $\bar{\beta}$ and recursive for $\bar{\alpha}$):
  \begin{align*}
    \alpha(t,s_0,s_1) &= {s_0+s_1 \choose s_0} \pi_{0|1}^{s_0}(t)
    \pi_{1|1}^{s_1}(t),
    \\
    \bar{\alpha}(t,s) &= \sum_{s_0=0}^{q-1} \sum_{s_1=0}^{q-1} {2s \choose
      s_0+s_1} {2s-s_0-s_1 \choose s-s_0}\alpha(t-1,s_0,s_1)
    \pzo^{s-s_0}(t) \poo^{s-s_1}(t) \\
    &\qquad\qquad\qquad\qquad\qquad\qquad\qquad+\sum_{h=q}^{s} {2s \choose
      2h} {2s-2h \choose s-h}\bar{\alpha}(t-1, h) \pzo^{s-h}(t)
    \poo^{s-h}(t), 
    \\
    \beta_{1|1}(t,s_0,s_1) &= \sum_{h_1=q-s_1}^{N-\bar{s}} {N-\bar{s}
      \choose h_1} \poo^{h_1}(t) \left[\sum_{h_0=0}^{m}{N-\bar{s}-h_1
        \choose h_0} \pzo^{h_0}(t)\left(1- \pi_{1|1}(t)-\pi_{0|1}(t)
      \right)^{N-\bar{s}-h_0-h_1}\right],
    \\
    \bar{\beta}_{1|1}(t,s) &= \sum_{h_1=1}^{N-2s} {N-2s \choose h_1}
    \poo^{h_1}(t) \left[\sum_{h_0=0}^{\bar{m}} {N-2s-h_1\choose h_0 }
      \pzo^{h_0}(t) (1- \pi_{1|1}(t)-\pi_{0|1}(t))^{N-2s-h_0-h_1}\right],
  \end{align*}
  where $\bar{s}=s_0+s_1$, $m=\min \{h_1+s_1-s_0-1,N-(s_0+s_1)-h_1 \}$ and
  $\bar{m}=\min \{h_1-1,N-2s-h_1 \}$.  Moreover, corresponding relationships
  for $\beta_{0|1}(t,s_0,s_1)$ and $\bar{\beta}_{0|1}(t,s)$ are obtained by
  exchanging the roles of $\poo(t)$ with $\pzo(t)$ in the relationships for
  $\beta_{1|1}(t,s_0,s_1)$ and $\bar{\beta}_{1|1}(t,s)$.
\end{proposition}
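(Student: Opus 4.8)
The plan is to derive all four expressions from a single principle: because the SDMs are homogeneous and decide independently (assumptions (i)--(iii) of Section~\ref{subsec:DHT}), the probability of any prescribed pattern of decisions factorizes into a product of per-SDM probabilities, and the number of patterns realizing a given count is a multinomial coefficient. The elementary per-SDM probabilities I would use are $\pi_{1|1}(t)$ and $\pi_{0|1}(t)$ (the SDM has decided for $H_1$, resp.\ $H_0$, at or before time $t$), the instantaneous $\poo(t)$ and $\pzo(t)$ (the SDM decides exactly at time $t$), and $1-\pi_{1|1}(t)-\pi_{0|1}(t)$ (the SDM is still undecided at $t$). I would begin with $\alpha$, which is immediate: conditioning a group of $s_0+s_1$ SDMs on all having decided by $t$ with exactly $s_0$ favoring $H_0$ and $s_1$ favoring $H_1$, there are $\binom{s_0+s_1}{s_0}$ ways to assign the labels and each configuration has probability $\pi_{0|1}^{s_0}(t)\,\pi_{1|1}^{s_1}(t)$, giving the stated formula.

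For $\beta_{1|1}(t,s_0,s_1)$ I would work with the $N-\bar s$ SDMs still undecided at $t-1$, where $\bar s=s_0+s_1$. The key observation is that, for one such SDM, the three events ``decides $H_1$ at $t$'', ``decides $H_0$ at $t$'', and ``still undecided at $t$'' are disjoint, exhaust the event ``undecided at $t-1$'', and carry probabilities $\poo(t)$, $\pzo(t)$, and $1-\pi_{1|1}(t)-\pi_{0|1}(t)$ respectively (a one-line check using $\pi_{i|1}(t)=\pi_{i|1}(t-1)+p_{i|1}(t)$). By independence, a configuration in which $h_1$ decide $H_1$ and $h_0$ decide $H_0$ at $t$ has probability $\binom{N-\bar s}{h_1}\binom{N-\bar s-h_1}{h_0}\poo^{h_1}(t)\pzo^{h_0}(t)(1-\pi_{1|1}(t)-\pi_{0|1}(t))^{N-\bar s-h_1-h_0}$. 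I would then read the summation ranges directly off the acceptance rule~\eqref{eq:FinDec}: the constraint $Count_1(t)+s_1\ge q$ forces $h_1\ge q-s_1$, while $Count_1(t)+s_1>Count_0(t)+s_0$ forces $h_0\le h_1+s_1-s_0-1$; together with the feasibility bound $h_0\le N-\bar s-h_1$ this yields $m=\min\{h_1+s_1-s_0-1,\,N-\bar s-h_1\}$. The formula for $\bar\beta_{1|1}(t,s)$ is the identical computation on the $N-2s$ fresh SDMs, except that $q$ has already been crossed in the canceling situation, so only the strict-majority condition $h_1>h_0$ survives, producing $h_1\ge 1$ and $\bar m=\min\{h_1-1,\,N-2s-h_1\}$.

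The recursion for $\bar\alpha(t,s)$ is the genuinely new piece, and I would prove it by splitting the canceling event on the first time $\bar\tau\le t$ at which both restricted counters simultaneously cross $q$. If $\bar\tau=t$, then at $t-1$ both counts are still below $q$, i.e.\ the $2s$ SDMs are in a simple situation with counts $(s_0,s_1)$, $0\le s_0,s_1\le q-1$; at time $t$ the remaining $2s-s_0-s_1$ SDMs must all decide so as to raise both counts to exactly $s$, i.e.\ $s-s_0$ choose $H_0$ and $s-s_1$ choose $H_1$. Selecting the early deciders ($\binom{2s}{s_0+s_1}$, with their internal split already inside $\alpha(t-1,s_0,s_1)$) and then the $H_0$-deciders among the rest ($\binom{2s-s_0-s_1}{s-s_0}$) gives the first sum, weighted by $\pzo^{s-s_0}(t)\poo^{s-s_1}(t)$. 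If instead $\bar\tau\le t-1$, then at $t-1$ the group is already canceling at some level $h$ with $q\le h\le s$; the $2h$ deciders form a canceling configuration counted by $\bar\alpha(t-1,h)$, and the remaining $2s-2h$ all decide at $t$, split evenly as $s-h$ per hypothesis to preserve equality, giving the second sum with factors $\binom{2s}{2h}\binom{2s-2h}{s-h}\pzo^{s-h}(t)\poo^{s-h}(t)$. Since the two cases are disjoint and exhaustive and the second refers only to the strictly earlier time $t-1$, summing them yields the recursion, well founded with base value $\bar\alpha(1,s)=\binom{2s}{s}\pzo^s(1)\poo^s(1)$.

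I expect the main obstacle to be precisely this $\bar\alpha$ recursion: the bookkeeping needed to argue that in each case the remaining SDMs must decide \emph{exactly} at time $t$, and that the equality $Count_0(\tau)=Count_1(\tau)$ holds throughout $[\bar\tau,t]$ for the full group and not merely for the $2h$-subgroup. Verifying disjointness and that Case~B genuinely reduces to $\bar\alpha(t-1,\cdot)$, rather than to a history in which extra SDMs decide strictly between $\bar\tau$ and $t$, is where the argument must be most careful. The companion formulas for $\beta_{0|1}$ and $\bar\beta_{0|1}$ then follow by interchanging the roles of $\poo(t)$ and $\pzo(t)$, as stated, requiring no separate argument.
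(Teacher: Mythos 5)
Your proposal is correct and follows essentially the same route as the paper: the explicit multinomial computation for $\alpha$, the classification of each still-undecided SDM at time $t$ into the three disjoint outcomes with joint probabilities $\poo(t)$, $\pzo(t)$, $1-\pi_{1|1}(t)-\pi_{0|1}(t)$ for $\beta_{1|1}$ and $\bar\beta_{1|1}$ (with the same summation ranges read off the acceptance rule), and the decomposition of the canceling event for $\bar\alpha$ into the disjoint cases $\bar\tau=t$ versus $\bar\tau\le t-1$, which is exactly the paper's union of the events $E_{t-1,s_0,s_1}$ and $\bar E_{t-1,h}$. The subtlety you flag about Case~B reducing to $\bar\alpha(t-1,h)$ is resolved precisely as you suggest, since the $2s-2h$ remaining SDMs are undecided through $t-1$, so the full-group counters coincide with those of the $2h$-subgroup on $[\bar\tau,t-1]$.
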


\begin{proof}
  The evaluation of $\alpha(t,s_0,s_1)$ follows from standard probabilistic
  arguments. Indeed, observe that, given a first group of $s_0$ SDMs and a
  second group of $s_1$ SDMs, the probability that all the SDMs of the
  first group have decided in favor of $H_0$ up to time $t$ and all the
  SDMs of the second group have decided in favor of $H_1$ up to time $t$ is
  given by $\pi_{0|1}^{s_0}(t)\pi_{1|1}^{s_1}(t)$. The desired result follows
  from the fact that there are ${s_1+s_0 \choose s_0}$ ways of dividing a
  group of $s_0+s_1$ SDMs into two subgroups of $s_0$ and $s_1$ SDMs.

  Consider now $\bar{\alpha}(t,s)$. Let $E_{\bar{\alpha}(t,s)}$ denote the event
  of which $\bar{\alpha}(t,s)$ is the probability of occurring, that is, the event that, given a group of $2s$
  SDMs,
 \begin{enumerate}
 \item all the $2s$ SDMs have provided a decision up to time $t$; and
 \item there exists $\bar{\tau}\leq t$ such that, considering the variables
   $Count_{0}$ and $Count_{1}$ restricted to this group of $2s$ SDMs
   \begin{itemize}
   \item $Count_{0}(\bar{\tau}-1)<q$ and $Count_{1}(\bar{\tau}-1)<q$;
   \item $Count_{0}(\tau)=Count_{1}(\tau)\geq q$ for all $\tau \geq
     \bar{\tau}$.
   \end{itemize}
 \end{enumerate}
 Now, for a group of $2s$ SDMs, for $0\leq s_0,s_1 \leq q-1$, let
 $E_{t-1,s_0,s_1}$ denote the event that
 \begin{enumerate}
 \item $s_0$ (resp. $s_1$) SDMs have decided in favor of $H_0$
   (resp. $H_1$) up to time $t-1$;
 \item $s-s_0$ (resp. $s-s_1$) SDMs decide in favor of $H_0$ (resp. $H_1$)
   at time $t$.
 \end{enumerate}
 Observing that for $s_0+s_1$ assigned SDMs the probability that fact (i) is verified is given by $\alpha(t-1,s_0,s_1)$ we can write that
 $$
 \P[E_{t-1,s_0,s_1}]={2s \choose s_0+s_1} {2s-s_0-s_1 \choose
   s-s_0}\alpha(t-1,s_0,s_1) \pzo^{s-s_0}(t) \poo^{s-s_1}(t).
 $$
 Consider again a group of $2s$ SDMs and for $q\leq h \leq s$ let
 $\bar{E}_{t-1,h}$ denote the event that
 \begin{enumerate}
 \item  $2h$ SDMs have provided a decision up to time $t-1$; 
 \item there exists $\bar{\tau}\leq t-1$ such that, considering the
   variables $Count_{0}$ and $Count_{1}$ restricted to the group of $2h$
   SDMs that have already provided a decision,
   \begin{itemize}
   \item $Count_{0}(\bar{\tau}-1)<q$ and $Count_{1}(\bar{\tau}-1)<q$;
   \item $Count_{0}(\tau)=Count_{1}(\tau)\geq q$ for all $\tau \geq
     \bar{\tau}$; and
   \item $Count_{0}(t-1)=Count_{1}(t-1)=h$;
   \end{itemize}
 \item at time instant $t$, $s-h$ SDMs decide in favor of $H_0$ and $s-h$
   SDMs decide in favor of $H_1$.
 \end{enumerate}
 Observing that for $2h$ assigned SDMs the probability that fact (i) and fact (ii) are verified is given by $\bar{\alpha}(t-1, h)$, we can write that
 $$
 \P[\bar{E}_{t-1,h}]={2s \choose
      2h} {2s-2h \choose s-h}\bar{\alpha}(t-1, h) \pzo^{s-h}(t)
    \poo^{s-h}(t).
 $$
 Observe that
$$
E_{\bar{\alpha}(t,s)}= \left(\bigcup_{s_0=0}^{q} \bigcup_{s_1=0}^q E_{t-1,s_0,s_1}
\right) \bigcup \left(\bigcup_{h=q}^{\floor{N/2}} \bar{E}_{t-1,h}\right).
$$
Since the events $E_{t-1,s_0,s_1}$, $0\leq s_0, s_1 < q$ and $
\bar{E}_{t-1,h}$, $ q\leq h\leq \floor{N/2}$, are all disjoint we have that
$$ 
\P[E_{\bar{\alpha}(t,s)}] = \sum_{s_0=0}^{q-1} \sum_{s_1=0}^{q-1} \P[E_{t-1,s_0,s_1}]+\sum_{h=q}^{s} \P[\bar{E}_{t-1,h}].
$$
Plugging the expressions of $\P[E_{t-1,s_0,s_1}]$ and $\P[\bar{E}_{t-1,h}]$
in the above equality gives the recursive relationship for computing
$\bar{\alpha}(t,s)$.
   
Consider now the probability $\beta_{1|1}(t,s_0,s_1)$. Recall that this
probability refers to a group of $N-(s_0+s_1)$ SDMs. Let us introduce some
notations. Let $E_{\beta_{1|1}(t,s_0,s_1)}$ denote the event of which
$\beta_{1|1}(t,s_0,s_1)$ represents the probability of occurring and let
$E_{t;h_1,s_1, h_0,s_0}$ denote the event that, at time $t$
\begin{itemize}
\item $h_1$ SDMs decides in favor of $H_1$;
\item $h_0$ SDMs decides in favor of $H_0$;
\item the remaining $N-(s_0+s_1)-(h_0+h_1)$ do not provide a decision up to time $t$.
\end{itemize}
Observe that the above event is well-defined if and only if $h_0+h_1 \leq
N-(s_0+s_1)$. Moreover $E_{t;h_1,s_1, h_0,s_0}$ contributes to
$\beta_{1|1}(t,s_0,s_1)$, i.e., $E_{t;h_1,s_1, h_0,s_0} \subseteq
E_{\beta_{1|1}(t,s_0,s_1)}$ if and only if $h_1\geq q-s_1$ and
$h_0<h_1+s_1-s_0$ (the necessity of these two inequalities follows directly from the definition of $\beta_{1|1}(t,s_0,s_1)$). Considering the three inequalities $h_0+h_1 \leq
N-(s_0+s_1)$, $h_1\geq q-s_1$ and $h_0<h_1+s_1-s_0$, it follows
that
$$
E_{\beta_{1|1}(t,s_0,s_1)}= \bigcup \big\{E_{t;h_1,s_1, h_0,s_0} \;|\;
q-s_1 \leq h_1\leq N-(s_0+s_1) \,\,\,\,\text{and}\,\,\,\, h_0\leq m \big\},
$$
where $m=\min \{h_1+s_1-s_0-1,N-(s_0+s_1)-h_1 \}$. To conclude it suffices
to observe that the events $E_{t;h_1,s_1, h_0,s_0}$ for $q-s_1 \leq h_1\leq
N-(s_0+s_1)$ and $h_0\leq m$ are disjoint events and that
$$
\P[E_{t;h_1,s_1, h_0,s_0}]={N-\bar{s} \choose
  j} \poo^{h_1}(t) {N-\bar{s}-h_1 \choose h_0}
\pzo^{h_0}(t)\left(1- \pi_{1|1}(t)-\pi_{0|1}(t)
\right)^{N-\bar{s}-h_0-h_1},
$$
where $\bar{s}=s_0+s_1$.

The probability $\bar{\beta}_{1|1}(t,s)$ can be computed reasoning
similarly to $\beta_{1|1}(t,s_0,s_1)$.
\end{proof}

Now we describe some properties of the above expressions in order to assess
the computational complexity required by the formulas introduced in
Proposition \ref{Prop-Numq} in order to compute
$\left\{\pij(t;N,q)\right\}_{t=1}^{\infty}$, $i,j\in\{0,1\}$. From the
expressions in Proposition~\ref{prop:AlphaBeta}
we observe that
 \begin{itemize}
 \item $\alpha(t,s_0,s_1)$ is a function of $\pi_{0|1}(t)$ and
   $\pi_{1|1}(t)$;
    \item $\bar{\alpha}(t,s)$ is a function of $\alpha(t-1,s_0,s_1)$, $0\leq
   s_0,s_1\leq q-1$, $\pzo(t)$, $\poo(t)$ and $\bar{\alpha}(t-1,h)$, $q\leq
   h \leq s$;
 \item $\beta_{i|1}(t,s_0,s_1)$, $\bar{\beta}_{i|1}$, $i\in\{0,1\}$, are
   functions of $\pzo(t)$, $\poo(t)$, $\pi_{0|1}(t)$ and
   $\pi_{1|1}(t)$.
 \end{itemize}
 Moreover from equation~\eqref{eq:gamma} we have that $\pi_{i|j}(t)$ is
 a function of $\pi_{i|j}(t-1)$ and $\pij(t)$.

 Based on the above observations, we deduce that $\pzo(t;N,q)$ and
 $\poo(t;N,q)$ can be seen as the output of a dynamical system having the
 $(\lfloor N/2 \rfloor-q+3)$-th dimensional vector with components the
 variables $\pi_{0|1}(t-1)$, $\pi_{1|1}(t-1)$, $\bar{\alpha}(t-1,s)$,
 $q\leq h \leq \lfloor N/2 \rfloor$ as states and the two dimensional
 vector with components $\pzo(t)$, $\poo(t)$, as inputs. As a consequence,
 it follows that the iterative method we propose to compute
 $\left\{\pij(t;N,q)\right\}_{t=1}^{\infty}$, $i,j\in\{0,1\}$, requires
 keeping in memory a number of variables which grows linearly with the
 number of SDMs.

\subsection{Case $\lfloor N/2 \rfloor +1 \leq q \leq N$}\label{subsec:N/2-q-N}

The probabilities $\pij(t;N,q)$, $i,j\in\{0,1\}$ in the case where $\lfloor N/2 \rfloor +1 \leq q \leq N$ can be computed according to the expressions reported in the following Proposition. 

\begin{proposition}
  Consider a group of $N$ SDMs, running the \emph{$q$ out of $N$ SDA}
  algorithm for $\lfloor N/2 \rfloor+1 \leq q \leq N$. Without loss of
  generality, assume $H_1$ is the correct hypothesis.  For $i\in \{0,1\}$,
  let $\map{\pi_{i|1}}{\natural}{[0,1]}$ be defined
  as~\eqref{eq:gamma}. Then, for $i\in \{0,1\}$, we have for $t=1$
\begin{align}\label{eq:q>N/2t=1}
\pio(1;N,q)=\sum_{h=q}^{N}{N \choose h}\pio^h(1)\left(1-\pio(1)\right)^{N-h}
\end{align}
and for $t\geq 2$
 \begin{align}\label{eq:q>N/2}
\pio(t;N,q) =& \sum_{k=0}^{q-1}{N \choose k} \pi^k_{i|1}(t-1) \sum_{h=q-k}^{N-k}  {N-k \choose h} \pio^h(t) \left(1-\pi_{i|1}(t) \right)^{N-(h+k)}.
\end{align}
\end{proposition}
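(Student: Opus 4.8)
The plan is to exploit the structural simplification that occurs precisely when $q\geq\floor{N/2}+1$. The key observation, and the step I expect to carry the whole argument, is that in this regime the two counters can never both reach the threshold: since $Count_0(t)+Count_1(t)\leq N<2q$ at every time, at most one of $Count_0(t)$, $Count_1(t)$ is ever $\geq q$. First I would use this to show that the two cases of~\eqref{eq:FinDec} decouple. Indeed, whenever $Count_1(t)\geq q$ we automatically have $Count_0(t)\leq N-Count_1(t)\leq N-q<q\leq Count_1(t)$, so the inequality $Count_1(t)>Count_0(t)$ in case $(i)$ holds for free; thus case $(i)$ reduces to the single condition $Count_1(t)\geq q$, and symmetrically case $(ii)$ reduces to $Count_0(t)\geq q$. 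This is exactly why the cancellation bookkeeping ($\bar\alpha$, $\bar\beta$) that was unavoidable in Section~\ref{subsec:1-q-N/2} disappears here.

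With this in hand, I would recast the decision event as a first-passage event for a single counter. Assuming $H_1$ correct and taking $i=1$, I claim ``the group decides $H_1$ at time $t$'' coincides with $\{Count_1(t-1)\leq q-1\}\cap\{Count_1(t)\geq q\}$. The inclusion ``$\subseteq$'' is immediate; for ``$\supseteq$'' I must verify that no decision was triggered before $t$. Since $Count_1$ is nondecreasing and $Count_1(t-1)<q$, the $H_1$-threshold is first crossed at $t$; and on this event $Count_0(\tau)\leq Count_0(t)\leq N-Count_1(t)\leq N-q<q$ for every $\tau\leq t$, so the $H_0$-threshold is never crossed up to time $t$. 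Hence the first and only crossing occurs at $t$ and is in favor of $H_1$. The upshot is that the entire analysis reduces to tracking, for each SDM, only whether it has decided $H_1$.

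It then remains to count. By homogeneity and conditional independence, I would classify each of the $N$ SDMs into three mutually exclusive classes by its $H_1$-status: decided $H_1$ by time $t-1$ (probability $\pi_{1|1}(t-1)$), decided $H_1$ exactly at $t$ (probability $\pi_{1|1}(t)-\pi_{1|1}(t-1)=\poo(t)$), or not decided $H_1$ by $t$ (probability $1-\pi_{1|1}(t)$, which legitimately merges $H_0$-decisions with no-decisions, since $Count_0$ is now irrelevant). Writing $k=Count_1(t-1)$ and $h$ for the number deciding $H_1$ at time $t$, the event forces $0\leq k\leq q-1$ and $q-k\leq h\leq N-k$, and summing the trinomial probabilities $\binom{N}{k}\binom{N-k}{h}\,\pi_{1|1}^{k}(t-1)\,\poo^{h}(t)\,(1-\pi_{1|1}(t))^{N-(h+k)}$ over this range produces~\eqref{eq:q>N/2}. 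The base case~\eqref{eq:q>N/2t=1} drops out by putting $Count_1(0)=0$, so that only $k=0$ contributes and $\pi_{1|1}(1)=\poo(1)$; the case $i=0$ is obtained verbatim after exchanging $\poo$ and $\pzo$ throughout.
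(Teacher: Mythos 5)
Your proof is correct and follows essentially the same route as the paper's: decompose the decision event over $k=Count_1(t-1)\in\{0,\dots,q-1\}$ and the number $h$ of SDMs deciding $H_1$ exactly at time $t$, then apply a trinomial count with cell probabilities $\pi_{1|1}(t-1)$, $\poo(t)$, and $1-\pi_{1|1}(t)$. Your opening observation --- that $N<2q$ forces the two counters to decouple, so the ordering condition in~\eqref{eq:FinDec} is automatic and no cancellation events can occur --- is in fact spelled out more carefully than in the paper, which takes this reduction for granted.
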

\begin{proof}
Let $t=1$. Since $q>N/2$, the probability that the fusion center decides in favor of $H_i$ at time $t=1$ is given by the probability that al least $q$ SDMs decide in favor of $H_i$ at time $1$. From standard combinatoric arguments this probability is given by~\eqref{eq:q>N/2t=1}. 

If $t>1$, the probability that the fusion center decides in favor of $H_i$ at time $t$ is given by the probability that $h$ SDMs, $0\leq h< q$, have decided in favor of $H_i$ up to time $t-1$, and that at least $q-h$ SDMs decide in favor of $H_i$ at time $t$. Formally let $E_t^{(i)}$ denote the event that the fusion center provides its decision in favor of $H_i$ at time $t$ and let $E_{h,t; k,t-1}^{(i)}$ denote the event that $k$ SDMs have decided in favor of $H_i$ up to time $t-1$ and $h$ SDMs decide in favor of $H_i$ at time $t$. 
Observe that
$$
E_t^{(i)}=\bigcup_{k=0}^{q-1}\,\, \bigcup_{h=q-k}^{N-k} E_{h,t; k,t-1}^{(i)}.
$$
Since $E_{h,t; k,t-1}^{(i)}$ are disjoint sets it follows that
$$
\P\left[E_t^{(i)}\right]=\sum_{k=0}^{q-1}\,\, \sum_{h=q-k}^{N-k} \P\left[E_{h,t; k,t-1}^{(i)}\right].
$$
The proof is concluded by observing that 
$$
\P\left[E_{h,t; k,t-1}^{(i)}\right]={N \choose k} \pi^k_{i|1}(t-1) {N-k \choose h} \pio^h(t) \left(1-\pi_{i|1}(t) \right)^{N-(h+k)}.
$$
\end{proof}

Regarding the complexity of the expressions in~\eqref{eq:q>N/2} it is easy
to see that the probabilities $\pij(t;N,q)$, $i,j\in\{0,1\}$ can be
computed as the output of a dynamical system having the two dimensional
vector with components $\pi_{0|1}(t-1), \pi_{1|1}(t-1)$ as state and the
two dimensional vector with components $\pzo(t), \poo(t)$ as input. In this
case the dimension of the system describing the evolution of the desired
probabilities is independent of $N$.
  
\section{Scalability analysis of the fastest and majority sequential
  aggregation rules}\label{sec-asymp-analysis}
 
The goal of this section is to provide some theoretical results
characterizing the probabilities of being correct and wrong for a group
implementing the \emph{q-out-of-N} SDA rule. We also aim to characterize
the probability with which such a group fails to reach a decision in
addition to the time it takes for this group to stop running any test.  In
Sections~\ref{sec:FixedQ-fastest} and~\ref{sec:FixedQ-majority} we consider
the fastest and the majority rules, namely the thresholds $q=1$ and
$q=\lceil N/2 \rceil$, respectively; we analyze how these two counting
rules behave for increasing values of $N$.  In Section~\ref{sec:FixedN}, we
study how these quantities vary with arbitrary values $q$ and fixed values
of $N$.

\subsection{The fastest rule for varying values of  $N$}\label{sec:FixedQ-fastest}
In this section we provide interesting characterizations of accuracy and
expected time under the \emph{fastest} rule, i.e., the counting rules with
threshold $q=1$. For simplicity we restrict to the case where the group has
the \emph{almost-sure} decision property. In particular we assume the
following two properties.
\begin{assumption}\label{ass:1}
  The number $N$ of SDMs is odd and the SDMs satisfy the \emph{almost-sure}
  decision property.
\end{assumption} 
Here is the main result of this subsection. Recall that
$\pwof(N)$ is the probability of wrong decision by a group of
$N$ SDMs implementing the fastest rule (assuming $H_1$ is the correct
hypothesis).

\begin{proposition}[Accuracy and expected time under the fastest rule]
  \label{prop:A-ET-Fastest}
  Consider the \emph{$q$ out of $N$ SDA} algorithm under
  Assumption~\ref{ass:1}. Assume $q=1$, that is, adopt the \emph{fastest}
  SDA rule.  Without loss of generality, assume $H_1$ is the correct
  hypothesis. Define the \emph{earliest possible decision time}
  \begin{equation}\label{eq:bart}
    \bar{t}:=\min \setdef{t\in\N}{\text{either }\poo(t)\neq 0 \text{ or }
      \pzo(t)\neq 0 }.  
  \end{equation}
  Then the probability of error satisfies
  \begin{equation}\label{eq:AccFast}
    \lim_{N\to \infty}\, \pwof(N) = 
    \begin{cases}
      0, \quad & \text{if } \poo(\bar{t})>\pzo(\bar{t}), \\
      1, \quad & \text{if } \poo(\bar{t})<\pzo(\bar{t}), \\
      \frac{1}{2}, \quad & \text{if } \poo(\bar{t})=\pzo(\bar{t}),
    \end{cases}
  \end{equation}
  and the expected decision time satisfies
  \begin{equation}\label{eq:ExpFastest}
    \lim_{N \to \infty} \E\left[T| H_1,N, q=1\right] = \bar{t}.
  \end{equation}
\end{proposition}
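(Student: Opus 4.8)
The plan is to exploit the defining feature of the fastest rule: with $q=1$ the fusion center stops at the first time the two vote counts differ, so the group outcome is decided by the earliest \emph{imbalance} among the individual votes. Since $\poo(t)=\pzo(t)=0$ for every $t<\bar{t}$, no SDM can report before $\bar{t}$, and hence $T\geq\bar{t}$ deterministically. The whole analysis therefore reduces to what happens at time $\bar{t}$, where each of the $N$ SDMs independently either votes $H_1$ (with probability $a:=\poo(\bar{t})$), votes $H_0$ (with probability $b:=\pzo(\bar{t})$), or has not yet decided (with probability $1-a-b$); by definition of $\bar{t}$ at least one of $a,b$ is strictly positive.

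For the accuracy claim, let $n_1$ and $n_0$ denote the numbers of SDMs voting $H_1$ and $H_0$ at time $\bar{t}$, so that $(n_1,n_0,N-n_1-n_0)$ is multinomial. By the rule~\eqref{eq:FinDec}, the group decides $H_1$ at $\bar{t}$ if $n_1>n_0$, decides $H_0$ if $n_0>n_1$, and otherwise proceeds to $\bar{t}+1$. The key step is to show that the exact-tie probability $\P[n_0=n_1]$ tends to $0$ as $N\to\infty$: conditioning on the number $m$ of SDMs that have voted at $\bar{t}$ (which is Binomial$(N,a+b)$ and hence concentrates at the diverging value $N(a+b)$), the conditional tie probability is a centered binomial point mass of order $O(1/\sqrt{m})$, so concentration of $m$ together with dominated convergence gives $\P[n_0=n_1]\to0$. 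With this I would bound the error via the sandwich $\P[n_0>n_1]\leq\pwof(N)\leq\P[n_0>n_1]+\P[n_0=n_1]$, the upper term accounting for the vanishing probability of continuing past $\bar{t}$. Comparing $a$ and $b$ then yields the three regimes: if $a>b$ the law of large numbers forces $n_1>n_0$ with high probability so $\pwof\to0$; if $a<b$ symmetrically $\P[n_0>n_1]\to1$ so $\pwof\to1$; and if $a=b$ the laws of $n_1$ and $n_0$ coincide, whence $\P[n_1>n_0]=\P[n_0>n_1]=\tfrac12\bigl(1-\P[n_0=n_1]\bigr)\to\tfrac12$.

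For the expected-time claim I would start from $T\geq\bar{t}$ and write the excess as a tail sum,
\begin{equation*}
  \E[T\mid H_1,N,1]-\bar{t}=\sum_{s\geq\bar{t}}\P[T>s].
\end{equation*}
The event $\{T>s\}$ means the two counts stay equal at every step of $[\bar{t},s]$, equivalently $n_0(\tau)=n_1(\tau)$ for all such $\tau$; in particular $\{T>s\}\subseteq\{n_0(\bar{t})=n_1(\bar{t})\}$, so each term obeys $\P[T>s]\leq\P[n_0=n_1]\to0$ by the argument above. Fixing $s$ gives pointwise convergence $\P[T>s]\to0$, and the goal is to upgrade this to convergence of the whole sum by dominated convergence, i.e.\ to exhibit an $N$-independent summable bound $\P[T>s]\leq g(s)$ with $\sum_s g(s)<\infty$.

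The hard part will be precisely this last point: the naive bounds, namely $\P[T>s]\leq\P[n_0(\bar{t})=n_1(\bar{t})]$ (which does not decay in $s$) and $T\leq\max_i\tau_i$ (whose expectation grows with $N$), are each inadequate, so controlling the tail uniformly in $N$ is the crux. I would obtain the dominator by showing that maintaining a tie through time $s$ becomes only less likely as more SDMs are added, i.e.\ that $\P[T>s]$ is monotonically nonincreasing in the (odd) group size; bounding it by the smallest group size reduces $g(s)$ to the tail $1-\pi_{1|1}(s)-\pi_{0|1}(s)$ of a single-SDM decision time, which is summable exactly when the individual SDM has finite expected decision time. That integrability is the natural hypothesis under which the excess tail vanishes; moreover the regime $a\neq b$ is easier, since there $\P[T>s]$ decays geometrically in $N$, so it is really the symmetric case $a=b$ that forces the full monotonicity argument.
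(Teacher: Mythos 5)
Your treatment of the accuracy claim~\eqref{eq:AccFast} is correct and takes a genuinely different route from the paper. The paper expands $\beta_{1|1}(\bar{t},0,0)$ and $\beta_{0|1}(\bar{t},0,0)$ explicitly, splits the binomial sums at $N/2$, invokes Lemma~\ref{Lemm-lim-comb-sum}, and kills the tie probability via the Stirling-type bound ${2j \choose j}x^jy^j<(x+y-\bar{\epsilon})^{2j}$; the case $\poo(\bar{t})=\pzo(\bar{t})$ is handled there by an interchange of the limits $\xi\to0$ and $N\to\infty$ that is asserted rather than justified. Your argument --- condition on the Binomial$(N,a+b)$ number $m$ of voters at $\bar{t}$, bound the conditional tie probability by ${m \choose m/2}p^{m/2}(1-p)^{m/2}=O(1/\sqrt{m})$, then sandwich $\pwof(N)$ between $\P[n_0>n_1]$ and $\P[n_0>n_1]+\P[n_0=n_1]$ with the law of large numbers deciding the three regimes --- is cleaner, treats all three cases uniformly, and avoids both the half-binomial lemma and the delicate limit interchange. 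I would accept this part once the concentration step is written out.

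The expected-time claim~\eqref{eq:ExpFastest} is where the genuine gap lies, and you have located it precisely: pointwise convergence $\P[T>s]\to0$ must be upgraded to convergence of $\sum_{s\geq\bar{t}}\P[T>s]$, a step the paper itself simply asserts (it passes from $\P[T=\bar{t}]\to1$ directly to $\E[T]\to\bar{t}$ with no uniform-integrability argument). However, your proposed repair fails as stated. The monotonicity lemma --- that $\P[T>s]$ is nonincreasing in the odd group size --- is false: take $s=\bar{t}$ and $\poo(\bar{t})=\pzo(\bar{t})=0.45$; then $\P_{1}[T>\bar{t}]=0.1$ while $\P_{3}[T>\bar{t}]=(0.1)^3+6\,(0.45)^2(0.1)=0.1225>0.1$. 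Moreover the dominator it would yield, $1-\pi_{1|1}(s)-\pi_{0|1}(s)$, is summable only under the finite-expected-decision-time property, which is not among the hypotheses (Assumption~\ref{ass:1} grants only almost-sure decisions); and indeed without some such integrability the claim can fail outright, e.g.\ $\pzo\equiv 0$ and $1-\pi_{1|1}(s)\sim 1/\log s$ gives $\P[T>s]=(1-\pi_{1|1}(s))^N$ non-summable, hence $\E[T|H_1,N,q=1]=+\infty$ for every $N$. So this half of your proposal is a plan resting on a false lemma plus an unstated extra hypothesis; to complete it you need a different uniform-in-$N$ summable bound on $\P[T>s]$, or you must add the integrability assumption explicitly and establish the domination by other means.
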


\begin{proof}
  We start by observing that in the case where the fastest rule is applied,
  formulas in~\eqref{eq-Main} simplifies to
$$
\poo(t;N, q=1)=\beta_{1|1}(t,0,0), \qquad \text{for all } t\in \N.
$$
Now, since $\poo(t)=\pzo(t)=0$ for $t < \bar{t}$, it follows that 
$$
\poo(t;N,q=1)=\beta_{1|1}(t,0,0)=0,  \qquad t < \bar{t}.
$$
Moreover we have $\pi_{1|1}(\bar{t})=\poo(\bar{t})$ and
$\pi_{0|1}(\bar{t})=\pzo(\bar{t})$. According to the definition of the
probability $\beta_{1|1}(\bar{t}, 0,0)$, we write
\begin{equation*}
  \beta_{1|1}(\bar{t}, 0, 0) =\sum_{j=1}^{N} {N\choose j}  \poo^j(\bar{t})
  \left\{\sum_{i=0}^{m}{N-j \choose i} \pzo^i(\bar{t})\left(1-
      \poo(\bar{t})-\pzo(\bar{t}) \right)^{N-i-j}\right\}, 
\end{equation*}
where $m=\min\left\{j-1, N-j\right\}$, or equivalently
\begin{align}\label{eq:beta11}
  \beta_{1|1}(\bar{t}, 0, 0) &=\sum_{j=1}^{\lfloor N/2 \rfloor} {N\choose
    j}  \poo^j(\bar{t}) \left\{\sum_{i=0}^{j-1}{N-j \choose i}
    \pzo^i(\bar{t})\left(1- \poo(\bar{t})-\pzo(\bar{t})
    \right)^{N-i-j}\right\}\nonumber\\ 
  &\qquad\qquad+ \sum_{j=\lceil N/2 \rceil}^{N} {N\choose j}  \poo^j(\bar{t}) \left\{\sum_{i=0}^{N-j}{N-j \choose i} \pzo^i(\bar{t})\left(1- \poo(\bar{t})-\pzo(\bar{t}) \right)^{N-i-j}\right\}\nonumber\\
  &=\sum_{j=1}^{\lfloor N/2 \rfloor} {N\choose j}  \poo^j(\bar{t}) \left\{\sum_{i=0}^{j-1}{N-j \choose i} \pzo^i(\bar{t})\left(1- \poo(\bar{t})-\pzo(\bar{t}) \right)^{N-i-j}\right\}\nonumber\\
 &\qquad\qquad+ \sum_{j=\lceil N/2 \rceil}^{N} {N\choose j}  \poo^j(\bar{t})\left(1- \poo(\bar{t}) \right)^{N-j}.
\end{align}
An analogous expression for $\beta_{0|1}(\bar{t}, 0, 0)$ can be obtained by
exchanging the roles of $\pzo(\bar{t})$ and $\pzo(\bar{t})$ in
equation~\eqref{eq:beta11}. The rest of the proof is articulated as
follows. First, we prove that
\begin{equation}\label{eq:P1}
  \lim_{N\to \infty} \left(\poo(\bar{t}; N, q=1)+\pzo(\bar{t}; N, q=1)\right)=  \lim_{N\to \infty} \left(\beta_{1|1}(\bar{t}, 0, 0)+\beta_{0|1}(\bar{t}, 0, 0)\right)=1.
\end{equation}
This fact implies that equation~\eqref{eq:ExpFastest} holds and that, if
$\poo(\bar{t})=\pzo(\bar{t})$, then $\lim_{N\to \infty}\,
\pwof(N) = 1/2$. Indeed
\begin{align*}
  \lim_{N\to \infty}\E\left[T|H_j,N,q=1\right]&= \lim_{N\to \infty}
  \sum_{t=1}^{\infty}\, t(\pzj(t;N,q=1)+\pij(t;N,q=1))=\bar{t}.
\end{align*}
Moreover, if $\poo(\bar{t})=\pzo(\bar{t})$, then also
$(\beta_{1|1}(\bar{t}, 0, 0)=\beta_{0|1}(\bar{t}, 0, 0)$.

Second, we prove that $\poo(\bar{t})>\pzo(\bar{t})$ implies $\lim_{N\to
  \infty}\beta_{0|1}(\bar{t}, 0, 0)=0$. As a consequence, we have that
$\lim_{N\to \infty}\beta_{1|1}(\bar{t}, 0, 0)=1$ or equivalently that
$\lim_{N\to \infty}\, \pwof(N) = 0$.

To show equation~\eqref{eq:P1}, we consider the event \emph{the group is
  not giving the decision at time $\bar{t}$}. We aim to show that the
probability of this event goes to zero as $N\to\infty$. Indeed we have that
$$
\P\left[T\neq \bar{t}\right]=\P\left[T>\bar{t}\right]=1-
\left(\poo(\bar{t},N)+\pzo(\bar{t},N)\right), 
$$
and, hence, $\P\left[T>\bar{t}\right]=0$ implies
$\poo(\bar{t},N)+\pzo(\bar{t},N)=1$.  Observe that
$$
\P\left[T>\bar{t}\right]=\sum_{j=0}^{\lfloor \frac{N}{2}\rfloor}{N \choose
  2j}{2j \choose j} \poi(\bar{t})^j p_{0|i}(\bar{t})^j
\Big(1-\poi(\bar{t})-p_{0|i}(\bar{t})\Big)^{N-2j}.
$$
For simplicity of notation, let us denote $x:=\pzo(\bar{t})$ and
$y:=\pzo(\bar{t})$. We distinguish two cases, (i) $x\neq y$ and (ii) $x=y$.
\smallskip  

\emph{Case $x \neq y$.}  We show that in this case there exists
$\bar{\epsilon}>0$, depending only on $x$ and $y$, such that
\begin{equation}
  \label{eq:ine}
  {2j \choose j}x^j y^j < \left(x+y-\bar{\epsilon}\right)^{2j}, \quad
  \text{for all } j\geq 1. 
\end{equation}
First of all observe that, since ${2j \choose j}x^j y^j$ is just one term
of the Newton binomial expansion of $ \left(x+y\right)^{2j}$, we know that
$ {2j \choose j}x^j y^j < \left(x+y\right)^{2j}$ for all $j\in \N$.  Define
$\epsilon(j):=x+y-{2j \choose j}^{1/2j}\sqrt{xy}$ and observe that proving
equation~\eqref{eq:ine} is equivalent to proving $ \lim_{j\to \infty}
\epsilon(j)>0 $. Indeed if $\lim_{j\to \infty} \epsilon(j)>0$, then
$\inf_{j\in \N} \epsilon(j)>0$ and thereby we can define
$\bar{\epsilon}:=\inf_{j\in \N} \epsilon(j)$. To prove the inequality
$\lim_{j\to \infty} \epsilon(j)>0$, let us compute $\lim_{j\to \infty}{2j
  \choose j}^{1/(2j)}$. By applying Stirling's formula we can write
$$
  \lim_{j\to \infty}{2j \choose j}^{1/(2j)}=\lim_{j\to \infty}
  \left(\frac{\sqrt{2\pi2j}\left(\frac{2j}{e}\right)^{2j}}{2\pi j
      \left(\frac{j}{e}\right)^{2j}}\right)^{1/(2j)}=\left(\sqrt{\frac{1}{\pi
        j^2}}2^{2j}\right)^{1/(2j)}=2 
$$
and, in turn, $\lim_{j \to \infty} \epsilon(j)=x+y-2\sqrt{xy}$.  Clearly,
if $x\neq y$, then $x+y-2\sqrt{xy}>0$. Defining $\bar{\epsilon}:=\inf_{j\in
  \N} \epsilon(j)$, we can write
\begin{align*}
  \lim_{N\to \infty}  \sum_{j=0}^{\lfloor \frac{N}{2}\rfloor}{N \choose 2j}{2j \choose j}x^j y^j \left(1-x-y\right)^{N-2j} & \leq   \lim_{N\to \infty}  \sum_{j=0}^{\lfloor \frac{N}{2}\rfloor} {N \choose 2j}  \left(x+y-\bar{\epsilon}\right)^{2j} \left(1-x-y\right)^{N-2j} \\
  &\leq   \lim_{N\to \infty}  \sum_{j=0}^N {N \choose j}  \left(x+y-\bar{\epsilon}\right)^{j} \left(1-x-y\right)^{N-j} \\
  &= \lim_{N\to \infty} \left(1-\bar{\epsilon}\right)^{N}=0,
  \end{align*}
  which implies also $\lim_{N\to \infty}   \P\left[T>\bar{t}\right]=0$. 
\smallskip
  
\emph{Case $x = y$.}  To study this case, let $y=x+\xi$ and let $\xi \to
0$. In this case, the probability of the decision time exceeding $\bar{t}$
becomes
$$
f(x,N,\xi)= \P\left[T>\bar{t}\right]=\sum_{j=0}^{\lfloor
  \frac{N}{2}\rfloor}{N \choose 2j}{2j \choose j}x^{j}(x+\xi)^{j}
\left(1-2x-\xi\right)^{N-2j}.
$$
Consider $\lim_{\xi \to 0} f(x,N,\xi)$. We have that
\begin{align*}
  \lim_{\xi \to 0} f(x,N,\xi)= \sum_{j=0}^{\lfloor \frac{N}{2}\rfloor}{N \choose 2j}{2j \choose j}x^{2j} \left(1-2x\right)^{N-2j}< \sum_{j=0}^{\lfloor \frac{N}{2}\rfloor}{N \choose 2j} 2^{2j}x^{2j} \left(1-2x\right)^{N-2j} <1,
\end{align*}
where the first inequality follows from ${2j \choose j}<\sum_{j=0}^{2j} {2j
  \choose j} = 2^{2j}$, and the second inequality follows from
$\sum_{j=0}^{\lfloor \frac{N}{2}\rfloor}{N \choose 2j}
(2x)^{2j}<\sum_{j=0}^{N}{N \choose 2j} (2x)^{2j}=1$.  So $\lim_{\xi \to 0}
f(x,N,\xi)$ exists, and since we know that also $\lim_{N \to \infty}
f(x,N,\xi)$ exists, the limits are exchangeable in $\lim_{N \to \infty}
\lim_{\xi \to 0} f(x,N,\xi)$ and
\begin{equation*}
  \lim_{N \to \infty}  \lim_{\xi \to 0} f(x,N,\epsilon) = \lim_{\xi \to 0}
  \lim_{N \to \infty}  f(x,N,\xi) = 0. 
\end{equation*}
This concludes the proof of equation~\eqref{eq:P1}.
  
Assume now that $\poo(\bar{t})>\pzo(\bar{t})$. We distinguish between the
case where $\poo(\bar{t})>\frac{1}{2}$ and the case where
$\pzo(\bar{t})<\poo(\bar{t})\leq \frac{1}{2}$.
 
If $\poo(\bar{t}) > \frac{1}{2}$, then Lemma~\ref{Lemm-lim-comb-sum}
implies
\begin{equation*}
  \lim_{N \to \infty} \sum_{j=\lceil N/2 \rceil}^N {N \choose j}
  p^j_{1|1}(\bar{t}) \left(1-\poo(\bar{t}) \right)^{N-j} =1, 
\end{equation*}
and, since $\lim_{N\to \infty}\beta_{1|1}(\bar{t}, 0, 0)>\lim_{N \to
  \infty} \sum_{j=\lceil N/2 \rceil}^N {N \choose j} p^j_{1|1}(\bar{t})
\left(1-\poo(\bar{t}) \right)^{N-j}$, we have also that $\lim_{N\to
  \infty}\beta_{1|1}(\bar{t}, 0, 0)=1$.

The case $\pzo(\bar{t}) < \poo(\bar{t}) < \frac{1}{2}$ is more involved. We
will see that in this case $\lim_{N\to \infty}\beta_{0|1}(\bar{t}, 0,
0)=0$.  We start by observing that, from Lemma~\ref{Lemm-lim-comb-sum},
$$\lim_{N \to \infty} \sum_{j=\lceil \frac{N}{2} \rceil}^N {N \choose j}
p^j_{1|1}(\bar{t})\left([ 1-\poo(\bar{t}) \right)^{N-j}=0,$$ 
and in turn
\begin{equation*}
  \lim_{N \to \infty} \beta_{1|1}(\bar{t}, 0, 0) = \lim_{N \to \infty}
  \sum_{j=1}^{\lfloor \frac{N}{2} \rfloor} {N \choose j}p^j_{1|1}(\bar{t}) 
  \times \biggl(\sum_{i=0}^{j=1} {N-j \choose i} p^i_{0|1}(\bar{t}) \biggl[
  1-\poo(\bar{t})-\pzo(\bar{t}) \biggr]^{N-j-i}  \biggr). 
\end{equation*}  
The above expression can be written as follows
\begin{align*}
  \lim_{N \to \infty} \beta_{1|1}(\bar{t}, 0, 0)&=\lim_{N \to \infty}
  \sum_{h=1}^{N-2}\biggl( \sum_{j=\lfloor \frac{h}{2} \rfloor+1}^{h} {N
    \choose j} {N-j \choose h-j} p^{h-j}_{0|1}(\bar{t})
  p^{j}_{1|1}(\bar{t}) \biggr) \biggl( 1-\left( \pzo(\bar{t}) \poo(\bar{t})
  \right) \biggr)^{N-h}\\
  &=\lim_{N \to \infty} \sum_{h=1}^{N-2} {N \choose h}
  \sum_{j=\floor{\frac{h}{2}}+1}^h {h \choose j} p^{h-j}_{1|1}(\bar{t})
  p^j_{0|1}(\bar{t}) \biggl( 1-\poo(\bar{t})-\pzo(\bar{t}) \biggr)^{N-h}
\end{align*}
where, for obtaining the second equality we used the fact ${N \choose
  j}{N-j \choose h-j}= {N \choose h} {h \choose j}$. Similarly,
\begin{align*} 
  \lim_{N \to \infty} \beta_{0|1}(\bar{t}, 0, 0)&=\lim_{N \to \infty}
  \sum_{h=1}^{N-2} {N \choose h} \sum_{j=\floor{\frac{h}{2}}+1}^h {h
    \choose j} p^{h-j}_{0|1}(\bar{t}) p^j_{1|1}(\bar{t}) \biggl(
  1-\poo(\bar{t})-\pzo(\bar{t}) \biggr)^{N-h}.
\end{align*}
We prove now that $\lim_{N \to \infty} \beta_{0|1}(\bar{t}, 0, 0)=0$. To do
so we will show that there exists $\bar{\epsilon}$ depending only on
$\pzo(\bar{t})$ and $\poo(\bar{t})$ such that
\begin{equation*}
  \sum_{j=\floor{\frac{h}{2}}+1}^h {h \choose j} p^{h-j}_{0|1}(\bar{t})
  p^j_{1|1}(\bar{t}) < \biggl( \pzo(\bar{t}) + \poo(\bar{t}) - \bar{\epsilon}
  \biggr)^h.
\end{equation*}
To do so, let
\begin{equation*}
  \epsilon(h) = \pzo(\bar{t}) + \poo(\bar{t}) -
  \sqrt[h]{\sum_{j=\floor{\frac{h}{2}}+1}^{h} {h \choose j}
    p^{h-j}_{0|1}(\bar{t}) p^j_{1|1}(\bar{t})} .  
\end{equation*}
Because $h$ is bounded, one can see that $\epsilon(h)>0$ as the sum inside
the root is always smaller than $(\pzo(\bar{t})+\poo(\bar{t}))^h$. Also
\begin{align*}
\lim_{h \to \infty} \epsilon(h) &=  \biggl(  \pzo(\bar{t}) + \poo(\bar{t}) \biggr) \left[ 1- \frac{\sqrt[h]{\sum_{j=\floor{\frac{h}{2}}+1}^{h} {h \choose j} p^{h-j}_{0|1}(\bar{t}) p^j_{1|1}(\bar{t})}}{\pzo(\bar{t})+\poo(\bar{t})} \right]\\
&=  \biggl(  \pzo(\bar{t}) + \poo(\bar{t}) \biggr) \left[1- \sqrt[h]{\frac{\sum_{j=\floor{\frac{h}{2}}+1}^{h} {h \choose j} p^{h-j}_{0|1}(\bar{t}) p^j_{1|1}(\bar{t})}{\left(\pzo(\bar{t})+\poo(\bar{t})\right)^h}} \right] = \pzo(\bar{t})+\poo(\bar{t}),
\end{align*}
as by Lemma~\ref{Lemm-lim-comb-sum}, 
$$
\lim_{h \to \infty} \frac{\sum_{j=\floor{\frac{h}{2}}}^{h} {h \choose j} p^{h-j}_{0|1}(\bar{t}) p^j_{1|1}(\bar{t})}{\left(\pzo(\bar{t})+\poo(\bar{t})\right)^h}=0.
$$
Since by assumption, $\pzo(\bar{t})+\poo(\bar{t})>0$, we have that $\inf_{h
  \in \N} \epsilon(h)>0$. By letting $\bar{\epsilon}:=\inf_{h \in \N}
\epsilon(h)$, we conclude that
\begin{align*}
\lim_{N \to \infty} \beta_{0|1}(\bar{t},0,0) &\leq \sum_{h=1}^{N-2} {N \choose h} \biggl( \poo(\bar{t})+\pzo(\bar{t}) - \bar{\epsilon} \biggr) \biggl( 1-\poo(\bar{t})-\pzo(\bar{t}) \biggr)^{N-h}\\
&\leq \sum_{h=0}^{N}  {N \choose h} \biggl( \poo(\bar{t})+\pzo(\bar{t}) - \bar{\epsilon} \biggr) \biggl( 1-\poo(\bar{t})-\pzo(\bar{t}) \biggr)^{N-h} = (1-\bar{\epsilon})^N = 0.
\end{align*}
This concludes the proof.
\end{proof}

\begin{remark}
  The earliest possible decision time $\bar{t}$ defined in~\eqref{eq:bart}
  is the best performance that the fastest rule can achieve in terms of
  number of iterations required to provide the final decision.  \oprocend
\end{remark}

\subsection{The majority rule for varying values of $N$}\label{sec:FixedQ-majority}
We consider now the \emph{majority} rule, i.e., the counting rule with
threshold $q=\floor{N/2}+1$. We start with the following result about the
accuracy.  Recall that $\pwo$ is the probability of wrong decision by a
single SDM and that $\pwom(N)$ is the probability of wrong decision by a
group of $N$ SDMs implementing the majority rule (assuming $H_1$ is the
correct hypothesis).

\begin{proposition}[Accuracy under the majority rule]
  \label{prop:Acc_Maj}%
  Consider the $q$ out of $N$ SDA algorithm under Assumption
  \ref{ass:1}. Assume $q=\lfloor N/2 \rfloor +1$, i.e., the \emph{majority}
  rule is adopted. Without loss of generality, assume $H_1$ is the correct
  hypothesis. Then the probability of error satisfies
  \begin{equation}\label{eq:pmMaj}
    \pwom(N) = \sum_{j=\floor{N/2}+1}^{N} {N
      \choose j} \pwo^j \left(1-\pwo\right)^{N-j}. 
  \end{equation}
  According to~\eqref{eq:pmMaj}, the following characterization follows:
  \begin{enumerate}
  \item if $0\leq\pwo<1/2$, then $\pwom(N)$ is a monotonic
    decreasing function of $N$ that approaches $0$ asymptotically, that is,
    \begin{equation*} 
      \pwom(N)>\pwom(N+2)\quad
      \text{and}\quad \lim_{N\to \infty}\,
      \pwom(N)=0; 
    \end{equation*}
  \item if $1/2 < \pwo \leq 1$, then $\pwom(N)$ is a
    monotonic increasing function of $N$ that approaches $1$
    asymptotically, that is,
    \begin{equation*}\label{eq:pmMaj2}
      \pwom(N)<\pwom(N+2)\quad
      \text{and}\quad \lim_{N\to \infty}\,
      \pwom(N)=1;
    \end{equation*}
  \item if $\pwo=1/2$, then $\pwom(N)=1/2$;

  \item if $\pwo <1/4$, then
    \begin{equation}\label{eq:pmMaj4}
      \pwom(N)= {N \choose \lceil\frac{N}{2}\rceil}
      \,\pwo^{\lceil\frac{N}{2}\rceil} 
      +
      o\left(\pwo^{\lceil\frac{N}{2}\rceil}\right)
      =
      \sqrt{N/(2\pi)}\, 
      (4\pwo)^{\lceil\frac{N}{2}\rceil}
      +
      o\left( (4\pwo)^{\lceil\frac{N}{2}\rceil}\right)
      .
    \end{equation}
  \end{enumerate}
\end{proposition}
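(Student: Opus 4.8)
The plan is to first collapse the free-response, sequential majority rule to a single static majority count, derive the tail formula~\eqref{eq:pmMaj}, and then read off the four claims from standard facts about binomial tails. \emph{Reduction to~\eqref{eq:pmMaj}.} Under Assumption~\ref{ass:1} the size $N$ is odd and every SDM decides almost surely, so $\pndo=0$ and eventually all $N$ SDMs cast a vote, each independently declaring $H_0$ with probability $\pwo$ and $H_1$ with probability $1-\pwo$. Since $q=\floor{N/2}+1=\ceil{N/2}$ and $N$ is odd, the two counters can never simultaneously reach $q$ (their sum never exceeds $N<2q$), while at the final tally exactly one of them attains a value $\geq q$. I would make precise the claim that arrival order is irrelevant as follows: if $H_1$ receives the final majority, then at the first instant its counter reaches $q=(N+1)/2$ the opposing counter is at most $N-q=(N-1)/2<q$, so condition $(i)$ of~\eqref{eq:FinDec} fires for $H_1$, whereas condition $(ii)$ can never fire for $H_0$ because the $H_0$-counter never reaches $q$; the situation is symmetric if $H_0$ is the majority. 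Hence the group always decides for whichever hypothesis collects the majority of the $N$ individual votes, and it errs precisely when at least $\ceil{N/2}$ SDMs vote $H_0$. As the number $B_N$ of $H_0$-votes is a $\mathrm{Bin}(N,\pwo)$ random variable, this yields~\eqref{eq:pmMaj}.

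\emph{Claims (i)--(iii).} For $\pwo=1/2$ the binomial is symmetric about $N/2$ and, $N$ being odd, has no central term, so the two complementary tails are equal and $\pwom(N)=1/2$, giving~(iii). For the monotonicity in~(i)--(ii) I would compare consecutive odd sizes by writing the number of $H_0$-votes among $N+2$ SDMs as $B_N+X_1+X_2$, with $X_1,X_2$ independent $\mathrm{Bin}(1,\pwo)$ independent of $B_N$, and conditioning on $B_N$. With $m=\ceil{N/2}$ one has $\pwom(N)=\P[B_N\geq m]$ and $\pwom(N+2)=\P[B_N+X_1+X_2\geq m+1]$; only the cases $B_N\geq m+1$, $B_N=m$ and $B_N=m-1$ can satisfy the latter, and collecting the three contributions (using $\binom{N}{m}=\binom{N}{m-1}$ and $N-m=m-1$) gives the closed form
\begin{equation*}
  \pwom(N+2)-\pwom(N)=\binom{N}{m}\bigl(\pwo(1-\pwo)\bigr)^{m}\,(2\pwo-1).
\end{equation*}
The sign of this expression is exactly that of $2\pwo-1$, proving the strict decrease for $\pwo<1/2$ and the strict increase for $\pwo>1/2$. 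Being monotone and bounded, the sequence converges; to identify the limits as $0$ and $1$ respectively I would invoke the weak law of large numbers (or a Chernoff bound) for $B_N/N\to\pwo$, so that $\P[B_N\geq(N+1)/2]\to 0$ when $\pwo<1/2$ and $\P[B_N\geq(N+1)/2]\to 1$ when $\pwo>1/2$.

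\emph{Claim (iv).} Here I would isolate in $\pwom(N)=\sum_{j=m}^{N}\binom{N}{j}\pwo^{j}(1-\pwo)^{N-j}$ the lowest-order term $j=m$, namely $\binom{N}{m}\pwo^{m}(1-\pwo)^{N-m}$; expanding $(1-\pwo)^{N-m}=1+O(\pwo)$ and noting that every term with $j>m$ is $O(\pwo^{m+1})$ shows that the whole remainder is $o(\pwo^{m})$ as $\pwo\to 0$, which is the first equality. For the second equality I would substitute Stirling's approximation for the central coefficient, $\binom{N}{\ceil{N/2}}\sim 2^{N}\sqrt{2/(\pi N)}$, and use $2^{N}\pwo^{\ceil{N/2}}=\tfrac12(4\pwo)^{\ceil{N/2}}$ to rewrite the leading term as a multiple of $(4\pwo)^{\ceil{N/2}}$; the hypothesis $\pwo<1/4$ guarantees $4\pwo<1$, so this leading term decays exponentially in $N$, consistently with~(i).

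The main obstacle is the reduction in the first paragraph: although intuitively clear, one must argue carefully that the interrogation/free-response timing genuinely collapses to a static majority vote, i.e.\ that with $N$ odd the losing side never reaches the threshold $q$ and hence the fusion center can only fire for the eventual majority. Once~\eqref{eq:pmMaj} is in hand the remaining work is essentially bookkeeping: the conditioning identity for $\pwom(N+2)-\pwom(N)$ and the order-of-magnitude control of the binomial tail. I would also flag that the explicit prefactor in the $(4\pwo)^{\ceil{N/2}}$ form should be read as the leading Stirling asymptotic rather than an exact identity.
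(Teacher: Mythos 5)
Your proof is correct, and while the reduction to the binomial tail~\eqref{eq:pmMaj} and the treatment of claim~(iv) essentially mirror the paper, your handling of claims (i)--(iii) takes a genuinely different and arguably cleaner route. The paper derives~\eqref{eq:pmMaj} from the identity $\sum_{s=1}^t \pzo(s;N,q)=\sum_{j\geq\floor{N/2}+1}\binom{N}{j}\pi_{0|1}(t)^j(1-\pi_{0|1}(t))^{N-j}$ and then outsources both the monotonicity and the limits to Lemma~\ref{Lemm-lim-comb-sum}: the limits come from a term-by-term ratio bound $\overline{S}(N;c,x)<\ceil{N/2}(2x/c)(\alpha\beta)^{\floor{N/2}}$ with $\alpha\beta<1$, and the monotonicity from a calculus argument that computes $\partial\Delta/\partial x$ for $\Delta(N,x)=f(N+2,x)-f(N,x)$, identifies the roots of $4\tfrac{N+2}{N+1}x(1-x)-1$, and deduces the sign of $\Delta$ on $(0,1/2)$. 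Your coupling $B_{N+2}=B_N+X_1+X_2$ replaces that entire derivative analysis with the exact identity $\pwom(N+2)-\pwom(N)=\binom{N}{m}\bigl(\pwo(1-\pwo)\bigr)^{m}(2\pwo-1)$, which I have verified and which is strictly stronger than the sign statement the paper's lemma provides (it quantifies the increment); your appeal to the law of large numbers for the limits is more standard but loses the explicit geometric rate that the paper's bound gives. Two minor remarks. First, at the boundary values $\pwo=0$ and $\pwo=1$ your difference formula vanishes, so the strict inequalities in (i)--(ii) fail there; this defect is already present in the proposition and in Lemma~\ref{Lemm-lim-comb-sum} as stated, so it is not a gap in your argument. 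Second, your Stirling asymptotic $\binom{N}{\ceil{N/2}}\sim 2^N\sqrt{2/(\pi N)}$ is the correct one; the paper's proof asserts $\sqrt{2N/\pi}\,2^N$ and its displayed prefactor $\sqrt{N/(2\pi)}$ in~\eqref{eq:pmMaj4} is accordingly off by a factor of $N$ (it should read $1/\sqrt{2\pi N}$), so your version actually corrects the paper on this point.
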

\begin{proof}
  We start by observing that
  $$
  \sum_{s=1}^t \pzo(s; N, q=\floor{N/2}+1)=\sum_{j= \floor{N/2}+1}^{N} {N
    \choose j} \pi_{0|1}(t)^j \left(1-\pi_{0|1}(t)\right)^{N-j}.
  $$
  Since $ \pwom(N)=\sum_{s=1}^{\infty} \pzo(s; N,
  q=\floor{N/2}+1)$, taking the limit for $N\to \infty$ in the above
  expression leads to
  $$
  \pwom(N)=\sum_{j=\lceil \frac{N}{2}\rceil}^{N} {N \choose j}
  \pwo^j \left(1-\pwo\right)^{N-j}.
  $$
  Facts (i), (ii), (iii) follow directly from Lemma~\ref{Lemm-lim-comb-sum}
  in Appendix~\ref{appendix:combinatorial-sums} applied to
  equation~\eqref{eq:pmMaj}.  Equation~\eqref{eq:pmMaj4} is a consequence
  of the Taylor expansion of~\eqref{eq:pmMaj}:
  \begin{align*}
    \sum_{j=\lceil \frac{N}{2} \rceil}^{N} {N \choose j}
    \pwo^j(1-\pwo)^{N-j}&=\sum_{j=\lceil \frac{N}{2} \rceil}^{N} {N \choose
      j} \pwo^j (1-(N-j)\pwo+ o(\pwo))\\
    &={N \choose \lceil\frac{N}{2}\rceil} \,\pwo^{\lceil\frac{N}{2}\rceil} +
    o\left(\pwo^{\lceil\frac{N}{2}\rceil}\right).
  \end{align*}
  Finally, Stirling's Formula implies $\lim_{N\to \infty}{N \choose
    \lceil\frac{N}{2}\rceil} = \sqrt{2N/\pi}\,2^N$ and, in turn, the final
  expansion follows from $2^N=4^{\ceil{N/2}}/2$.
\end{proof}


We discuss now the expected time required by the collective SDA algorithm
to provide a decision when the \emph{majority} rule is adopted. Our
analysis is based again on Assumption~\ref{ass:1} and on the assumption
that $H_1$ is the correct hypothesis. We distinguish four cases based on
different properties that the probabilities of wrong and correct decision
of the single SDM might have:
\begin{enumerate}
\item[(A1)] the probability of correct decision is greater than the
  probability of wrong decision, i.e., $\pco > \pwo$;
\item[(A2)] the probability of correct decision is equal to the probability
  of wrong decision, i.e., $\pco=\pwo=1/2$ and there exist $t_0$ and $t_1$
  such that $\pi_{0|1}(t_0)=1/2$ and $\pi_{1|1}(t_1)=1/2$;
\item[(A3)] the probability of correct decision is equal to the probability
  of wrong decision, i.e., $\pco=\pwo=1/2$ and there exists $t_1$ such that
  $\pi_{1|1}(t_1)=1/2$, while $\pi_{0|1}(t)<1/2$ for all $t\in \N$ and
  $\lim_{t\to \infty}\pi_{0|1}(t)=1/2$;
\item[(A4)] the probability of correct decision is equal to the probability
  of wrong decision, i.e., $\pco=\pwo=1/2$, and $\pi_{0|1}(t)<1/2$,
  $\pi_{1|1}(t)<1/2$ for all $t\in \N$ and $\lim_{t\to
    \infty}\pi_{0|1}=\lim_{t\to \infty} \pi_{1|1}(t)=1/2$.
\end{enumerate}

Note that, since Assumption~\ref{ass:1} implies $\pco+\pwo=1$, the
probability of correct decision in case (A1) satisfies $\pco>1/2$.  Hence,
in case (A1) and under Assumption~\ref{ass:1}, we define
$t_{<\frac{1}{2}}:=\max\setdef{t\in\N}{\pi_{1|1}(t)<1/2}$ and
$t_{>\frac{1}{2}}:=\min\setdef{t\in\N}{\pi_{1|1}(t)>1/2}$.

\begin{proposition}[Expected time under the \emph{majority} rule]
  \label{prop:ET_Maj}
  Consider the $q$ out of $N$ SDA algorithm under Assumption
  \ref{ass:1}. Assume $q=\lfloor N/2 \rfloor +1$, that is, adopt the
  \emph{majority} rule. Without loss of generality, assume $H_1$ is the
  correct hypothesis. Define the SDM properties (A1)-(A4) and the decision
  times $t_0$, $t_1$, $t_{<\frac{1}{2}}$ and $t_{> \frac{1}{2}}$ as
  above. Then the expected decision time satisfies
  \begin{equation*}
    \lim_{N \to \infty} \E\big[T| H_1,N, q=\ceil{N/2} \big]=
    \begin{cases}
      \displaystyle\frac{t_{<\frac{1}{2}}+t_{>\frac{1}{2}}+1}{2},%
      \quad &\text{if the SDM has the property (A1),}\\[.35em]
      \displaystyle \frac{t_1+t_0}{2}, \quad%
      \quad &\text{if the SDM has the property (A2),} \\[.35em]
      \displaystyle +\infty, \quad 
      \quad &\text{if the SDM has the property (A3) or (A4).}
    \end{cases}
  \end{equation*}
\end{proposition}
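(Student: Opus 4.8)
The plan is to evaluate the limit through the tail-sum representation of the expected time. Under Assumption~\ref{ass:1} with $q=\floor{N/2}+1=\ceil{N/2}$, Proposition~\ref{pro:Gasd} gives $\pndj(N,q)=0$, so that $\E\left[T|H_1,N,q\right]=\sum_{t=0}^{\infty}\P[T>t]$ with $\P[T>t]=1-\P[T\le t]$. The first step is to record a closed form for $\P[T\le t]$. Exactly as in the proof of Proposition~\ref{prop:Acc_Maj}, the majority rule has decided in favour of $H_1$ (resp.\ $H_0$) by time $t$ if and only if $Count_{1}(t)\ge\ceil{N/2}$ (resp.\ $Count_{0}(t)\ge\ceil{N/2}$); since $N$ is odd these two events are disjoint, and each counter is binomially distributed with parameter $\pi_{1|1}(t)$, resp.\ $\pi_{0|1}(t)$. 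Writing $B_N(p):=\sum_{j=\ceil{N/2}}^{N}\binom{N}{j}p^{j}(1-p)^{N-j}$, this yields $\P[T\le t]=B_N\big(\pi_{1|1}(t)\big)+B_N\big(\pi_{0|1}(t)\big)$ and hence $\P[T>t]=1-B_N\big(\pi_{1|1}(t)\big)-B_N\big(\pi_{0|1}(t)\big)$.

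The second step is the pointwise limit in $N$. Lemma~\ref{Lemm-lim-comb-sum} gives $\lim_{N\to\infty}B_N(p)$ equal to $0$, $\tfrac12$, or $1$ according as $p<\tfrac12$, $p=\tfrac12$, or $p>\tfrac12$. Each map $t\mapsto\pi_{i|1}(t)$ is nondecreasing with $\pi_{1|1}(t)\to\pco$ and $\pi_{0|1}(t)\to\pwo$, and under Assumption~\ref{ass:1} one has $\pco+\pwo=1$, so the position of each $\pi_{i|1}(t)$ relative to $\tfrac12$ is governed by the crossing times defining the four cases. Reading off $\lim_N\P[T>t]$ for every $t$: in (A1) it is $1$ for $t\le t_{<\frac12}$, equals $\tfrac12$ on the plateau $t_{<\frac12}<t<t_{>\frac12}$ where $\pi_{1|1}(t)=\tfrac12$, and is $0$ for $t\ge t_{>\frac12}$; in (A2) it is $1$ for $t<\min\{t_0,t_1\}$, equals $\tfrac12$ for $\min\{t_0,t_1\}\le t<\max\{t_0,t_1\}$, and is $0$ afterwards; in (A3) it is $1$ for $t<t_1$ and $\tfrac12$ for $t\ge t_1$; in (A4) it is $1$ for every $t$. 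Summing these limits over $t$ produces $\tfrac12(t_{<\frac12}+t_{>\frac12}+1)$ in (A1) and $\tfrac12(t_0+t_1)$ in (A2), while in (A3)--(A4) the summand stays at least $\tfrac12$ for all large $t$ and the series diverges.

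It remains to justify interchanging $\lim_N$ with $\sum_t$. For (A3)--(A4) this is immediate: Fatou's lemma for the counting measure gives $\liminf_N\sum_t\P[T>t]\ge\sum_t\liminf_N\P[T>t]=+\infty$, so the limit is $+\infty$. Case (A2) is also painless, because there $\pi_{1|1}(t_1)=\pco$ and $\pi_{0|1}(t_0)=\pwo$ force the single SDM to decide by time $\max\{t_0,t_1\}$ almost surely; consequently $\P[T>t]=0$ for $t\ge\max\{t_0,t_1\}$ and every $N$, and the tail-sum is a finite sum to which the pointwise limit applies term by term.

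The genuine obstacle is case (A1), where the single SDM may decide arbitrarily late and the interchange can fail outright: I would first note that, without further control, a heavy late-decision tail makes $\E\left[T|H_1,N,q\right]=+\infty$ for every $N$, so the finite value can hold only under the single SDM's \emph{finite expected decision time} property (Proposition~\ref{pro:Gfedt}). Granting that property, it suffices to show $\lim_N\sum_{t\ge t_{>\frac12}}\P[T>t]=0$, the finitely many earlier terms being handled by the pointwise limit. For the tail I would combine two estimates: a Hoeffding bound on $\P[Count_{1}(t)\le(N-1)/2]$ gives $\P[T>t]\le e^{-cN}$ uniformly for $t\ge t_{>\frac12}$, with $c>0$ depending on $\pco-\tfrac12$, while the union bound $\P[T>t]\le N\,u(t)$, where $u(t)=1-\pi_{1|1}(t)-\pi_{0|1}(t)$ is the probability that a single SDM (with decision time $\tau$) is still undecided at $t$, so that $\sum_t u(t)=\E[\tau]<\infty$, supplies the decay in $t$. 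Balancing the two bounds and using $\sum_t u(t)<\infty$ yields a dominated-convergence (equivalently, uniform-integrability) argument that closes the interchange and gives the stated finite limit. This last tail estimate, and the role of the finite-expected-time hypothesis in it, is the part I expect to require the most care.
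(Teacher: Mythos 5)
Your proposal is correct and reaches the paper's conclusions by a mildly different route. The paper works directly with the per-time decision probabilities: it writes $\sum_{s=1}^{t}\poo(s;N,q)$ as the upper half-binomial sum $\overline{S}(N;1,\pi_{1|1}(t))$, applies Lemma~\ref{Lemm-lim-comb-sum} to get the pointwise limit in $N$ for each fixed $t$, differences to recover $\lim_{N}\poo(t;N,q)$, and then formally passes the limit inside $\sum_t t\,\bigl(\pzo(t;N,q)+\poo(t;N,q)\bigr)$. Your tail-sum formulation $\E[T]=\sum_{t\ge 0}\P[T>t]$ with $\P[T\le t]=B_N(\pi_{1|1}(t))+B_N(\pi_{0|1}(t))$ is the same computation packaged more cleanly, and it isolates exactly the step the paper leaves implicit: the interchange of $\lim_N$ with the infinite sum over $t$. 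Your handling of (A2) (the tail sum is a finite sum because every SDM has decided by $\max\{t_0,t_1\}$, so the group has too) and of (A3)--(A4) (Fatou) is complete.

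Where you go beyond the paper is case (A1), and your concern is legitimate: Assumption~\ref{ass:1} only gives almost-sure decisions, and if the single SDM's no-decision probability $u(t)=1-\pi_{1|1}(t)-\pi_{0|1}(t)$ decays slower than any polynomial (e.g.\ like $1/\log t$), then $\P[T>t]\ge\binom{N}{\ceil{N/2}}\,u(t)^{\ceil{N/2}}(1-u(t))^{\floor{N/2}}$ forces $\E[T|H_1,N,q]=+\infty$ for every $N$, so the stated finite limit genuinely needs the finite-expected-decision-time hypothesis you invoke; the paper's proof never addresses this. One caveat on your closing tail estimate: the naive balance of $M e^{-cN}$ against $N\sum_{t>M}u(t)$ does not close when the tail $\sum_{t>M}u(t)$ decays arbitrarily slowly (it would force $M$ to grow faster than $e^{cN}$). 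A fix that does close is to condition on the number $U$ of undecided SDMs at time $t$: on $\{1\le U\le \delta N\}$ a Chernoff bound on $Count_1(t)$ still yields a factor $e^{-c'N}$, so $\P[T>t]\le N u(t)\,e^{-c'N}+\P[U>\delta N]$, while $\P[U>\delta N]\le (e\,u(t)/\delta)^{\delta N}$ is, for all but finitely many $t$, at most a constant less than one raised to the power $N$ times $u(t)$; summing over $t\ge t_{>\frac{1}{2}}$ and using $\sum_t u(t)<\infty$ then gives $\lim_N\sum_{t\ge t_{>\frac{1}{2}}}\P[T>t]=0$ as required.
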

\begin{proof}
  We start by proving the equality for case (A1). Since, in this case we
  are assuming $\pco > \pwo$, the definitions of $t_{<\frac{1}{2}}$ and
  $t_{> \frac{1}{2}}$ implies that $\pi_{1|1}(t)=1/2$ for all
  $t_{<\frac{1}{2}} < t <t_{> \frac{1}{2}}$. Observe that
  \begin{equation*}
    \sum_{s=1}^t \poo(t;N,q=\floor{N/2}+1)=\sum_{h=\floor{\frac{N}{2}}}^{N}
    {N \choose h} \pi^h_{1|1}(t) \biggl( 1- \pi_{1|1}(t) \biggr)^{N-h}. 
  \end{equation*}
  Hence Lemma~\ref{Lemm-lim-comb-sum} implies
  $$
  \lim_{N \to \infty}   \sum_{s=1}^t \poo(t;N,q=\floor{N/2}+1)=
  \begin{cases}
    0, \quad & \text{if } \; t\leq t_{<\frac{1}{2}}, \\
    1, \quad & \text{if } \; t \geq t_{> \frac{1}{2}}, \\
    \frac{1}{2}, \quad & \text{if } t_{<\frac{1}{2}}< t < t_{>\frac{1}{2}}\;,
  \end{cases}
  $$
  and, in turn, that
  $$
  \lim_{N \to \infty}  \poo(t;N,q=\floor{N/2}+1)=
  \begin{cases}
    1/2, \quad & \text{if } \; t= t_{<\frac{1}{2}}+1\,\,\,\,
    \text{and}\,\,\,\,t=t_{> \frac{1}{2}} , \\
    \,\,0, \quad & \text{otherwise}.
  \end{cases}
  $$
  It follows
  \begin{align*}
    \lim_{N \to \infty} \E\left[T| H_1,N, q=\floor{N/2}+1 \right]&= \lim_{N
      \to \infty} t
    \left(\pzo(t;N,q=\floor{N/2}+1)+\poo(t;N,q=\floor{N/2}+1)\right)\\
    &=\frac{1}{2}\left(t_{<\frac{1}{2}}+1+t_{> \frac{1}{2}}
    \right).
  \end{align*}
  This concludes the proof of the equality for case (A1).
  
  We consider now the case (A2). Reasoning similarly to the previous case
  we have that
  $$
  \lim_{N\to \infty} \poo(t_1; N, q=\floor{N/2}+1)=1/2 \qquad \text{and}
  \qquad \lim_{N\to \infty} \pzo(t_0; N, q=\floor{N/2}+1)=1/2,
  $$
  from which it easily follows that $ \lim_{N \to \infty} \E\left[T| H_1,N,
    q=\floor{N/2}+1 \right]=\frac{1}{2}\left(t_0+t_1\right)$.
 
  For case (A3), it suffices to note the following implication of
  Lemma~\ref{Lemm-lim-comb-sum}:
  if, for a given $i\in\{0,1\}$, we have $\pi_{i|1}(t)< 1/2$ for all $t\in
  \N$, then $\lim_{N\to \infty}\pio(t; N, q=\floor{N/2}+1)=0$ for all $t\in
  \N$.  The analysis of the case (A4) is analogous to that of case (A3).
\end{proof}

\begin{remark}
  The cases where $\pwo>\pco$ and where there exists $t_0$ such that
  $\pi_{0|1}(t_0)=1/2$ while $\pi_{1|1}(t)< 1/2$ for all $t\in \N$
  and $\lim_{t \to \infty} \pi_{1|1}(t)=1/2$, can be analyzed similarly to
  the cases (A1) and (A3).  Moreover, the most recurrent situation in
  applications is the one where there exists a time instant $t$ such
  that $\pi_{1|1}(t)< 1/2$ and $\pi_{1|1}(t+1)>1/2$, which is
  equivalent to the above case (A1) with $t_{>
    \frac{1}{2}}=t_{< \frac{1}{2}} +1$. In this situation we
  trivially have $ \lim_{N \to \infty} \E\left[T| H_1,N, q=\lceil N/2
    \rceil \right]=t_{> \frac{1}{2}}$.  \oprocend
\end{remark}

\subsection{Fixed $N$ and varying $q$}\label{sec:FixedN}

We start with a simple result characterizing the expected decision time. 
\begin{proposition}\label{prop:ETq}
  Given a group of $N$ SDMs running the \emph{$q$ out of $N$ SDA}, for
  $j\in \{0,1\}$,
  \begin{equation*}
    \E[T|H_j, N, q=1]\leq \E[T|H_j, N, q=2] \leq \dots \leq \E[T|H_j, N, q=N].    
  \end{equation*}
\end{proposition}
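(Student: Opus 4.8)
The plan is to prove the stronger, sample-path statement that the decision time is monotone in $q$ on a common probability space, and then pass to conditional expectations while respecting the $+\infty$ convention in~\eqref{eq:ETN}. Throughout I condition on $H_j$.

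First I would set up a single coupling: fix one realization of the $N$ independent decision processes, which produces for each SDM $i$ a local decision in $\{0,1\}$ at a local (possibly infinite) time. From this data \emph{alone} one builds the cumulative counters $Count_0(t)$ and $Count_1(t)$ \emph{without reference to $q$}, simply by counting how many SDMs have reported $0$ (resp.\ $1$) by time $t$. For each threshold $q$ let $T_q$ be the first time $t$ at which the stopping condition~\eqref{eq:FinDec} is met by these counters, with $T_q:=\infty$ if it is never met. Since, prior to this first hitting time, no global decision has been issued and hence the running counters of the actual algorithm are never frozen, $T_q$ coincides with the algorithm's true decision time under threshold $q$ for this realization.

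The key observation is that the stopping condition~\eqref{eq:FinDec} depends on $q$ only through the requirement $Count_1(t)\ge q$ (resp.\ $Count_0(t)\ge q$); the strict comparison between $Count_1(t)$ and $Count_0(t)$ is independent of $q$. Consequently, if the condition for $q+1$ holds at some time $t$, then $Count_i(t)\ge q+1\ge q$ while the same strict comparison persists, so the condition for $q$ holds at $t$ as well. Taking first hitting times gives the pointwise inequality $T_q \le T_{q+1}$ on every sample path, and in particular the nesting $\{T_q=\infty\}\subseteq\{T_{q+1}=\infty\}$, whence $\pndj(N,q)\le \pndj(N,q+1)$.

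It then remains to take conditional expectations, split according to the convention in~\eqref{eq:ETN}. If $\pndj(N,q+1)>0$, the right-hand side $\E[T|H_j,N,q+1]$ equals $+\infty$ by definition and the inequality is immediate. If instead $\pndj(N,q+1)=0$, the nesting forces $\pndj(N,q)=0$ too, both conditional expectations are honest sums over almost-surely finite decision times, and monotonicity of expectation applied to the a.s.\ inequality $T_q \le T_{q+1}$ yields $\E[T|H_j,N,q] \le \E[T|H_j,N,q+1]$. Chaining this for $q=1,\dots,N-1$ produces the asserted chain. The only delicate point is the bookkeeping in the coupling step, namely verifying that the frozen running counters of the actual algorithm agree with the $q$-independent cumulative counters up to and including the first hitting time; once that is in hand, the comparison of stopping rules across thresholds and the treatment of the $+\infty$ convention are routine.
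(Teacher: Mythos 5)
Your argument is correct. Note that the paper states Proposition~\ref{prop:ETq} without any proof, so there is no authorial argument to compare against; your pathwise coupling supplies the missing justification. The two essential observations are both sound: first, the running counters $Count_0(t)$ and $Count_1(t)$ are simply cumulative tallies of the local decisions received and do not depend on $q$ up to (and including) the first time the stopping test fires, so for a fixed realization of the $N$ local decision processes one may define all the hitting times $T_1,\dots,T_N$ on the same sample path; second, the stopping test in~\eqref{eq:FinDec} with threshold $q+1$ implies the test with threshold $q$ at the same instant, since the strict comparison between the counters is $q$-free and $Count_i(t)\ge q+1$ implies $Count_i(t)\ge q$. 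This gives $T_q\le T_{q+1}$ pointwise, hence $\pndj(N,q)\le\pndj(N,q+1)$, and your case split on whether $\pndj(N,q+1)$ vanishes correctly respects the $+\infty$ convention of~\eqref{eq:ETN}. As a side benefit, the same nesting of no-decision events also yields the monotonicity of $\pndj(N,q)$ in $q$ that the paper asserts separately for $q\ge\floor{N/2}+1$ (and in fact for all $q$).
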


The above proposition states that the expected number of iterations required to provide a decision constitutes a nondecreasing sequence for increasing value of $q$. Similar monotonicity results hold true also for $\pcj(N,q)$, $\pwj(N,q)$, $\pndj(N,q)$ even though restricted only to $\lfloor N/2 \rfloor+1 \leq q \leq N$. 

\begin{proposition}
  Given a group of $N$ SDMs running the \emph{$q$ out of $N$ SDA}, for
  $j\in \{0,1\}$,
  \begin{align*}
    \pcj(N,q=\floor{N/2}+1) \geq  \pcj(N,q=\floor{N/2} +2) &\geq \dots \geq \pcj(N,q=N),\\
    \pwj(N,q=\floor{N/2}+1) \geq \pwj(N,q=\floor{N/2}  +2) &\geq \dots \geq \pwj(N,q=N),\\
    \pndj(N,q=\floor{N/2}+1) \leq \pndj(N,q=\floor{N/2} +2 )&\leq \dots
    \leq \pndj(N,q=N).
  \end{align*}
\end{proposition}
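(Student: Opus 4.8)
The plan is to exploit the defining feature of the regime $\lfloor N/2 \rfloor + 1 \leq q \leq N$, namely that $q > N/2$, which forces the group decision to depend only on the \emph{total} vote counts and not on the order in which votes arrive. First I would observe that, since $Count_0(t) + Count_1(t) \leq N < 2q$ at every time $t$, the two counters can never both reach the threshold $q$; consequently the fusion rule~\eqref{eq:FinDec} collapses to the simple statement that the group decides $H_i$ precisely at the first time $Count_i$ attains the value $q$, and provides no decision if neither counter ever reaches $q$. In particular there is no canceling situation to track, which is exactly why the argument is confined to this range of $q$.

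Next I would pass to the final counts. Fix the true hypothesis $H_j$ and let $C_i$ denote the total number of SDMs that ever decide in favor of $H_i$ (so $C_i = \lim_{t\to\infty} Count_i(t)$), noting $C_0 + C_1 \leq N$. By the reduction above, the outcome of the \emph{$q$ out of $N$ SDA} depends on the realization only through $C_0$ and $C_1$: the group decides $H_i$ if and only if $C_i \geq q$. Summing over $t$ as in~\eqref{eq:pc/pwN} therefore gives
\begin{equation*}
  \pcj(N,q) = \P[C_j \geq q \mid H_j], \qquad \pwj(N,q) = \P[C_i \geq q \mid H_j] \ (i \neq j).
\end{equation*}
Because the SDMs are independent and homogeneous, each one decides $H_i$ with a fixed total probability, so $C_i$ is binomially distributed and $\P[C_i \geq q \mid H_j]$ is simply its survival function; this recovers, for $q = \lfloor N/2\rfloor+1$, the expression~\eqref{eq:pmMaj} already obtained for the majority rule.

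The monotonicity statements then follow immediately. Since $\{C_i \geq q+1\} \subseteq \{C_i \geq q\}$, both $\pcj(N,\cdot)$ and $\pwj(N,\cdot)$ are non-increasing as $q$ runs from $\lfloor N/2\rfloor+1$ to $N$, which yields the first two chains. For the no-decision probability I would use that the events $\{C_0 \geq q\}$ and $\{C_1 \geq q\}$ are disjoint in this regime (again because $C_0 + C_1 \leq N < 2q$), so that by~\eqref{eq:noDecN} we have $\pndj(N,q) = 1 - \P[C_0 \geq q \mid H_j] - \P[C_1 \geq q \mid H_j]$; as $q$ increases, both subtracted survival functions decrease, so $\pndj(N,\cdot)$ is non-decreasing, giving the third chain.

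The only genuinely delicate point is the reduction in the first paragraph: one must argue carefully that for $q > N/2$ the identity of the winning hypothesis, and the very existence of a decision, are insensitive both to the timing of the individual decisions and to which counter crosses $q$ first, precisely because at most one counter can ever reach $q$. Once this timing-independence is established, the remainder is just monotonicity of a survival function and requires no further estimates. I would also remark that this reduction fails for $q \leq \lfloor N/2\rfloor$, where both counters can cross $q$, canceling situations occur, and the outcome genuinely depends on arrival order; this is exactly why the proposition restricts the monotonicity claims to $q \geq \lfloor N/2\rfloor + 1$.
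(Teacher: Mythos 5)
Your argument is correct and complete. Note that the paper actually states this proposition without supplying a proof, so there is no authorial argument to compare against; what you have written would serve as that missing proof. The key reduction is sound: for $q\geq\floor{N/2}+1$ one has $2q>N\geq Count_0(t)+Count_1(t)$, so at most one counter can ever reach $q$, the tie-breaking inequalities in~\eqref{eq:FinDec} are automatically satisfied whenever a counter crosses the threshold, and since each counter is integer-valued, nondecreasing, and attains its limit $C_i$ at the (finite) maximum of the decision times of the SDMs voting for $H_i$, the group decides $H_i$ if and only if $C_i\geq q$. This is exactly the structure the paper exploits implicitly elsewhere in this regime: the explicit formula~\eqref{eq:q>N/2} and the derivation of~\eqref{eq:pmMaj} in Proposition~\ref{prop:Acc_Maj} both rest on the same observation that the outcome depends only on how many SDMs eventually vote for $H_i$, yielding a binomial tail in the total per-SDM decision probability. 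Given the reduction, the three chains follow from the inclusion $\{C_i\geq q+1\}\subseteq\{C_i\geq q\}$ together with~\eqref{eq:pc/pwN} and~\eqref{eq:noDecN}, and the disjointness of $\{C_0\geq q\}$ and $\{C_1\geq q\}$ (which you correctly invoke for the no-decision chain, though it is not strictly needed since $\pndj=1-\pcj-\pwj$ and both subtracted terms decrease). You are also right to flag that the argument breaks down for $q\leq\floor{N/2}$, where canceling situations make the outcome order-dependent; the paper leaves that range as a conjecture supported only numerically.
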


We believe that similar monotonic results hold true also for $1 \leq q \leq
\floor{N/2}$. In particular, here is our conjecture: if $N$ is odd, the
single SDM has the \emph{almost-sure} decision and the single SDM is more
likely to provide the correct decision than the wrong decision, that is,
$\pcj+\pwj=1$ and $\pcj>\pwj$, then
\begin{align*}
  \pcj(N,q=1)\leq  \pcj(N,q=2) &\leq \dots \leq \pcj(N,q=\floor{N/2}+1),\\
  \pwj(N,q=1)\geq \pwj(N,q=2) &\geq \dots \geq \pwj(N,q=\floor{N/2}+1).
\end{align*}
These chains of inequalities are numerically verified in some examples in
Section~\ref{SSec-sim-nocomm}.

\section{Numerical analysis}\label{SSec-sim-nocomm}

The goal of this section is to numerically analyze the models and methods
described in previous sections.  In all the examples, we assume that the
sequential binary test run by each SDMs is the classical sequential
probability ratio test (SPRT) developed in 1943 by Abraham Wald.  To fix
some notation, we start by briefly reviewing the SPRT.  Let $X$ be a random
variable with distribution $f(x; \theta)$ and assume the goal is to test
the null hypothesis $H_0: \theta=\theta_0$ against the alternative
hypothesis $H_1: \theta=\theta_1$. For $i\in\until{N}$, the $i$-th SDM
takes the observations $x_i(1), x_i(2), x(3), \ldots,$ which are assumed to
be independent of each other and from the observations taken by all the
other SDMs.  The log-likelihood ratio associated to the observation
$x_i(t)$ is
\begin{equation}\label{eq:log-lik}
  \lambda_i(t)=\log \frac{f(x_i(t), \theta_1)}{f(x_i(t), \theta_0)}.
\end{equation}
Accordingly, let $\Lambda_i(t)=\sum_{h=1}^t\lambda_i(h)$ denote the sum of
the log-likelihoods up to time instant $t$. The $i$-th SDM continues to
sample as long as $\eta_0 < \Lambda_i(t) < \eta_1$, where $\eta_0$ and
$\eta_1$ are two pre-assigned thresholds; instead sampling is stopped the
first time this inequality is violated. If $\Lambda_i(t)<\eta_0$, then the
$i$-th SDM decides for $\theta=\theta_0$. If $\Lambda_i(t)>\eta_1$, then
the $i$-th SDM decides for $\theta=\theta_1$.

\newcommand{\pmisd}{\subscr{p}{misdetection}}
\newcommand{\pfals}{\subscr{p}{false alarm}}

To guarantee the \emph{homogeneity property} we assume that all the SDMs
have the same thresholds $\eta_0$ and $\eta_1$.  The threshold values are
related to the accuracy of the SPRT as described in the classic Wald's
method~\cite{AW:45}.  We shortly review this method next.  Assume that, for
the single SDM, we want to set the thresholds $\eta_0$ and $\eta_1$ in such
a way that the probabilities of misdetection (saying $H_0$ when $H_1$ is
correct, i.e., $\P[\text{say } H_0 | H_1]$) and of false alarm (saying
$H_1$ when $H_0$ is correct, i.e., $\P[\text{say } H_1|H_0]$) are equal to
some pre-assigned values $\pmisd$ and $\pfals$.  Wald proved that the
inequalities $\P[\text{say } H_0\,|\,H_1] \leq \pmisd$ and $\P[\text{say }
H_1\,|\,H_0] \leq \pfals$ are achieved when $\eta_0$ and $\eta_1$ satisfy
$\eta_0 \leq \log \frac{\pmisd}{1-\pfals}$ and $\eta_1 \geq \log
\frac{1-\pmisd}{\pfals}$.  As customary, we adopt the equality sign in
these inequalities for the design of $\eta_0$ and $\eta_1$. Specifically,
in all our examples we assume that $\pmisd=\pfals=0.1$ and, in turn, that
$\eta_1=-\eta_0=\log 9$.


We provide numerical results for observations described by both discrete
and continuous random variables.  In case $X$ is a discrete random
variable, we assume that $f(x; \theta)$ is a binomial distribution
\begin{equation}\label{eq:bin}
  f(x;\theta)=
  \begin{cases}
    {n \choose x} \theta^x (1-\theta)^{n-x}, \quad & \text{if }
    x\in\left\{0,1,\ldots,n\right\},\\
    0, & \text{otherwise,}
  \end{cases}
\end{equation}
where $n$ is a positive integer number. In case $X$ is a continuous random
variable, we assume that $f(x; \theta)$ is a Gaussian distribution with
mean $\theta$ and variance $\sigma^2$
\begin{equation}\label{eq:gaus}
  f(x; \theta)=\frac{1}{\sqrt{2\pi \sigma^2}}
  e^{-{(x-\theta)^2}/{2\sigma^2}}.
\end{equation}

The key ingredient required for the applicability of
Propositions~\ref{Prop-Numq} and~\ref{prop:AlphaBeta} is the knowledge of
the probabilities $ \left\{ \pzz (t), \poz(t)\right\}_{t\in \N}$ and
$\left\{\pzo(t), \poo(t)\right\}_{t\in \N}$.  Given thresholds $\eta_0$ and
$\eta_1$, there probabilities can be computed according to the method
described in the Appendix~\ref{subsec:AccTimeDiscrete} (respectively
Appendix~\ref{subsec:AccTimeCont}) for $X$ discrete (respectively $X$
continuous) random variable.

We provide three sets of numerical results. Specifically, in
Example~\ref{ex:1} we emphasize the tradeoff between accuracy and expected
decision time as a function of the number of SDMs. In Example~\ref{ex:2} we
concentrate on the monotonic behaviors that the \emph{$q$ out of $N$} SDA
algorithm exhibits both when $N$ is fixed and $q$ varies and when $q$ is
fixed and $N$ varies. In Example~\ref{ex:3} we compare the \emph{fastest}
and the \emph{majority} rule.  Finally, Section~\ref{sec:DMCPR} discusses
drawing connections between the observations in Example~\ref{ex:3} and the
cognitive psychology presentation introduced in Section~\ref{ssection-cog}.

\begin{example}[Tradeoff between accuracy and expected decision
  time]\label{ex:1}
  This example emphasizes the tradeoff between accuracy and expected
  decision time as a function of the number of SDMs. We do that for the
  \emph{fastest} and the \emph{majority} rules.  We obtain our numerical
  results for odd sizes of group of SDMs ranging from $1$ to $61$. In all
  our numerical examples, we compute the values of the thresholds $\eta_0$
  and $\eta_1$ according to Wald's method by posing $\pmisd=\pfals=0.1$
  and, therefore, $\eta_1=\log 9$ and $\eta_0=-\log 9$.

  For a binomial distribution $f(x; \theta)$ as in~\eqref{eq:bin}, we
  provide our numerical results under the following conditions: we set
  $n=5$; we run our computations for three different pairs $(\theta_0,
  \theta_1)$; precisely we assume that $\theta_0=0.5-\epsilon$ and
  $\theta_1=0.5+\epsilon$ where $\epsilon \in
  \left\{0.02,0.05,0.08\right\}$; and $H_1: \theta=\theta_1$ is always the
  correct hypothesis.
  For any pair $(\theta_0, \theta_1)$ we perform the following three
  actions in order
  \begin{enumerate}
  \item we compute the probabilities $\left\{\pzo(t), \poo(t)\right\}_{t\in
      \N}$ according to the method described in
    Appendix~\ref{subsec:AccTimeDiscrete};
  \item we compute the probabilities $\left\{\pzo(t; N,q), \poo(t;
      N,q)\right\}_{t\in \N}$ for $q=1$ and $q=\floor{N/2}+1$ according to
    the formulas reported in Proposition~\ref{Prop-Numq};
  \item we compute probability of wrong decision and expected time for the
    group of SDMs exploiting the formulas
    \begin{equation*}
      \pwo(N,q)=\sum_{t=1}^{\infty}\pzo(t; N,q) %
      \quad\text{and}\quad%
      \E[T|H_1,N,q]=\sum_{t=1}^{\infty}(\pzo(t;N,q)+\poo(t;N,q))t.
    \end{equation*}
  \end{enumerate}

  According to Remark~\ref{rem:ASD}, since we consider only odd numbers $N$
  of SDMs, since $q \leq \lceil N/2 \rceil$ and since each SDM running the
  SPRT has the \emph{almost-sure decisions} property, then
  $\pwo(N,q)+\pco(N,q)=1$. In other words, the probability of no-decision
  is equal to $0$ and, hence, the accuracy of the SDA algorithms is
  characterized only by the probability of wrong decision and the
  probability of correct decision. In our analysis we select to compute the
  probability of wrong decision.

  For a Gaussian distribution $f(x; \theta, \sigma)$, we obtain our
  numerical results under the following conditions: the two hypothesis are
  $H_0: \theta=0$ and $H_1: \theta=1$; we run our computations for three
  different values of $\sigma$; precisely $\sigma \in
  \left\{0.5,1,2\right\}$; and $H_1: \theta=1$ is always the correct
  hypothesis.
  To obtain $\pwo(N,q)$ and $\E[T|H_1,N,q]$ for a given value of $\sigma$,
  we proceed similarly to the previous case with the only difference that
  $\left\{\pzo(t), \poo(t)\right\}_{t\in \N}$ are computed according to the
  procedure described in Appendix~\ref{subsec:AccTimeCont}.

  The results obtained for the \emph{fastest} rule are depicted in
  Figure~\ref{fig:fastest}, while the results obtained for the
  \emph{majority} rule are reported in Figure~\ref{fig:majority}.

  \begin{figure}[h!]
    \begin{center}
      {\includegraphics[width=.495\textwidth]{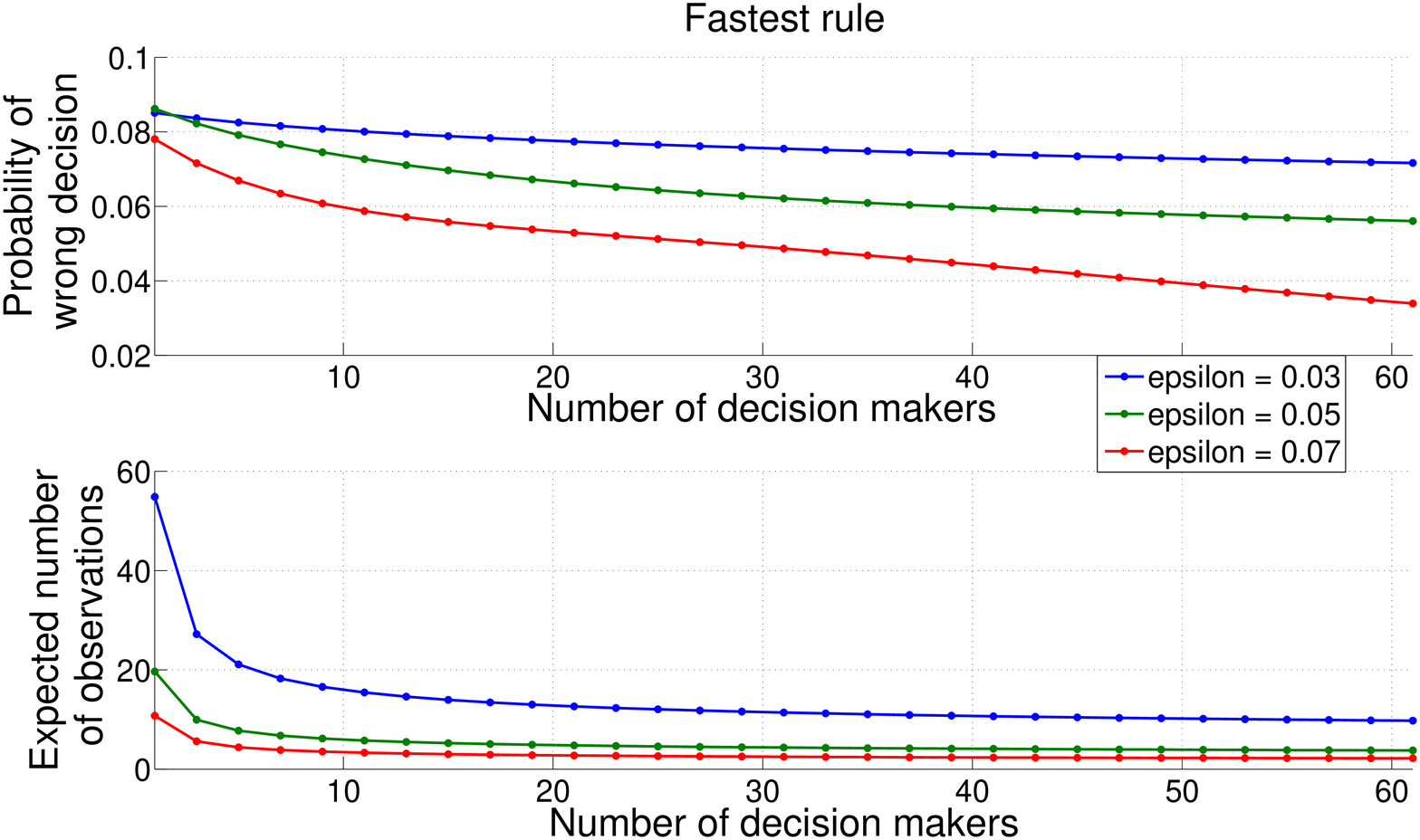}}
      {\includegraphics[width=.495\textwidth]{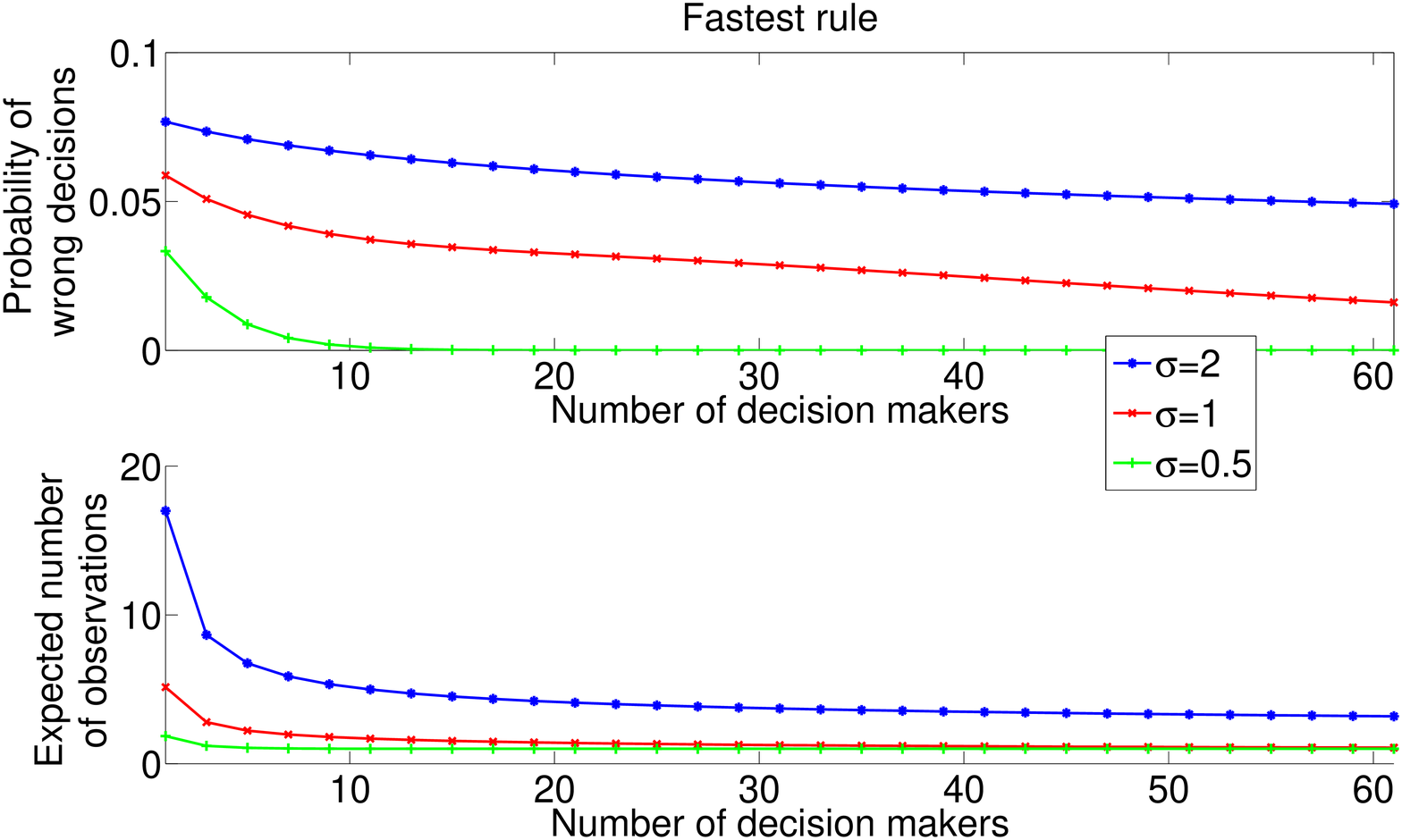}}
      \label{fig:fastest}
      \caption{Behavior of the probability of wrong decision and of the
        expected number of iterations required to provide a decision as the
        number of SDMs increases when the \emph{fastest} rule is
        adopted. In Figure (a) we consider the binomial distribution, in
        Figure (b) the Gaussian distribution.}
    \end{center}
  \end{figure}
  
  \begin{figure}[h!]
    \begin{center}
      {\includegraphics[width=.495\textwidth]{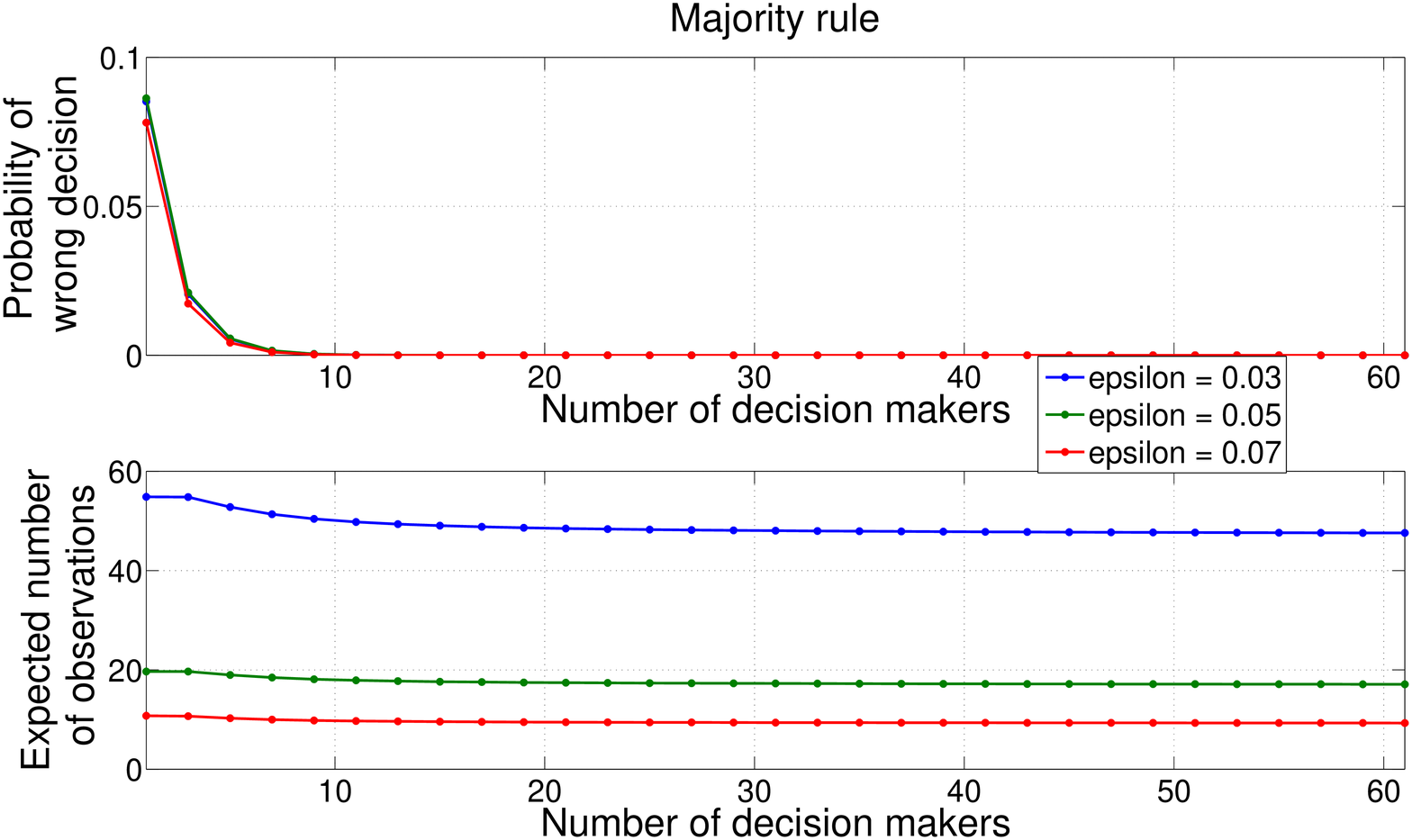}}%
      {\includegraphics[width=.495\textwidth]{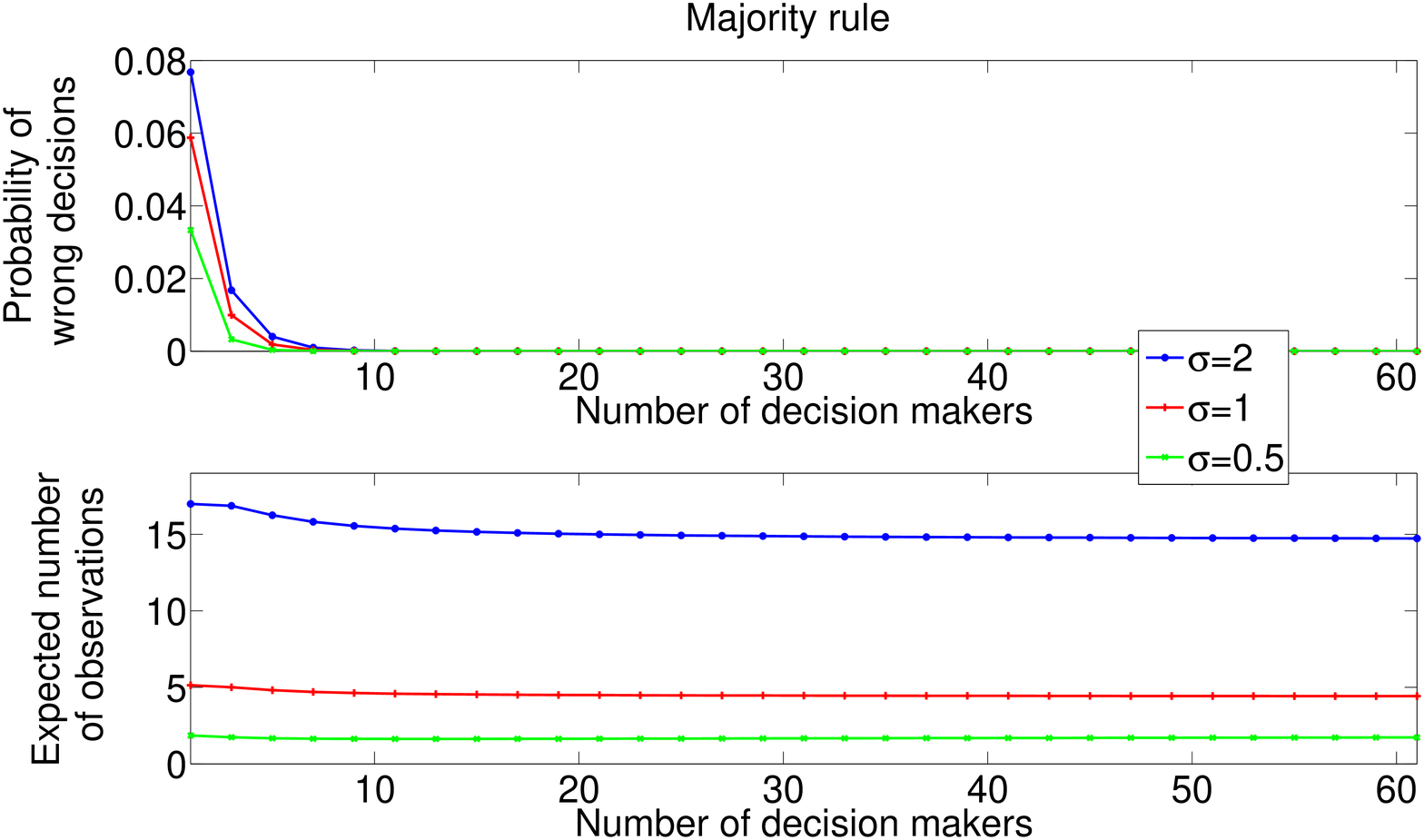}}
      \caption{Behavior of the probability of wrong decision and of the
        expected number of iterations required to provide a decision as the
        number of SDMs increases when the \emph{majority} rule is
        adopted. In Figure (a) we consider the binomial distribution, in
        Figure (b) the Gaussian distribution.}
      \label{fig:majority}
    \end{center}
  \end{figure}
  
  Some remarks are now in order. We start with the \emph{fastest} rule. A
  better understanding of the plots in Figure~\ref{fig:fastest} can be
  gained by specifying the values of the earliest possible decision time
  $\bar{t}$ defined in~\eqref{eq:bart} and of the probabilities
  $\poo(\bar{t})$ and $\pzo(\bar{t})$. In our numerical analysis, for each
  pair $(\theta_0, \theta_1)$ considered and for both discrete and
  continuous measurements $X$, we had $\bar{t}=1$ and
  $\poo(\bar{t})>\pzo(\bar{t})$. As expected from
  Proposition~\ref{prop:A-ET-Fastest}, we can see that the \emph{fastest}
  rule significantly reduces the expected number of iterations required to
  provide a decision. Indeed, as $N$ increases, the expected decision time
  $\E[T|H_1,N,q=1]$ tends to $1$. Moreover, notice that $\pwof(N)$
  approaches $0$; this is in accordance with equation~\eqref{eq:AccFast}.

  As far as the \emph{majority} rule is concerned, the results established
  in Proposition~\ref{prop:Acc_Maj} and in Proposition~\ref{prop:ET_Maj}
  are confirmed by the plots in Figure~\ref{fig:majority}. Indeed, since
  for all the pairs $(\theta_0, \theta_1)$ we have considered, we had
  $\pwo<1/2$, we can see that, as expected from
  Proposition~\ref{prop:Acc_Maj}, the probability of wrong decision goes to
  $0$ exponentially fast and monotonically as a function of the size of the
  group of the SDMs. Regarding the expected time, in all the cases, the
  expected decision time $\E[T|H_1,N,q=\floor{N/2}+1]$ quickly reaches a
  constant value. We numerically verified that these constant values
  corresponded to the values predicted by the results reported in
  Proposition~\ref{prop:ET_Maj}.
\end{example}

\begin{example}[Monotonic behavior]\label{ex:2}
  In this example, we analyze the performance of the general \emph{$q$ out
    of $N$} aggregation rule, as the number of SDMs $N$ is varied, and as
  the aggregation rule itself is varied.  We obtained our numerical results
  for odd values of $N$ ranging from $1$ to $35$ and for values of $q$
  comprised between $1$ and $\floor{N/2}+1$. Again we set the thresholds
  $\eta_0$ and $\eta_1$ equal to $\log(-9)$ and $\log 9$, respectively. In
  this example we consider only the Gaussian distribution with
  $\sigma=1$. The results obtained are depicted in
  Figure~\ref{fig:Pc_qoutN+ET_qoutN}, where the following monotonic
  behaviors appear evident:
  \begin{enumerate}
  \item for fixed $N$ and increasing $q$, both the probability of correct
    decision and the decision time increases;
  \item for fixed $q$ and increasing $N$, the probability of correct
    decision increases while the decision time decreases.
  \end{enumerate}
  The fact that the decision time increases for fixed $N$ and increasing
  $q$ has been established in Proposition~\ref{prop:ETq}.  The fact that
  the probability of correct decision increases for fixed $N$ and
  increasing $q$ validates the conjecture formulated at the end of
  Section~\ref{sec:FixedN}.
  \begin{figure}[h!]
    \begin{center}
      \includegraphics[width=.495\textwidth]{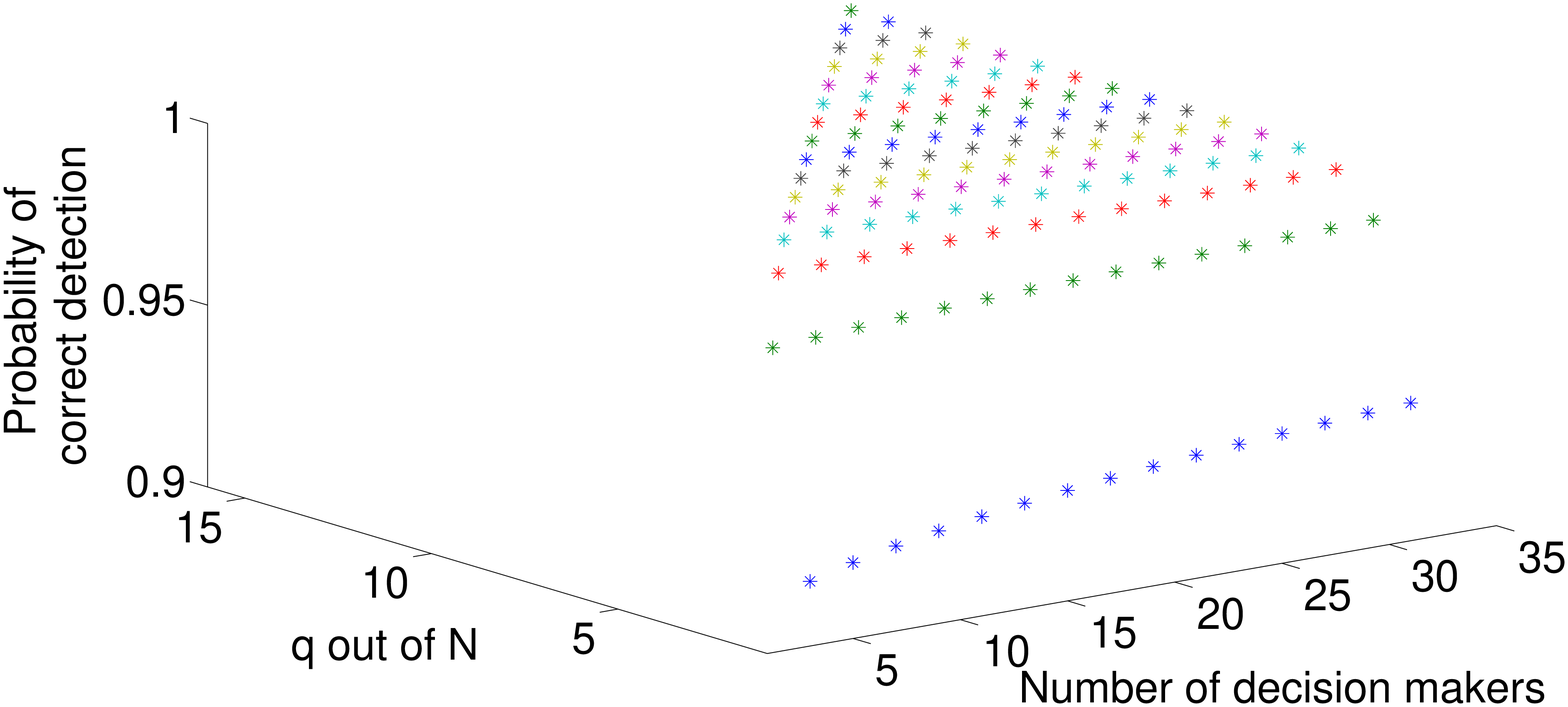}
      \includegraphics[width=.495\textwidth]{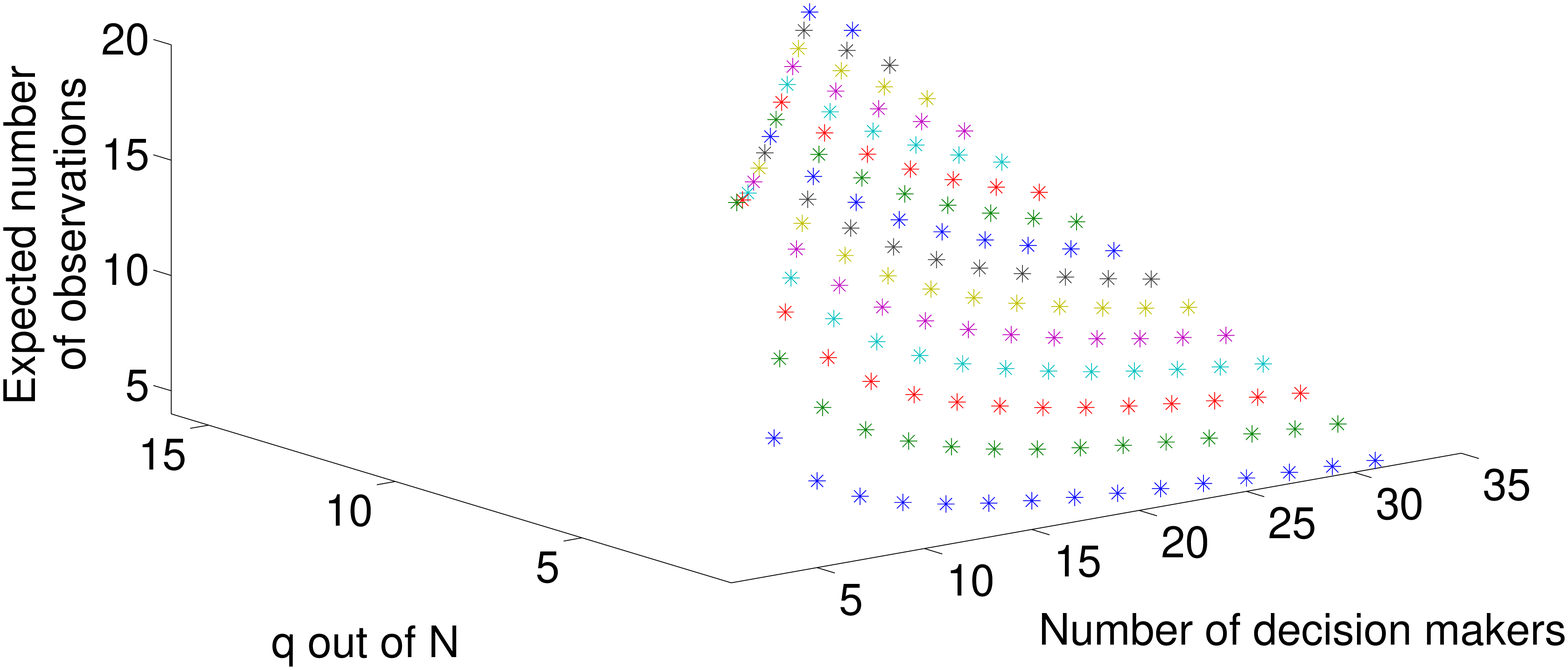}
      \caption{Probability of correct detection (left figure) and expected
        decision time (right figure) for the \emph{$q$ out of $N$} rule,
        plotted as a function of network size $N$ and accuracy threshold
        $q$.} \label{fig:Pc_qoutN+ET_qoutN}
    \end{center}
  \end{figure}
\end{example}

\begin{example}[Fastest versus majority, at fixed group
  accuracy]\label{ex:3}
  As we noted earlier, Figures~\ref{fig:fastest}-\ref{fig:majority} show
  that the \emph{majority} rule increases remarkably the accuracy of the
  group, while the \emph{fastest} rule decreases remarkably the expected
  number of iteration for the SDA to reach a decision. It is therefore
  reasonable to pose the following question: if the local accuracies of the
  SDMs were set so that the accuracy of the group is the same for both
  the \emph{fastest} and the \emph{majority} fusion rule, which of the two
  rules requires a smaller number of observations to give a decision. That
  is, at equal accuracy, which of the two rules is optimal as far as
  decision time is concerned.

  In order to answer this question, we use a bisection on the local SDM
  accuracies. We apply the numerical methods presented in
  Proposition~\ref{Prop-Numq} to find the proper local thresholds that set
  the accuracy of the group to the desired value $\pwo$. Different local
  accuracies are obtained for different fusion rules and this evaluation
  needs to be repeated for each group size $N$.


  In these simulations, we assume the random variable $X$ is Gaussian with
  variance $\sigma=2$. The two hypotheses are $H_0:\theta=0$ and
  $H_1:\theta=1$. The numerical results are shown in
  Figure~\ref{fig-netaccDT} and discussed below.

  As is clear by the plots, the strategy that gives the fastest decision
  with the same accuracy varies with group size and desired accuracy. The
  left plot in Figure~\ref{fig-netaccDT} illustrates that, for very high
  desired group accuracy, the \emph{majority} rule is always optimal. As
  the accuracy requirement is relaxed, the \emph{fastest} rule becomes
  optimal for small groups. Moreover, the group size at which the switch
  between optimal rules happens, varies for different accuracies. For
  example, the middle and right plot in Figure~\ref{fig-netaccDT}
  illustrate that while the switch happens at $N=5$ for a group accuracy
  $\pwom=\pwof=0.05$ and at $N=9$ for $\pwom=\pwof=0.1$.
  
  \begin{figure}[h!]
    \begin{center}
      {\includegraphics[width=.3\textwidth]{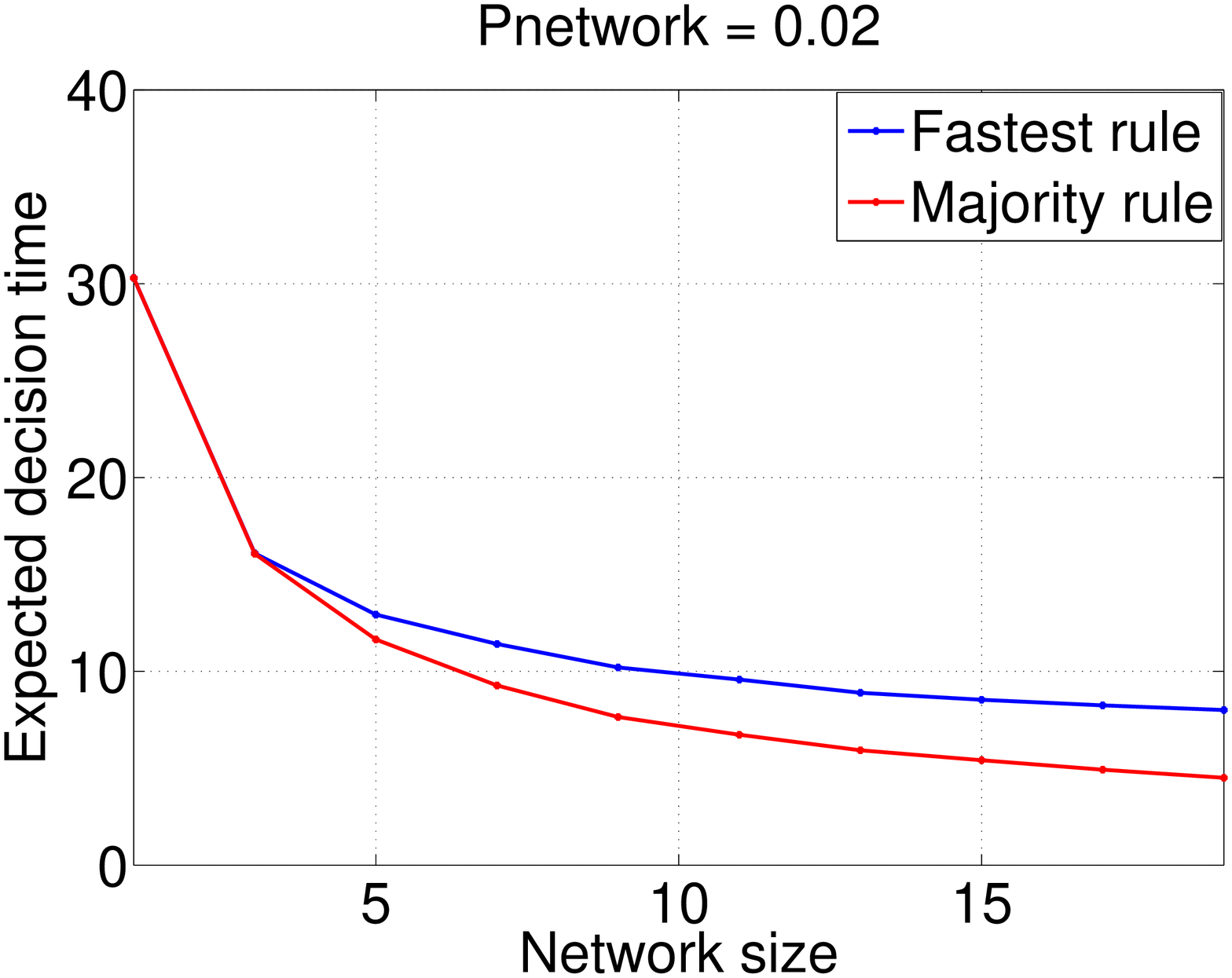}}
      {\includegraphics[width=.3\textwidth]{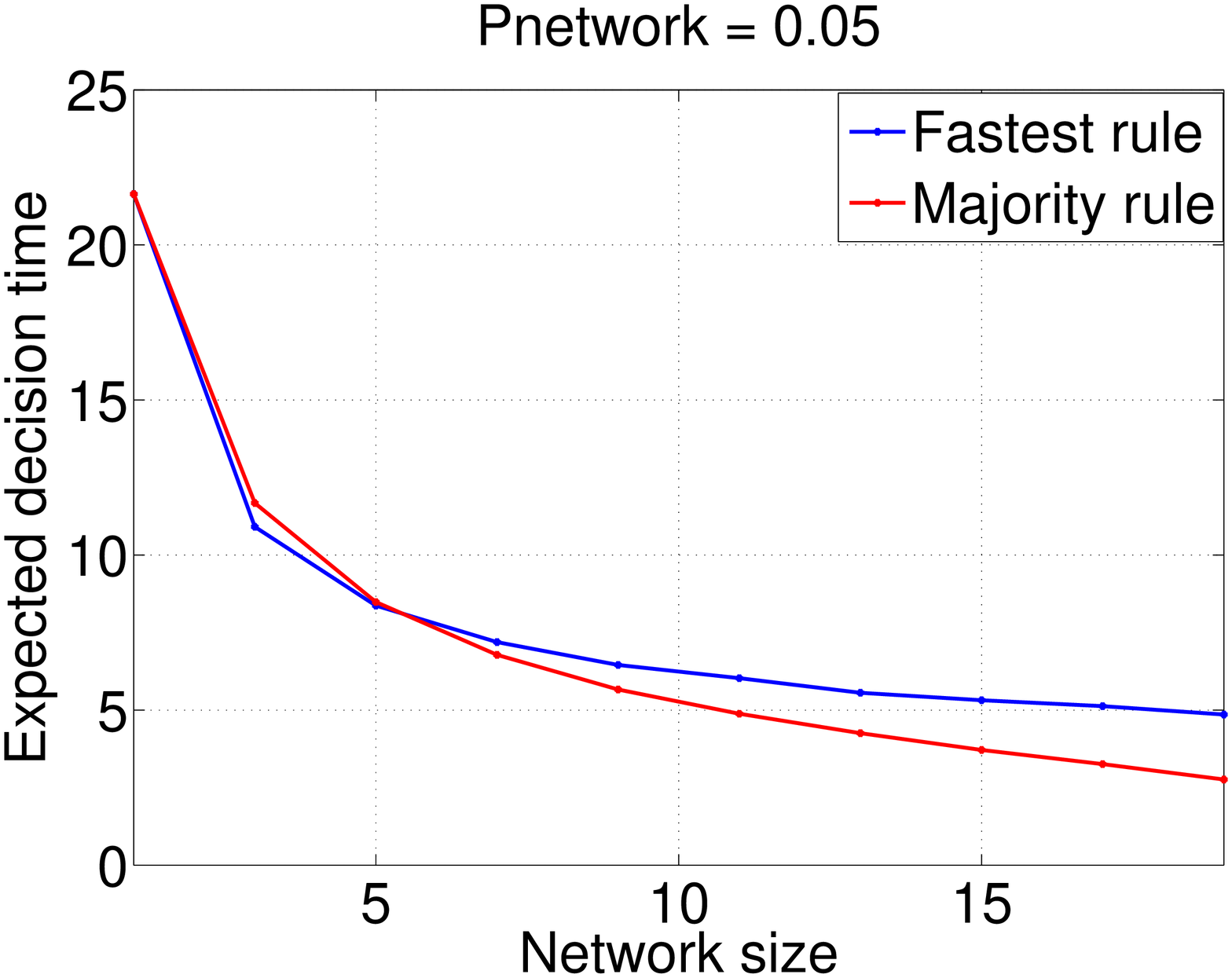}}
      {\includegraphics[width=.3\textwidth]{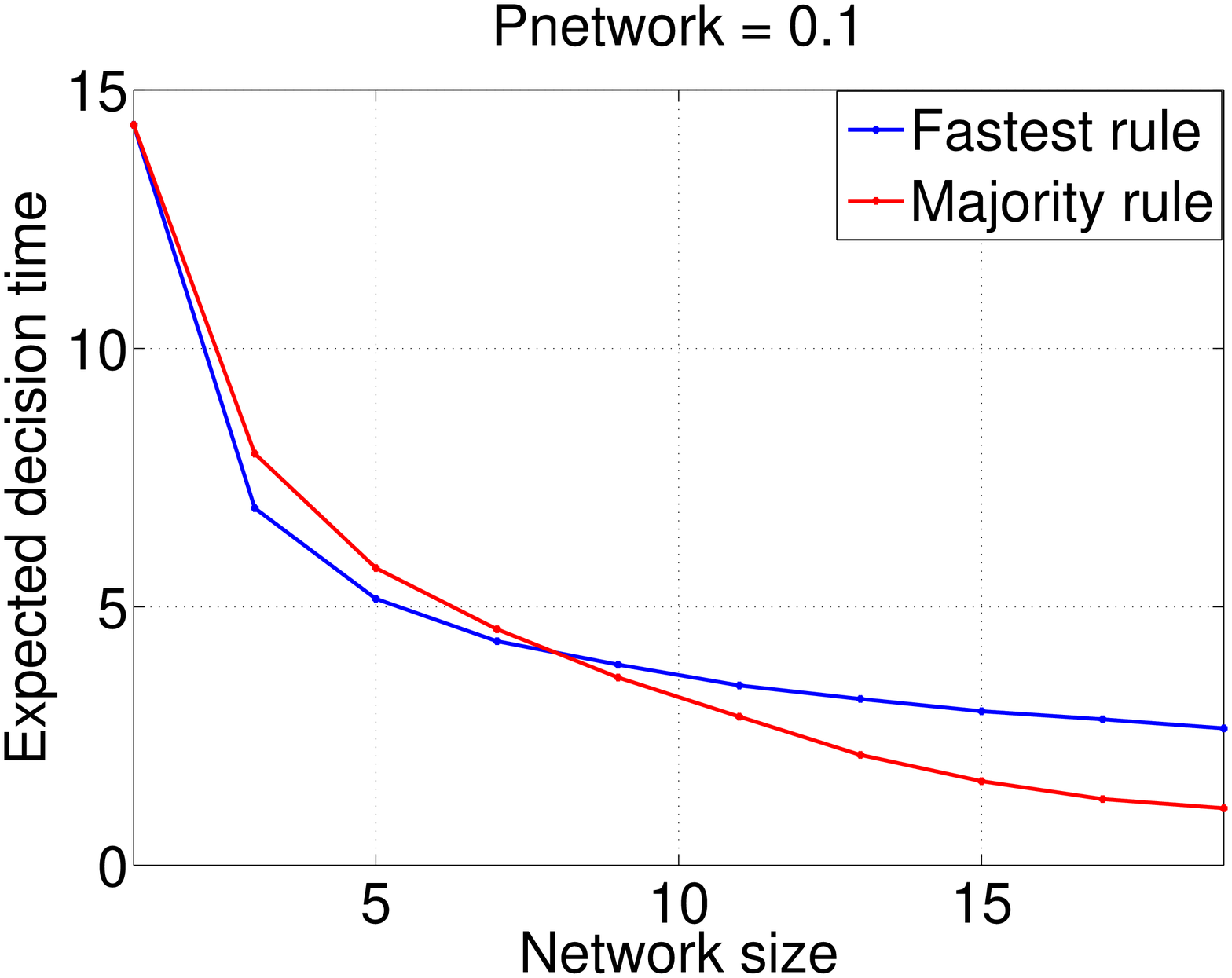}}
      \caption{Expected decision time for the \emph{fastest} and the
        \emph{majority} rules versus group size $N$, for various network
        accuracy levels.}
      \label{fig-netaccDT}
    \end{center}
  \end{figure}
  
  We summarize our observations about which rule is optimal (i.e., which
  rule requires the least number of observations) as follows:
  \begin{enumerate}
  \item the optimal rule varies with the desired network accuracy, at fixed
    network size;
  \item the optimal rule varies with the desired network size, at fixed
    network accuracy; and
  \item the change in optimality occurs at different network sizes for
    different accuracies.
  \end{enumerate}
\end{example}

\subsection{Decision making in cognitive psychology  revisited}\label{sec:DMCPR}

In this section we highlight some interesting relationships between our
results in sequential decision aggregation (SDA) and some recent
observations about mental behavior from the cognitive psychology
literature. Starting with the literature review in
Subsection~\ref{ssection-cog}, our discussion here is based upon the
following assumptions:
\begin{enumerate}
\item SDA models multi-modal integration in cognitive information
  processing (CIP),
\item the number of SDMs correspond to the number of sensory modalities in
  CIP,
\item the expected decision time in the SDA setup is analogous to the
  reaction time in CIP, and
\item the decision probability in the SDA setup is analogous to the firing
  rate of neurons in CIP.
\end{enumerate}%
Under these assumptions, we relate our SDA analysis to four recent
observations reported in the CIP literature. In short, the \emph{fastest}
and \emph{majority} rules appear to emulate behaviors that are similar to
the ones manifested by the brain under various conditions. These
correspondences are summarized in Table~\ref{table-cog-eng} and described
in the following paragraphs.
\begin{table}[!h]
\begin{center}
  \begin{tabular}{||c||c|c||}
    \hline
    $$
     \emph{Multi-sensory integration sites (cognitive psychology)}  &
    \emph{Sequential decision aggregation (engineering sciences)}  \\
    \hline\hline
    Suppressive behavior of firing rate
        & Decision probability decreases with increasing $N$
    \\[-1.5em]&
    \\ \hline
     Sub-additive behavior of firing rates
        & Probability of decision slightly increases with increasing $N$  \\
    [-1.5em]&
    \\ \hline
     Additive behavior of firing rates
    & Decision probability linearly increases with increasing $N$ \\
   [-1.5em] &
    \\ \hline
         Super-additive behavior of firing rates
    & Decision probability exponentially increases with increasing $N$\\
   [-1.5em] &
    \\ \hline
  \end{tabular} 
\end{center}
\caption{Cognitive Psychology - Engineering dictionary of correspondences.
  \label{table-cog-eng}} 
\end{table}




First, we look at the observation in CIP that multi-modal sites can exhibit
suppressive behaviors (first row in Table~\ref{table-cog-eng}). We find
that suppressive behavior is not contradictory with the nature of such a
site. Indeed, Proposition~\ref{prop:A-ET-Fastest} describes situations
where an increased group size degrades the decision accuracy of a group
using the \emph{fastest} rule.

Second, we look at the observation in CIP that, for some high-intensity
stimuli, the firing rate of multi-modal integration sites is similar to the
firing rate of uni-modal integration sites (second row in
Table~\ref{table-cog-eng}). This similarity behavior appears related to
behaviors observed in Figure~\ref{fig-netaccDT}.  The second and third
plots in Figure~\ref{fig-netaccDT} illustrate how, in small groups with
high individual accuracy and relatively low group decision accuracy, the
\emph{fastest} rule is optimal.  Since a multi-model integration site
implementing a fastest aggregation rule behaves similarly to a uni-modal
integration site, our result give a possible optimality interpretation of
the observed ``multi-modal similar to uni-modal'' behavior.

Third, we look at the observation in CIP that activation of multi-modal
integration sites is often accompanied with an increase in the accuracy as
compared to the accuracy of a uni-sensory integration site (third and forth
rows in Table~\ref{table-cog-eng}). The first plot in
Figure~\ref{fig-netaccDT} shows that when the required performance is a
high accuracy, the majority rule is better than the fastest. Indeed
Proposition~\ref{prop:Acc_Maj} proves that, for the \emph{majority} rule,
the accuracy monotonically increases with the group size, sometimes
exponentially.

Fourth, we look at the observation in CIP that, even under the same type of
stimuli, the stimuli strength affects the additivity of the neuron firing,
i.e., the suppressive, additive, sub-additive or super-additive behavior of
the firing rates.  Additionally, scientists have observed that depending on
the intensity of the stimuli, various areas of the brain are activated when
processing the same type of stimuli~\cite{SW-UN:10,PL-TP-TS-MW-BS:05,
  SW-UN:09, RB-EB-etal:06}.  A possible explanation for these two observed
behaviors is that the brain processes information in a way that maintains
optimality.  Indeed, our comparison in the middle and right parts of
Figure~\ref{fig-netaccDT} shows how the fastest rule is optimal when
individual SDMs are highly accurate (strong and intact stimuli) and, vice
versa, the majority rule is optimal when individual SDMs are relatively
inaccurate (weak and degraded stimuli).

We observed in the middle and right part of Figure~\ref{fig-netaccDT} that,
for high individual accuracies, the \emph{fastest} rule is more efficient
than the \emph{majority} rule. We reach this conclusion by noting two
observations: first, smaller group sizes require higher local accuracies
than larger group sizes in order to maintain the same group accuracy;
second, the \emph{fastest} rule is optimal for small groups while the
\emph{majority} rule is always optimal for larger groups.  

\section{Conclusion}\label{Sec-conc}
In this work, we presented a complete analysis of how a group of SDMs can
collectively reach a decision about the correctness of a hypothesis. We
presented a numerical method that made it possible to completely analyze
and understand interesting fusion rules of the individuals decisions. The
analysis we presented concentrated on two aggregation rules, but a similar
analysis can be made to understand other rules of interest. An important
question we were able to answer, was the one relating the size of the group
and the overall desired accuracy to the optimal decision rules. We were
able to show that, no single rule is optimal for all group sizes or for
various desired group accuracy. We are currently extending this work to
cases where the individual decision makers are not identical.

\bibliographystyle{ieeetr}
\bibliography{alias,Main,FB}

\appendix

\subsection{Asymptotic and monotonicity results on combinatorial sums}
\label{appendix:combinatorial-sums}

Some of the results provided for the \emph{fastest} rule and for the
\emph{majority} rule are based on the following properties of the binomial
expansion $(x+y)^N=\sum_{j=0}^{N} {N \choose j} x^jy^{N-j}$.

\begin{lemma}[Properties of half binomial expansions]
  \label{Lemm-lim-comb-sum}
  For an odd number $N\in\natural$, and for real numbers $c\in\real$ and
  $x\in\real$ satisfying $0 < c \leq 1$ and $0\leq x \leq c/2$, define
  $$
  \underline{S}(N;c,x)=\sum_{j=0}^{\lfloor N/2 \rfloor}{N \choose j} x^j 
  (c-x)^{N-j} \qquad \text{and} \qquad \overline{S}(N;c,x)=\sum_{j=\lceil
    N/2 \rceil}^{N}{N \choose j} x^j (c-x)^{N-j}.
  $$
  The following statements hold true:
  \begin{enumerate}
  \item \label{fact:LemFact1} if $0\leq x < c/2$, then, taking limits over
    odd values of $N$,
    \begin{equation*}
      \lim_{N \to \infty} \frac{\underline{S}(N;c,x)}{c^N}=1 \qquad 
      \text{and} \qquad \lim_{N \to \infty}
      \frac{\overline{S}(N;c,x)}{c^N}=0;
    \end{equation*}

  \item\label{fact:LemFact2} if $x=c/2$, then
    \begin{equation*}
      \underline{S}(N;c,x) = \overline{S}(N;c,x) = \frac{c^N}{2};
    \end{equation*}

  \item\label{fact:LemFact3} if $c=1$ and $0\leq x <1/2$, then
    \begin{equation*}
      \overline{S}(N+2; 1, x) < \overline{S}(N; 1, x)  \qquad \text{and}
      \qquad \underline{S}(N+2; 1, x) > \underline{S}(N; 1, x). 
    \end{equation*}
  \end{enumerate}
\end{lemma}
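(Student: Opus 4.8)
The plan is to reduce all three statements to elementary facts about the binomial distribution. Dividing through by $c^N$ and setting $p:=x/c\in[0,1/2]$, one has
$$\frac{\underline{S}(N;c,x)}{c^N}=\sum_{j=0}^{\floor{N/2}}{N\choose j}p^j(1-p)^{N-j}\quad\text{and}\quad\frac{\overline{S}(N;c,x)}{c^N}=\sum_{j=\ceil{N/2}}^{N}{N\choose j}p^j(1-p)^{N-j},$$
so that $\underline{S}/c^N$ and $\overline{S}/c^N$ are exactly $\P[B_N\leq\floor{N/2}]$ and $\P[B_N\geq\ceil{N/2}]$, where $B_N$ denotes a Binomial$(N,p)$ random variable. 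Because $N$ is odd, the index sets $\{0,\dots,\floor{N/2}\}$ and $\{\ceil{N/2},\dots,N\}$ partition $\{0,\dots,N\}$, so the binomial theorem yields the fundamental identity $\underline{S}(N;c,x)+\overline{S}(N;c,x)=c^N$. This identity means that in every part it suffices to control $\overline{S}$ alone.

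Part (ii) is then immediate from symmetry: when $x=c/2$ we have $p=1/2$, and the involution $j\mapsto N-j$ together with ${N\choose j}={N\choose N-j}$ maps the index set of $\underline{S}$ bijectively onto that of $\overline{S}$ (again using that $N$ is odd), whence $\underline{S}=\overline{S}$; combined with the identity this forces $\underline{S}=\overline{S}=c^N/2$. For part (i), I would invoke a concentration estimate: since $p<1/2$, the event $\{B_N\geq\ceil{N/2}\}$ is a large-deviation event, and the standard Chernoff bound gives $\P[B_N\geq N/2]\leq(2\sqrt{p(1-p)})^N$, where $2\sqrt{p(1-p)}<1$ precisely because $p\neq 1/2$. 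Hence $\overline{S}/c^N\to 0$, and then $\underline{S}/c^N=1-\overline{S}/c^N\to 1$; this is essentially the Stirling-type estimate already exploited in the proof of Proposition~\ref{prop:A-ET-Fastest}.

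The most delicate part is the monotonicity claim (iii), which I would prove by a conditioning argument. Fix $c=1$ (so $p=x$) and write $N=2k-1$, giving $\ceil{N/2}=k$, $\overline{S}(N;1,x)=\P[B_N\geq k]$, and $\overline{S}(N+2;1,x)=\P[B_{N+2}\geq k+1]$. Writing $B_{N+2}=B_N+Y$ with $Y\in\{0,1,2\}$ the independent number of successes among two fresh trials ($\P[Y=0]=(1-x)^2$, $\P[Y=1]=2x(1-x)$, $\P[Y=2]=x^2$) and conditioning on $Y$ yields
$$\overline{S}(N+2;1,x)=(1-x)^2\,\P[B_N\geq k+1]+2x(1-x)\,\P[B_N\geq k]+x^2\,\P[B_N\geq k-1].$$
Re-expressing each cumulative probability relative to $\P[B_N\geq k]$ and using $(1-x)^2+2x(1-x)+x^2=1$, the cross terms collapse to
$$\overline{S}(N+2;1,x)=\P[B_N\geq k]-(1-x)^2\,\P[B_N=k]+x^2\,\P[B_N=k-1].$$
Hence $\overline{S}(N+2)-\overline{S}(N)=x^2\,\P[B_N=k-1]-(1-x)^2\,\P[B_N=k]$, and the desired strict decrease is equivalent to $x^2\,\P[B_N=k-1]<(1-x)^2\,\P[B_N=k]$. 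Since $N-k=k-1$, the binomial coefficients coincide, ${2k-1\choose k-1}={2k-1\choose k}$, so cancelling the common factor reduces the inequality to $x^{k+1}(1-x)^k<x^k(1-x)^{k+1}$, i.e.\ simply $x<1-x$, which holds exactly because $x<1/2$. The companion inequality for $\underline{S}$ then follows from $\underline{S}+\overline{S}=1$.

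I expect the conditioning step in (iii) to be where the real care is needed: the collapse of the cross terms depends on the exact binomial weights of $Y$ and on the fact that increasing $N$ by $2$ shifts the lower summation limit from $k$ to $k+1$, thereby coupling $\overline{S}(N+2)$ to the three quantities $\P[B_N\geq k]$, $\P[B_N=k]$, $\P[B_N=k-1]$ alone. Once that algebraic reduction is carried out, the entire monotonicity claim rests on the single elementary inequality $x<1/2$, while parts (i) and (ii) are routine consequences of concentration and symmetry, respectively.
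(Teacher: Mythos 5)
Your proof is correct, but it takes a genuinely different route from the paper's on the two nontrivial parts. For part (i), the paper does not invoke a Chernoff bound: it shows the terms $h(j)={N\choose j}x^j(c-x)^{N-j}$ are decreasing for $j>N/2$, bounds $\overline{S}$ by $\ceil{N/2}$ copies of the largest term, and then uses ${N\choose\ceil{N/2}}<2^N$ to extract the factor $\alpha\beta=4x(c-x)/c^2<1$ --- which is exactly the square of your $2\sqrt{p(1-p)}$, so the two arguments are quantitatively the same bound reached by different means. The real divergence is in part (iii): the paper proves monotonicity by a calculus argument, differentiating $\Delta(N,x)=\overline{S}(N+2;1,x)-\overline{S}(N;1,x)$ in $x$, telescoping the derivative down to a single term ${N\choose\ceil{N/2}}\ceil{N/2}\bigl(x(1-x)\bigr)^{\ceil{N/2}-1}$, and then locating the roots of the quadratic $4\frac{N+2}{N+1}x(1-x)-1$ together with the boundary values $\Delta(N,0)=\Delta(N,1/2)=0$ to conclude $\Delta<0$ on $(0,1/2)$. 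Your coupling argument $B_{N+2}=B_N+Y$ collapses the same difference to the closed form $\Delta(N,x)=x^2\,\P[B_N=k-1]-(1-x)^2\,\P[B_N=k]$, whose sign is read off from $x<1-x$ after cancelling the equal binomial coefficients; this is shorter, avoids the derivative computation entirely, and makes the mechanism (one extra pair of trials must push the count across the raised threshold) transparent. One shared caveat: at $x=0$ both sides of the claimed strict inequality in (iii) vanish, so strictness actually requires $x>0$; your cancellation of the common factor $x^k(1-x)^k$ silently assumes this, just as the paper's conclusion is only established on the open interval $(0,1/2)$. Neither issue affects how the lemma is used elsewhere in the paper.
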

\begin{proof}
  To prove statement~\ref{fact:LemFact1}, we start with the obvious
  equality $c^N=(c-x+x)^N=\underline{S}(N;c,x)+
  \overline{S}(N;c,x)$. Therefore, it suffices to show that $\lim_{N
    \to\infty} \frac{\overline{S}(N;c,x)}{c^N}=0$.  Define the shorthand
  $h(j):= {N \choose j} x^j (c-x)^{N-j}$ and observe
  \begin{equation*}
    \frac{h(j)}{h(j+1)} = \frac{\frac{N!}{j! (N-j)!}x^j
      (c-x)^{N-j}}{\frac{N!}{(j+1)!
        (N-j-1)!}x^{j+1}(c-x)^{N-j-1}}=\frac{j+1}{N-j} \frac{c-x}{x}. 
  \end{equation*}
  It is straightforward to see that $\frac{h(j)}{h(j+1)} > 1 \iff c j -x N
  + c - x > 0 \iff j > \frac{xN}{c}-\frac{(c-x)}{c}$. Moreover, if $j
  >\frac{N}{2}$ and $0 \leq x < \frac{c}{2}$, then $j-\frac{x
    N}{c}+\frac{c-x}{c} > \frac{N}{2} -\frac{xN}{c} + \frac{c-x}{c} \geq
  \frac{N}{2} -\frac{N}{2} + \frac{c-x}{c}> 0$. Here, the second inequality
  follows from the fact that $-\frac{xN}{c}\geq -\frac{N}{2}$ if $0\leq x
  <\frac{c}{2}$. In other words, if $j >\frac{N}{2}$ and $0 \leq x <
  \frac{c}{2}$, then $\frac{h(j)}{h(j+1)} > 1$.  This result implies the
  following chain of inequalities $f\left(\lceil N/2
    \rceil\right)>f\left(\lceil N/2 \rceil +1\right)>\dots >h(N)$ providing
  the following bound on $\overline{S}(N;c,x)$
  \begin{align*}
    \overline{S}(N;c,x)&=\frac{\sum_{j=\lceil N/2 \rceil}^{N}{N \choose j}
      x^j (c-x)^{N-j}}{c^N}< \frac{\lceil N/2 \rceil {N \choose \lceil {N/2}
        \rceil} x^{\lceil N/2 \rceil} (c-x)^{\lfloor N/2 \rfloor}}{c^N}.
  \end{align*}
  Since ${N \choose \lceil {N/2} \rceil} < 2^N$, we can write
  \begin{align*}
    \overline{S}(N;c,x)&< \ceil{{N}/{2}} \frac{2^N x^{\lceil N/2 \rceil}
      (c-x)^{\lfloor N/2 \rfloor}}{c^N} = \lceil N/2 \rceil
    \left(\frac{2x}{c}\right)^{\lceil N/2 \rceil}
    \left(\frac{2(c-x)}{c}\right)^{\lfloor N/2 \rfloor}  \\
    & = \lceil N/2 \rceil \left(\frac{2x}{c}\right) \left(\frac{2x}{c}
    \right)^{\lfloor N/2 \rfloor} \left( \frac{2(c-x)}{c}\right)^{\lfloor
      N/2 \rfloor}.
  \end{align*}

  Let $\alpha = \frac{2x}{c}$ and $\beta=2\left(\frac{c-x}{c} \right)$ and
  consider $ \alpha \cdot \beta = \frac{4x (c-x)}{c^2}$.  One can easily
  show that $\alpha \cdot \beta < 1$ since $4cx - 4x^2 - c^2 = - (c-2x)^2 <
  0$.  The proof of statement~\ref{fact:LemFact1} is completed by noting
  \begin{equation*}
    \lim_{N \to \infty} \overline{S}(N;c,x) <  \lim_{N \to \infty} \lceil N/2
    \rceil \left( \frac{2x}{c} \right) \left( \alpha \cdot \beta
    \right)^{\lfloor N/2 \rfloor} = 0. 
  \end{equation*}

  The proof of the statement~\ref{fact:LemFact2} is straightforward. In
  fact it follows from the symmetry of the expressions when $x =
  \frac{c}{2}$, and from the obvious equality $\sum_{j=0}^{N} {N \choose j}
  x^j (c-x)^{N-j} = c^N$.

  Regarding statement~\ref{fact:LemFact3}, we prove here only that
  $\overline{S}(N+2; 1, x) < \overline{S}(N; 1, x)$ for $0\leq x <1/2$. The
  proof of $ \underline{S}(N+2; 1, x) > \underline{S}(N; 1, x)$ is
  analogous.  Adopting the shorthand
  \begin{equation*}
  f(N,x):=\sum_{i=\lceil\frac{N}{2}\rceil}^N {N \choose i} x^i (1-x)^{N-i},    
  \end{equation*}
  we claim that the assumption $0<x<1/2$ implies
  $$
  \Delta(N,x):=f(N+2,x)-f(N,x) < 0.
  $$
  To establish this claim, it is useful to analyze the derivative of
  $\Delta$ with respect to $x$.  We compute
  \begin{equation}
    \label{eq:Der}
    \frac{\partial f}{\partial x}(N,x)=\sum_{i=\lceil N/2 \rceil}^{N-1}i
    {N\choose i} x^{i-1}(1-x)^{N-i}-\sum_{i=\lceil N/2
      \rceil}^{N-1}(N-i){N\choose i}x^i(1-x)^{N-i-1}+Nx^{N-1}.
  \end{equation}
  The first sum $\sum_{i=\lceil N/2 \rceil}^{N-1}i {N\choose i}
  x^{i-1}(1-x)^{N-i}$ in the right-hand side of~\eqref{eq:Der} is equal to
  \begin{equation*}
    {N\choose \lceil N/2 \rceil} \Bigceil{\frac{N}{2}}
    x^{\lceil N/2 \rceil-1}\left(1-x\right)^{N-\lceil N/2
      \rceil}+\sum_{i=\lceil N/2 \rceil+1}^{N-1}i {N\choose i}
    x^{i-1}(1-x)^{N-i}.
  \end{equation*}
  Moreover, exploiting the identity $(i+1) {N\choose i+1}=(N-i) {N\choose
    i}$,
  \begin{align*}
    \sum_{i=\lceil N/2 \rceil+1}^{N-1}i {N\choose i} x^{i-1}(1-x)^{N-i}&=
    \sum_{i=\lceil N/2 \rceil}^{N-2}(i+1) {N\choose i+1}
    x^{i}(1-x)^{N-i-1} \\ 
    &=\sum_{i=\lceil N/2 \rceil}^{N-2} (N-i) {N\choose i} x^{i}(1-x)^{N-i-1}.
  \end{align*}
  The second sum in the right-hand side of~\eqref{eq:Der} can be rewritten
  as
  $$
  \sum_{i=\lceil N/2 \rceil}^{N-1}(N-i){N \choose
    i}x^i(1-x)^{N-i-1}=\sum_{i=\lceil N/2 \rceil}^{N-2}(N-i){N\choose
    i}x^i(1-x)^{N-i-1}+Nx^{N-1}.
  $$
  Now, many terms of the two sums cancel each other out and one can easily
  see that
  \begin{align*}
    \frac{\partial f}{\partial x}(N,x)&={N\choose \lceil N/2 \rceil} \lceil
    N/2 \rceil x ^{\lceil N/2 \rceil-1}\left(1-x\right)^{N-\lceil N/2
      \rceil}={N\choose \lceil N/2 \rceil} \lceil N/2 \rceil
    \left(x\left(1-x\right)\right)^{ \lceil N/2 \rceil-1},
  \end{align*}
  where the last equality relies upon the identity $N-\lceil N/2 \rceil =
  \lfloor N/2 \rfloor=\lceil N/2 \rceil-1$.  Similarly, we have
  \begin{align*}
    \frac{\partial f}{\partial x}(N+2,x)&={N+2\choose \lceil N/2 \rceil+1}
    \left(\lceil N/2 \rceil+1\right) \left(x\left(1-x\right)\right)^{\lceil
      N/2 \rceil}.
  \end{align*}
  Hence  
  \begin{align*}
    \frac{\partial \Delta}{\partial
      x}(N,x)&=\left(x\left(1-x\right)\right)^{ \lceil N/2 \rceil-1}
    \left({N+2\choose \lceil N/2 \rceil+1} \left(\lceil N/2 \rceil+1\right)
      x(1-x)- {N\choose \lceil N/2 \rceil} \lceil N/2 \rceil \right).
  \end{align*}
  Straightforward manipulations show that
  $$
  {N+2\choose \lceil N/2 \rceil+1} \left(\lceil N/2 \rceil+1\right)=4
  \frac{N+2}{N+1} \lceil N/2 \rceil {N\choose \lceil N/2 \rceil},
  $$
  and, in turn, 
  \begin{multline*}
    \frac{\partial \Delta}{\partial x}(N,x) = {N\choose \lceil N/2 \rceil}
    \ceil{\frac{N}{2}} \left(x\left(1-x\right)\right)^{ \lceil N/2
      \rceil-1} \left[ 4\frac{N+2}{N+1} x(1-x)- 1 \right] \\
    =: g(N,x)\left[ 4\frac{N+2}{N+1} x(1-x)- 1 \right],
  \end{multline*}
  where the last equality defines the function $g(N,x)$. Observe that $x>0$
  implies $g(N,x)>0$ and, otherwise, $x=0$ implies $g(N,x)=0$.  Moreover,
  for all $N$, we have that $f(N,1/2)=1/2$ and $f(N,0)=0$ and in turn that
  $\Delta(N,1/2)=\Delta(N,0)=0$. Additionally
  $$
  \frac{\partial \Delta}{\partial x}(N,1/2) =
  g(N,1/2)\left(\frac{N+2}{N+1}-1\right)>0
  $$
  and
  $$
  \frac{\partial \Delta}{\partial x}(N, 0)=0 \qquad \text{and} \qquad
  \frac{\partial \Delta}{\partial x}(N, 0^+)= g(N,0^+)\left(0^+-1\right)<0.
  $$
  The roots of the polynomial $x\mapsto 4\frac{N+2}{N+1} x(1-x)- 1$ are
  $\frac{1}{2}\left(1\pm\sqrt{\frac{1}{N+2}}\right)$, which means that the
  polynomial has one root inside the interval $(0,1/2)$ and one inside the
  interval $(1/2, 1)$. Considering all these facts together, we conclude
  that the function $x\mapsto\Delta(N,x)$ is strictly negative in $(0,
  1/2)$ and hence that $f(N+2,x)-f(N,x) < 0$.
\end{proof}

\subsection{Computation of the decision probabilities for a single SDM
  applying the SPRT test}
\label{subsec:AccTime}

In this appendix we discuss how to compute the probabilities
\begin{equation}\label{eq:prob}
  \left\{\pndz\right\} \cup \left\{ \pzz (t), \poz(t)\right\}_{t\in
    \N}\qquad \text{and} \left\{\pndo\right\}\cup \left\{\pzo(t),
    \poo(t)\right\}_{t\in\N}
\end{equation} 
for a single SDM applying the classical \emph{sequential probability ratio
  test} (SPRT). For a short description of the SPRT test and for the
relevant notation, we refer the reader to Section~\ref{SSec-sim-nocomm}.
We consider here observations drawn from both discrete and continuous
distributions.

\subsubsection{Discrete distributions of the Koopman-Darmois-Pitman form}
\label{subsec:AccTimeDiscrete}
This subsection review the procedure proposed in \cite{LY:94} for a certain
class of discrete distributions. Specifically, \cite{LY:94} provides a
recursive method to compute the exact values of the
probabilities~\eqref{eq:prob}; the method can be applied to a broad class
of discrete distributions, precisely whenever the observations are modeled
as a discrete random variable of the Koopman-Darmois-Pitman form.

With the same notation as in Section~\ref{SSec-sim-nocomm}, let $X$ be a
discrete random variable of the Koopman-Darmois-Pitman form; that is
$$
f(x,\theta)=
\begin{cases}
  h(x)\exp(B(\theta) Z(x) -A(\theta)),\quad & \text{if }  x\in \mathcal{Z},\\
  0, &\text{if } x\notin \mathcal{Z},
\end{cases}
$$
where $h(x)$, $Z(x)$ and $A(\theta)$ are known functions and where
$\mathcal{Z}$ is a subset of the integer numbers $\Z$. In this section we
shall assume that $Z(x)=x$. Bernoulli, binomial, geometric, negative
binomial and Poisson distributions are some widely used distributions of
the Koopman-Darmois-Pitman form satisfying the condition $Z(x)=x$. For
distributions of this form, the likelihood associated with the $t$-th
observation $x(t)$ is given by
$$
\lambda(t)=(B(\theta_1)-B(\theta_0))x(t)-(A(\theta_1)-A(\theta_0)).
$$
Let $\eta_0, \eta_1$ be the pre-assigned thresholds. Then, one can see that
sampling will continue as long as
\begin{equation}
  \label{eq-KDP}
  \frac{\eta_0+t(A(\theta_1)-A(\theta_0))}{B(\theta_1)-B(\theta_0))} 
  <\sum_{i=1}^t x(i)<
  \frac{\eta_1+t(A(\theta_1)-A(\theta_0))}{B(\theta_1)-B(\theta_0))} 
\end{equation}
for $B(\theta_1)-B(\theta_0)>0$; if $B(\theta_1)-B(\theta_0)<0$ the
inequalities would be reversed. Observe that $\sum_{i=1}^t x(i)$ is an
integer number. Now let $\bar{\eta}_{0}^{(t)}$ be the smallest integer
greater than
$\left\{\eta_0+t(A(\theta_1)-A(\theta_0))\right\}/(B(\theta_1)-B(\theta_0))$
and let $\bar{\eta}_{1}^{(t)}$ be the largest integer smaller than
$\left\{\eta_1+t(A(\theta_1)-A(\theta_0))\right\}/(B(\theta_1)-B(\theta_0))$. Sampling
will continue as long as $ \bar{\eta}_{0}^{(t)} \leq \mathcal{X}(t)\leq
\bar{\eta}_{1}^{(t)} $ where $\mathcal{X}(t)=\sum_{i=1}^tx(i)$. Now suppose
that, for any $\ell \in [\bar{\eta}_{0}^{(t)}, \bar{\eta}_{1}^{(t)}]$ the
probability $\P[\mathcal{X}(t)=\ell]$ is known. Then we have
\begin{equation*}
  \P[\mathcal{X}(t+1)=\ell|H_i] = 
  \sum_{j=\bar{\eta}_{0}^{(t)}}^{\bar{\eta}_{1}^{(t)}}f(\ell-j;\theta_i)\P[\mathcal{X}(t)=j|H_i], 
\end{equation*}
and
\begin{align*}
  \poi(t+1)&=\sum_{j=\bar{\eta}_{0}^{(t)}}^{\bar{\eta}_{1}^{(t)}}\,\,\sum_{r=\bar{\eta}_{1}^{(t)}-j+1}^\infty
  \P[\mathcal{X}(t)=j | H_i]  f(r;\theta_i),\nonumber\\ 
  p_{0|i}(t+1)&=\sum_{j=\bar{\eta}_{0}^{(t)}}^{\bar{\eta}_{1}^{(t)}}\,\,\sum_{r=-\infty}^{\bar{\eta}_{0}^{(t)}-j-1}
  \P[\mathcal{X}(t)=j | H_i] f(r;\theta_i).
\end{align*}
Starting with $\P[\mathcal{X}(0)=1]$, it is possible to compute recursively
all the quantities $\left\{\pij(t)\right\}_{t=1}^\infty$ and
$\P[\mathcal{X}(t)=\ell]$, for any $t\in \N$, $\ell \in
[\bar{\eta}_{0}^{(t)}, \bar{\eta}_{1}^{(t)}]$, and
$\left\{\pij(t)\right\}_{t=1}^\infty$. Moreover, if the set
$\mathcal{Z}$ is finite, then the number of required computations is
finite.

\subsubsection{Computation of  accuracy and decision time for pre-assigned
  thresholds $\eta_0$ and $\eta_1$: continuous
  distributions}\label{subsec:AccTimeCont} 

In this section we assume that $X$ is a continuous random variable with
density function given by $f(x,\theta)$. As in the previous subsection,
given two pre-assigned thresholds $\eta_0$ and $\eta_1$, the goal is to
compute the probabilities $\pij(t)=\P[\text{say} H_i|H_j, T=t]$, for
$i,j \in\{1,2\}$ and $t\in\N$.

We start with two definitions. Let $f_{\lambda, \theta_i}$ and
$f_{\Lambda(t), \theta_i}$ denote, respectively, the density function of
the log-likelihood function $\lambda$ and of the random variable
$\Lambda(t)$, under the assumption that $H_i$ is the correct hypothesis.
Assume that, for a given $t\in\N$, the density function $f_{\Lambda(t),
  \theta_i}$ is known. Then we have 
\begin{equation*} 
  f_{\Lambda(t), \theta_i}(s)=\int_{\eta_0}^{\eta_1} f_{\lambda,
    \theta_i}(s-x)f_{\Lambda(t), \theta_i}(x)dx,\qquad s\in \left(\eta_0,
    \eta_1\right), 
\end{equation*}
and
\begin{align*}
  \poi(t) &=\int_{\eta_0}^{\eta_1}\!\!\left(\int_{\eta_1-x}^{\infty}
    f_{\lambda, \theta_i}(z)dz \right)f_{\Lambda(t), \theta_i}(x)dx,
  \text{ and }
  p_{0|i}(t)=\int_{\eta_0}^{\eta_1}\!\!\left(\int_{-\infty}^{\eta_0-x}
    f_{\lambda, \theta_i}(z)dz \right)f_{\Lambda(t), \theta_i}(x)dx.
\end{align*}
In what follows we propose a method to compute these quantities
based on a uniform discretization of the functions $\lambda$ and
$\Lambda$. Interestingly, we will see how the classic SPRT algorithm can be
conveniently approximated by a suitable absorbing Markov chain and how,
through this approximation, the probabilities
$\left\{\pij(t)\right\}_{t=1}^\infty$, $i,j\in\{1,2\}$, can be
efficiently computed. Next we describe our discretization approach.

First, let $\delta\in \realpositive$, $\bar{\eta}_0=\lfloor
\frac{\eta_0}{\delta} \rfloor\delta$ and
$\bar{\eta}_1=\lceil\frac{\eta_1}{\delta} \rceil\delta$. Second, for
$n=\lceil\frac{\eta_1}{\delta}\rceil- \lfloor \frac{\eta_0}{\delta}\rfloor
+1$, introduce the sets
\begin{equation*}
  \mathcal{S}=\left\{s_1,\ldots, s_n\right\}\quad \text{and} \quad
  \Gamma=\left\{\gamma_{-n+2},
    \gamma_{-n+3},\ldots,\gamma_{-1},\gamma_0,\gamma_1,\dots,
    \gamma_{n-3},\gamma_{n-2}\right\},  
\end{equation*}
where $s_i=\bar{\eta}_0+(i-1)\delta$, for $i\in \until{n}$, and
$\gamma_{i}=i\delta$, for $i\in\left\{-n+2, -n+3,\ldots, n-3,n-2\right\}$.
Third, let $\bar{\lambda}$ (resp. $\bar{\Lambda}$) denote a discrete random
variable (resp. a stochastic process) taking values in $\Gamma$ (resp. in
$\mathcal{S}$). Basically $\bar{\lambda}$ and $\bar{\Lambda}$ represent the
discretization of $\Lambda$ and $\lambda$, respectively. To characterize
$\bar{\lambda}$, we assume that
$$
\P\left[\bar{\lambda}=i\delta\right]= \P\left[i\delta-\frac{\delta}{2}\leq \lambda \leq i\delta+\frac{\delta}{2}\right], \qquad i \in \left\{-n+3,\ldots,n-3\right\},
$$
and
$$
\P\left[\bar{\lambda}=(-n+2)\delta\right]= \P\left[ \lambda \leq (-n+2)\delta+\frac{\delta}{2}\right] \qquad \text{and} \qquad \P\left[\bar{\lambda}=(n-2)\delta\right]= \P\left[ \lambda \geq (n-2)\delta-\frac{\delta}{2}\right].
$$
From now on, for the sake of simplicity, we shall denote
$\P\left[\bar{\lambda}=i\delta\right]$ by $p_i$.  Moreover we adopt the
convention that, given $s_i\in \mathcal{S}$ and $\gamma_j\in \Gamma$, we
have that $s_i+\gamma_j:=\bar{\eta}_0$ whenever either $i=1$ or $i+j-1\leq
1$, and $s_i+\gamma_j:=\bar{\eta}_1$ whenever either $i=n$ or $i+j-1\geq
n$. In this way $s_i+\gamma_j$ is always an element of $\mathcal{S}$. Next
we set $\bar{\Lambda}(t):=\sum_{h=1}^{t}\bar{\lambda}(h)$.

To describe the evolution of the stochastic process $\bar{\Lambda}$, define
the row vector $\pi(t)=[\pi_1(t),\ldots, \pi_n(t)]^T \in \R^{1\times n}$
whose $i$-th component $\pi_i(t)$ is the probability that $\bar{\Lambda}$
equals $s_i$ at time $t$, that is,
$\pi_i(t)=\P\left[\bar{\Lambda}(t)=s_i\right]$. The evolution of $\pi(t)$
is described by the absorbing Markov chain $(\mathcal{S}, A, \pi(0))$ where
\begin{itemize}
\item $\mathcal{S}$ is the set of states with $s_1$ and $s_n$ as absorbing
  states;
\item $A=[a_{ij}]$ is the transition matrix: $a_{ij}$ denote the
  probability to move from state $s_i$ to state $s_j$ and satisfy,
  according to our previous definitions and conventions,
  \begin{itemize}
  \item $a_{11}=a_{nn}=1;\qquad a_{1i}=a_{nj}=0,\,\,\,\text{for}\,\,\,
    i\in\left\{2,\ldots,n\right\}\,\,\,\text{and}\,\,\, j\in\until{n-1}$;
  \item $a_{i1}=\sum_{s=-n+2}^{-h+1}p_s\,\,\, \text{and}\,\,\,
    a_{in}=\sum_{s=1}^{n-2}p_s,\,\,\,  h\in\left\{2,\ldots,n-1\right\}$; 
  \item $a_{ij}=p_{j-i}\,\,\,\,\,\, i,j\in\left\{2,\ldots,n-1\right\}$;
  \end{itemize}
\item $\pi(0)$ is the initial condition and has the property that
  $\P[\bar{\Lambda}(0)=0]=1$.
\end{itemize}
In compact form we write $\pi(t)=\pi(0)A^t$.

The benefits of approximating the classic SPRT algorithm with an absorbing
Markov chain $(\mathcal{S}, A, \pi(0))$ are summarized in the next
Proposition. Before stating it, we provide some useful definitions.
First, let $Q\in \R^{(n-2)\times (n-2)}$ be the matrix obtained by deleting
the first and the last rows and columns of $A$. Observe that $I-Q$ is an
invertible matrix and that its inverse $F:=(I-Q)^{-1}$ is typically known
in the literature as the \emph{fundamental matrix} of the absorbing matrix
$A$.
Second let $A^{(1)}_{2:n-1}$ and $A^{(n)}_{2:n-1}$ denote, respectively,
the first and the last column of the matrix $A$ without the first and the
last component, i.e., $A^{(1)}_{2:n-1}:=\left[a_{2,1}, \ldots,
  a_{n-1,1}\right]^T$ and $A^{(n)}_{2:n-1}:=\left[a_{2,n},\ldots, a_{n-1,
    n}\right]^T$.
Finally, let $e_{\lfloor \frac{\eta_0}{\delta} \rfloor+1}$ and $\1_{n-2}$
denote, respectively, the vector of the canonical basis of $\R^{n-2}$
having $1$ in the $(\lfloor \frac{\eta_0}{\delta} \rfloor+1)$-th position
and the $(n-2)$-dimensional vector having all the components equal to $1$
respectively.

\begin{proposition}[SPRT as a Markov Chain]
  Consider the classic SPRT test. Assume that we model it through the
  absorbing Markov chain $(\mathcal{S}, A, \pi(0))$ described above. Then
  the following statements hold:
  \begin{enumerate}
  \item $p_{0|j}(t)=\pi_1(t)-\pi_1(t-1)$ and
    $p_{1|j}(t)=\pi_n(t)-\pi_n(t-1)$, for $t\in\N$;
  \item $\P[\text{say } H_0|H_j]=e^T_{\lfloor \frac{\eta_0}{\delta}
      \rfloor+1}N\bar{a}_1$ and $\P[\text{say } H_0|H_j]=e^T_{\lfloor
      \frac{\eta_0}{\delta} \rfloor+1}N\bar{a}_n$; and
  \item $\E[T|H_j]=e^T_{\lfloor \frac{\eta_0}{\delta} \rfloor+1}F\1_{n-2}$.
  \end{enumerate}
\end{proposition}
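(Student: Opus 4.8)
The plan is to derive all three identities from standard properties of absorbing Markov chains, since the proposition takes the discretized chain $(\mathcal{S}, A, \pi(0))$ as a faithful model of the SPRT and only asks one to compute its decision statistics. The setup, done once, is to fix the dictionary between the chain and the test: the absorbing states $s_1$ and $s_n$ correspond to crossing the lower and upper thresholds, that is, to declaring $H_0$ and $H_1$ respectively; the transient states $s_2,\ldots,s_{n-1}$ correspond to ``continue sampling''; and the point-mass initial condition $\P[\bar\Lambda(0)=0]=1$ concentrates on the single transient state $s_{i_0}$ with $s_{i_0}=0$, whose indicator is the canonical vector $e_{\lfloor\eta_0/\delta\rfloor+1}$ of the statement. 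Throughout I use the fundamental matrix $F=(I-Q)^{-1}$ (invertible, as already noted) and the truncated absorbing columns $A^{(1)}_{2:n-1}$ and $A^{(n)}_{2:n-1}$; these are the objects written $N$, $\bar a_1$, $\bar a_n$ in the statement.

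For statement (i), I would argue that, because $s_1$ is absorbing, the chain never leaves it once entered; hence $\pi_1(t)=\P[\bar\Lambda(t)=s_1]$ equals the cumulative probability that the lower threshold has been crossed at some time $\le t$, which under $H_j$ is exactly $\sum_{s\le t}p_{0|j}(s)$. The events ``absorbed into $s_1$ by time $t-1$'' and ``by time $t$'' are nested, so their probabilities telescope and the decision probability at time $t$ is the increment $p_{0|j}(t)=\pi_1(t)-\pi_1(t-1)$; the identical argument applied to $s_n$ yields $p_{1|j}(t)=\pi_n(t)-\pi_n(t-1)$. The only care needed is to confirm that the construction of the entries $a_{i1}$ and $a_{in}$ encodes precisely the SPRT stopping rule $\bar\eta_0\le\bar\Lambda(t)\le\bar\eta_1$, so that ``enter $s_1$'' and ``say $H_0$'' denote the same event.

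For statement (ii), I would recall the classical absorption identity. Writing $R=[\,A^{(1)}_{2:n-1}\ \ A^{(n)}_{2:n-1}\,]$ for the transient-to-absorbing block of $A$, a first-step decomposition gives $B=R+QB$, where $B$ is the matrix whose $(k,\ell)$ entry is the probability of eventual absorption in absorbing state $\ell$ starting from transient state $k$; rearranging gives $B=(I-Q)^{-1}R=FR$. Summing statement (i) over $t$ then identifies $\P[\text{say }H_0\mid H_j]$ with the total absorption probability into $s_1$ from the initial state, i.e., with the $(i_0,1)$ entry of $B$, which is $e_{\lfloor\eta_0/\delta\rfloor+1}^T F\,A^{(1)}_{2:n-1}$; the analogue with $A^{(n)}_{2:n-1}$ gives $\P[\text{say }H_1\mid H_j]$ (the left-hand side of the statement's second equation should read $H_1$, not $H_0$).

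For statement (iii), I would invoke the companion identity for expected absorption times: if $\tau$ denotes the vector whose $k$-th entry is the expected time to absorption from transient state $k$, then conditioning on the first step gives $\tau=\1_{n-2}+Q\tau$, hence $\tau=(I-Q)^{-1}\1_{n-2}=F\1_{n-2}$; selecting the initial coordinate yields $\E[T\mid H_j]=e_{\lfloor\eta_0/\delta\rfloor+1}^T F\1_{n-2}$. In all three parts the substance is the fundamental-matrix machinery, which is standard; I expect the only genuine obstacle to be bookkeeping---pinning down the index $i_0$ of the initial state, the orientation of the absorbing columns, and the verification that the per-step decision probabilities of the discretized chain coincide with the SPRT quantities $p_{i|j}(t)$---after which the three formulas are immediate.
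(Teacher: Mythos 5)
Your proposal is correct. Note that the paper actually states this proposition without any proof (the document ends immediately after it), so there is nothing to diverge from: the fundamental-matrix argument you give --- absorption probabilities via $B=(I-Q)^{-1}R=FR$ and expected absorption time via $\tau=\1_{n-2}+Q\tau$, together with the telescoping of $\pi_1(t)$ and $\pi_n(t)$ at the absorbing states --- is exactly the standard machinery the authors implicitly invoke when they introduce $F$, $A^{(1)}_{2:n-1}$, $A^{(n)}_{2:n-1}$ and call $F$ the fundamental matrix. You are also right to flag the two blemishes in the statement: the second identity in (ii) should read $\P[\text{say } H_1|H_j]$, and the symbols $N$, $\bar{a}_1$, $\bar{a}_n$ in the statement are evidently $F$, $A^{(1)}_{2:n-1}$, $A^{(n)}_{2:n-1}$ from the preceding definitions.
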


\end{document}